\theoremstyle{plain}
\newcommand{\cleqn}{\setcounter{equation}{0}}
\newcommand{\clth}{\setcounter{theorem}{0}}
\newcommand {\sectionnew}[1]{\section{#1}\cleqn\clth}
\newtheorem{theorem}{Theorem}[section]
\newtheorem{lemma}[theorem]{Lemma}
\newtheorem{definition-theorem}[theorem]{Definition-Theorem}
\newtheorem{proposition}[theorem]{Proposition}
\newtheorem{corollary}[theorem]{Corollary}
\newtheorem{definition}[theorem]{Definition}
\newtheorem{example}[theorem]{Example}
\newtheorem{remark}[theorem]{Remark}
\newtheorem{notation}[theorem]{Notation}
\newtheorem{assumption}[theorem]{Assumption}
\newtheorem{lemma-definition}[theorem]{Lemma-Definition}
\newtheorem{lemma-notation}[theorem]{Lemma-Notation}
\newtheorem{question}[theorem]{Question}
\newtheorem{remark-definition}[theorem]{Remark-Definition}
\newcommand \bth[1] { \begin{theorem}\label{t#1} }
\newcommand \ble[1] { \begin{lemma}\label{l#1} }
\newcommand \bpr[1] { \begin{proposition}\label{p#1} }
\newcommand \bco[1] { \begin{corollary}\label{c#1} }
\newcommand \bde[1] { \begin{definition}\label{d#1}\rm }
\newcommand \bex[1] { \begin{example}\label{e#1}\rm }
\newcommand \bre[1] { \begin{remark}\label{r#1}\rm }
\newcommand \bnota[1] {\begin{notation}\label{n#1}\rm }
\newcommand \bas[1] { \begin{assumption}\label{a#1}\rm }
\newcommand \bqu[1] { \begin{question}\label{q#1}\rm }
\newcommand {\ele} { \end{lemma} }
\newcommand {\epr} { \end{proposition} }
\newcommand {\eco} { \end{corollary} }
\newcommand {\ede} { \end{definition} }
\newcommand {\eex} { \end{example} }
\newcommand {\ere} { \end{remark} }
\newcommand {\enota} { \end{notation} }
\newcommand {\eas} {\end{assumption}}
\newcommand {\equ} {\end{question}}
\newcommand \lb[1]{\label{#1}}
\def \g  {\mathfrak{g}}   
\def \n  {\mathfrak{n}}
\def \sl {\mathfrak{sl}}
\def \gl {\mathfrak{gl}}
\DeclareMathOperator \ad { {\mathrm{ad}} }
\newcommand{\beqa}{\begin{eqnarray*}}                     
\newcommand{\eeqa}{\end{eqnarray*}}
\def \sC {{\scriptscriptstyle C}}
\def \wF_mn {\wF_m \times \wF_n}
\def \wF_mnC {\wF_{m, n, \, \sC}}
\def \wF {\widetilde{F}}
\newcommand\blfootnote[1]{%
  \begingroup
  \renewcommand\thefootnote{}\footnote{#1}%
  \addtocounter{footnote}{-1}%
  \endgroup
}
\begin{document}

\setlength{\baselineskip}{1.2\baselineskip}
\title[Nijenhuis operators on 2D pre-Lie algebras and 3D associative algebras]
{Nijenhuis operators on 2D pre-Lie algebras and 3D associative algebras}

\author{Xiaoguang Zou}
\address{
Zhejiang college \\
Shanghai University of Finance and Economics\\
Zhejiang, Jinhua 321013 \\
China}
\email{zminger@126.com}

\author{Xiang Gao}
\address{
School of Mathematical Sciences    \\
Zhejiang Normal University\\
Jinhua 321004              \\
China}
\email{gaoxiang@zjnu.edu.cn}

\author{Chuangchuang Kang}
\address{
School of Mathematical Sciences    \\
Zhejiang Normal University\\
Jinhua 321004              \\
China}
\email{kangcc@zjnu.edu.cn}

\author{Jiafeng L\"u}
\address{
School of Mathematical Sciences    \\
Zhejiang Normal University\\
Jinhua 321004              \\
China}
\email{jiafenglv@zjnu.edu.cn}

\blfootnote{*Corresponding Author: Chuangchuang Kang. Email: kangcc@zjnu.edu.cn.}
\date{}
\begin{abstract}In this paper, we describe all Nijenhuis operators on 2-dimensional complex pre-Lie algebras and 3-dimensional complex associative algebras. As an application, using these operators, we obtain solutions of the classical Yang-Baxter equation on the corresponding sub-adjacent Lie algebras.
\end{abstract}

\subjclass[2020]{17A36, 
17B60, 
 16T25, 
 17B81}

\keywords{Nijenhuis operator, pre-Lie algebra, sub-adjacent Lie algebra, classical Yang-Baxter equation}

\maketitle
\tableofcontents
\allowdisplaybreaks
\sectionnew{Introduction}\lb{intro}
This paper aims to describe Nijenhuis operators on 2-dimensional complex pre-Lie algebras and 3-dimensional complex associative algebras, and to explore their applications in the solutions of the classical Yang-Baxter equation.
\subsection{Nijenhuis algebras and Nijenhuis torsion}
A \textbf{Nijenhuis algebra} is a triple $(A,\cdot, N)$ including an algebra $(A,\cdot)$ and a Nijenhuis operator $N$ on $A$, i.e., a linear map $N : A \rightarrow  A$ satisfies
\begin{equation}\label{eq:Ni ope1}
N(x) \cdot N(y)=N(N(x) \cdot  y+x \cdot N(y)-N(x \cdot  y)),\quad\forall~ x, y \in A.
\end{equation}
Nijenhuis torsion was first introduced by Nijenhuis in 1951 (\cite{Nijenhuis}).
Let $M$ be a smooth $n$-dimensional manifold, $N$ a smooth operator field on it, and
$v$, $u$ smooth vector fields. Then the Nijenhuis torsion $(\mathcal{T}_N)_{~jk}^{i}$ of $N$ is a tensor of type (1, 2) given by the formula
$$
\mathcal{T}_N(v,u)=[Nv,Nu]+N^2[v,u]-N[Nv,u]-N[v,Nu],
$$
or in local coordinates $\mathbf{x}=(x^1,...,x^n)$,
$$
(\mathcal{T}_N)_{~jk}^{i}=N_{~j}^l\frac{\partial N_{~k}^i}{\partial x^l}-N_{~k}^l\frac{\partial N_{~j}^i}{\partial x^l}-N_{~l}^i\frac{\partial N_{~k}^l}{\partial x^j}+N_{~l}^i\frac{\partial N_{~j}^l}{\partial x^k}.
$$
An operator $N$ is called a Nijenhuis operator if $\mathcal{T}_N = 0$. The vanishing of the Nijenhuis torsion is a necessary (but not sufficient) condition for the integrability of an operator field (\cite{Konyaev}). Moreover, Nijenhuis operators play a central role in various fields such as complex differential geometry and complex Lie algebra theory (\cite{Wells}), deformations of associative (resp. Lie) algebras (\cite{Gerstenhaber,Nijenhuis-2}), and pairs of Dirac structures (\cite{clem,Dorfman}). Many interesting generalizations have been made, including applications to pre-Lie algebras (\cite{Wang}), $n$-Lie algebras (\cite{Liu}), Hom-Lie algebras (\cite{Das}), and Courant algebroids with Nijenhuis structures (\cite{Kos}). The theory of Nijenhuis operators has recently been enriched by studies extending them to broader frameworks, including Banach homogeneous spaces (\cite{Golinski2}), Leibniz algebras(\cite{Mondal}), associative D-bialgebras(\cite{Ma}), and 3-Hom-Lie algebras(\cite{Li}), among others.

\subsection{Nijenhuis operators on pre-Lie algebras}

Pre-Lie algebras, also known as left-symmetric algebras, were first introduced by Cayley in the context of rooted tree algebras (\cite{Cayley}). They have been used in the study of homogeneous convex cones (\cite{Vinberg}), affine manifolds, and affine structures on Lie groups (\cite{Koszul,Medina}). In \cite{Winterhalder}, it was shown that an operator depending linearly on local coordinates is a Nijenhuis operator if and only if its coefficients are structure constants of a pre-Lie algebra. In \cite{Wang}, Wang, Sheng, Bai and Liu defined Nijenhuis operators on pre-Lie algebras, which generate a trivial deformation of pre-Lie algebras. The local behavior of these operators at singular points is important in the study of complex integrable systems (\cite{Bolsinov}). So far, only a few Nijenhuis operators on (simple) pre-Lie algebras are known explicitly, and a full classification of simple pre-Lie algebras has not yet been completed (\cite{Bai-1,Burde,Kim,Kong}). Constructing examples of Nijenhuis operators on low-dimensional pre-Lie algebras can help better understand Nijenhuis operators and related topics.

A Rota-Baxter operator of weight zero on the sub-adjacent Lie algebra can be obtained from a Nijenhuis operator on a pre-Lie algebra.  Let $A$ be a pre-Lie algebra (product of
$x$ and $y$ is denoted by $xy$)  over a field $\mathbb{C}$.
A linear map  $R: A \rightarrow A$~is called a Rota-Baxter operator of weight $\lambda\in  \mathbb{C}$ on $A$ if
\begin{equation}
R(x)R(y)  = R(R(x)y + xR(y)+ \lambda xy),~\forall~ x, y \in A.
\end{equation}
If a Nijenhuis operator $ N $ on a pre-Lie algebra satisfies $ N^2 = 0 $, 
then~$N$~is a Rota-Baxter operator of weight zero on  $A$ (see Proposition \ref{Ni ope and rb ope}).
Let~$\g$~be a Lie algebra. A linear map  $R:\g \rightarrow \g$~
is called an operator form of the {\bf classical Yang-Baxter equation} (CYBE) if,
\begin{equation}\label{eq:RB ope}
[R(x), R(y)] = R([R(x), y] + [x, R(y)]),~\forall~ x, y \in \mathfrak{g},
\end{equation}
which is also called a Rota-Baxter operator on ${\frak g}$. The CYBE originated from the
study of inverse scattering theory, which is recognized as the
``semi-classical
limit'' of the quantum Yang-Baxter equation (\cite{Belavin,Faddeev1,Faddeev2}). If $R$ is a Rota-Baxter operator of weight zero on a pre-Lie algebra $A$, then $R$ is
a Rota-Baxter operator of weight zero on the sub-adjacent Lie algebra ${\frak g}(A)$ (see Proposition \ref{rb on pre-Lie and Lie}).  

Based on the above relationship between Nijenhuis operators and Rota-Baxter operators, solutions of the CYBE on semidirect product sub-adjacent Lie algebras can be derived. Recall that $r=\sum\limits_i a_i\otimes b_i\in {\frak g}\otimes {\frak g}$ is a solution of the CYBE in $\frak g$ if  
$$
[r_{12},r_{13}] + [r_{12},r_{23}] + [r_{13},r_{23}] = 0 \quad \text{in } U({\frak g}),
$$  
where $U({\frak g})$ is the universal enveloping algebra of ${\frak g}$, and  
$$
r_{12} = \sum\limits_i a_i\otimes b_i\otimes 1, \quad r_{13} = \sum\limits_i a_i\otimes 1\otimes b_i, \quad r_{23} = \sum\limits_i 1\otimes a_i\otimes b_i.
$$  
Semonov-Tian-Shansky (\cite{Semonov-Tian-Shansky}) showed that if a nondegenerate symmetric invariant bilinear form exists on a Lie algebra $\frak g$, then any skew-symmetric solution $r$ of the CYBE satisfies the Rota-Baxter equation. Bai (\cite{Bai}) proved that a linear map $R: \frak g \rightarrow \frak g$ is a Rota-Baxter operator if and only if $r = R - R^{21}$ is a skew-symmetric solution of the CYBE in $\mathfrak{g} \ltimes_{{\rm ad}^{\ast}} \frak g^{\ast}$. In \cite{Pei}, all Rota-Baxter operators (of weight zero) on $\sl(2,\mathbb{C})$ were determined under the Cartan-Weyl basis, and corresponding solutions of the classical Yang-Baxter equation were found in the 6-dimensional Lie algebra $\sl(2,\mathbb{C})\ltimes_{\ad^*}\sl(2,\mathbb{C})^*$. As an application, this paper also presents the solution of CYBE derived from Nijenhuis operators on 2-dimensional complex pre-Lie algebras and 3-dimensional complex associative algebras.

Classifying algebras of small dimensions is a natural first step toward understanding higher-dimensional cases. In fact, 2-dimensional and 3-dimensional pre-Lie algebras have been completely classified, and some special cases in 4 dimensions have been classified (for example, 4-dimensional nilpotent pre-Lie algebras \cite{Adashev}). Therefore, ``2-dimensional'' and ``3-dimensional'' are the natural minimal dimensions for a nontrivial classification of this kind.  There are 5 classes of 2-dimensional commutative pre-Lie algebras and 6 classes of 2-dimensional non-commutative pre-Lie algebras (\cite{Burde}). The classification of 3-dimensional pre-Lie algebras falls into the following three categories: 3-dimensional simple pre-Lie algebras, pre-Lie algebras on the 3-dimensional Heisenberg Lie algebra, and 3-dimensional associative algebras (\cite{Bai-1}). However, their Nijenhuis operators remain unexplored. 

The novelty of this work lies in describing all Nijenhuis operators on 2-dimensional complex pre-Lie algebras and 3-dimensional complex associative algebras. In particular, we emphasize that this study provides a systematic derivation of induced solutions to the classical Yang-Baxter equation. The Nijenhuis operators presented are parametrically complete and list all possible Nijenhuis operators satisfying the defining equation. However, we do not perform a full orbit analysis under the automorphism group $\mathrm{Aut}(A)$. This paper aims to explicitly list Nijenhuis operators, enabling future conjugacy classification.

\subsection{Outline of the paper}

This paper is organized as follows. Section 2 reviews basic concepts and known results on pre-Lie algebras and Nijenhuis operators. Section 3 presents all Nijenhuis operators on 2-dimensional complex pre-Lie algebras (Theorems \ref{Ni ope on 2 c} and \ref{Ni ope on 2 non c}). Section 4 lists all Nijenhuis operators on 3-dimensional complex associative algebras (Theorem \ref{Ni ope on 3 c a } and \ref{Ni ope on 3 non c a }). In Section 5, we first outline the general construction pipeline that transforms Nijenhuis operators on pre-Lie algebras into Rota-Baxter operators on their sub-adjacent Lie algebras, and subsequently into solutions of the classical Yang-Baxter equation. We then provide a fully worked, step-by-step example for the 2-dimensional pre-Lie algebra \(B_1\) to illustrate this procedure concretely. Section 5.3 lists the sub-adjacent Lie algebras that arise from all classified 2-dimensional pre-Lie and 3-dimensional associative algebras. Section 5.4 presents the Rota-Baxter operators induced by the Nijenhuis operators on these Lie algebras. Finally, Section 5.5 gives the explicit solutions of the CYBE on the corresponding Lie algebras \({\frak g}(A) \ltimes_{{\rm ad}^{\ast}} {\frak g}(A)^{\ast}\).

All pre-Lie and associative algebras discussed in this paper are finite-dimensional over $\mathbb{C}$, and juxtaposition $kx$ represents the product of the scalar $k$ and the element $x$ in the vector space, while the dotted notation $x\cdot y$ indicates the product in a pre-Lie algebra. Denote the sub-adjacent Lie algebra ${\frak g}(B_1)$, ${\frak g}(B_2)$, ${\frak g}(B_3)$, ${\frak g}(B_4)$, and ${\frak g}(B_5)$ by ${\frak g_1}$.  Denote ${\frak g}(D_1)$, ${\frak g}(D_2)$ and ${\frak g}(D_3)$ by ${\frak g_2}$. Denote ${\frak g}(D_4)$, ${\frak g}(D_6)$, ${\frak g}(D_8)$, and ${\frak g}(D_9)$ by ${\frak g_3}$. Denote ${\frak g}(D_5)$, ${\frak g}(D_7)$ by ${\frak g_4}$.
\newpage

\section{Preliminaries}

In this section, we provide some preliminaries and classification results of 2-dimensional pre-Lie algebras  and 3-dimensional associative algebras in \cite{Burde, Bai-Meng} and \cite{Bai-1} .

\begin{definition} \rm{(\cite{Cayley})}
 Let $A$ be a vector space over
$\mathbb{C}$ with a bilinear product $\cdot:A\times A\rightarrow A$. Then $(A,\cdot)$ is called
a {\bf pre-Lie algebra} if for any $x,y,z\in A$,
\begin{equation}
(x\cdot y)\cdot z-x\cdot (y\cdot z)=(y\cdot x)\cdot z-y\cdot (x\cdot z).
\end{equation}
For a pre-Lie algebra $(A,\cdot)$, the commutator
\begin{equation}\label{commutator}
[x,y]=x\cdot y-y\cdot x,
\end{equation}
defines a Lie algebra ${\frak g}(A)$,~which is called the
{\bf sub-adjacent Lie algebra} of $(A,\cdot)$.
\end{definition}

\begin{proposition}\label{commutative pre-Lie algebra}\rm{(\cite{Burde, Bai-Meng})}
 Let $\left(A, \cdot\right)$ be a 2-dimensional commutative pre-Lie algebra and $\left\{e_{1}, e_{2}\right\}$ be a basis of $A$. Then $(A,\cdot)$ isomorphic to  one of the
following algebras:

$(A_1)$ $e_1\cdot e_1=e_1,~e_2\cdot e_2=e_2$;

$(A_2)$ $e_1\cdot e_1=e_1,~e_1.e_2=e_2\cdot e_1=e_2$;

$(A_3)$ $e_1\cdot e_1=e_1$;

$(A_4)$ $e_i\cdot e_j=0,~i,j=1,2$;

$(A_5)$ $e_1\cdot e_1=e_2$.
\end{proposition}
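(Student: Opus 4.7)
The plan is to first reduce the problem to the classification of $2$-dimensional commutative associative $\mathbb{C}$-algebras, and then carry out a case analysis stratified by $d := \dim(A\cdot A) \in \{0,1,2\}$. For the reduction, since $(A,\cdot)$ is commutative the two terms $(x\cdot y)\cdot z$ and $(y\cdot x)\cdot z$ in the pre-Lie identity cancel, leaving $x\cdot(y\cdot z) = y\cdot(x\cdot z)$; applying commutativity once more on each side rewrites this as $(y\cdot z)\cdot x = y\cdot(z\cdot x)$, which is associativity. So $(A,\cdot)$ is a commutative associative $\mathbb{C}$-algebra of dimension~$2$.

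For $d=0$, all products vanish and $(A,\cdot)\cong A_4$. For $d=1$, fix a generator $v$ of $A\cdot A$ and write $v^2 = \alpha v$. If $\alpha\ne 0$, then $e := \alpha^{-1}v$ is an idempotent; the operator $L_e$ is itself idempotent of rank one (its image lies in $\mathbb{C}e$), so its $0$-eigenspace is $1$-dimensional, and any generator $f$ of this eigenspace satisfies $ef = 0$ and $f^2\in\mathbb{C}e$, whence $(ef)f = e(f^2)$ forces $f^2 = 0$ and gives $A_3$. If $\alpha = 0$, pick $w\notin\mathbb{C}v$ and write $wv = \mu v$, $w^2 = \gamma v$; the associativity relation $(ww)v = w(wv)$ gives $0 = \mu^2 v$, so $\mu = 0$, while $\gamma\ne 0$ by the $d=1$ hypothesis; rescaling then yields $A_5$.

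The main case is $d=2$, i.e.\ $A\cdot A = A$. The annihilator $\{x\in A : xA = 0\}$ is zero, because any nonzero element of it together with a complementary vector would force $\dim(A\cdot A)\le 1$; so the regular representation $L : A\to\End(A)$ is injective, and $L(A)$ is a $2$-dimensional commutative subalgebra of $\End(A)\cong M_2(\mathbb{C})$. Since $L(A) + \mathbb{C}\cdot\Id$ is again a commutative subalgebra and $M_2(\mathbb{C})$ has no commutative subalgebra of dimension $\ge 3$, we must have $\Id\in L(A)$; equivalently, $A$ has a unit. Any unital $2$-dimensional commutative associative $\mathbb{C}$-algebra is isomorphic to $\mathbb{C}[x]/p(x)$ for some monic quadratic $p$, which over $\mathbb{C}$ factors either as $(x-\alpha)(x-\beta)$ with $\alpha\ne\beta$ (giving $A\cong\mathbb{C}\times\mathbb{C}\cong A_1$ by the Chinese remainder theorem) or as $(x-\alpha)^2$ (giving $A\cong\mathbb{C}[y]/(y^2)\cong A_2$).

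The five models are pairwise non-isomorphic via the invariants $\dim(A\cdot A)$, existence of a nonzero idempotent, and existence of two orthogonal idempotents. The main obstacle is the $d=2$ case, specifically producing a unit for $A$; the argument via maximal commutative subalgebras of $M_2(\mathbb{C})$ is the key step, made possible by the injectivity of $L$ forced by the $d=2$ hypothesis.
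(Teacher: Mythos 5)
Your proof is correct, but note that the paper does not actually prove this proposition: it is imported verbatim from Burde's classification (\cite{Burde}) with no argument given, so there is no in-paper proof to compare against. Your reduction is the right first move (commutativity kills the two left-associated terms in the pre-Lie identity, and one more application of commutativity turns $x\cdot(y\cdot z)=y\cdot(x\cdot z)$ into associativity), and the stratification by $\dim(A\cdot A)$ is clean. I checked the delicate points: in the $d=1$, $\alpha\neq 0$ case the identity $L_e^2=L_e$ does follow from associativity, the kernel of $L_e$ is transverse to $\mathbb{C}e$, and $(ef)f=e(f^2)$ does force $f^2=0$; in the $d=2$ case the annihilator really is trivial (a nonzero annihilator element $x$, extended to a basis $\{x,y\}$, leaves $A\cdot A\subseteq\mathbb{C}y^2$), the map $L$ satisfies $L_xL_y=L_{xy}$ so $L(A)$ is a commutative subalgebra, and the bound you invoke holds because any commuting subspace of $M_2(\mathbb{C})$ containing a non-scalar matrix lies in that matrix's centralizer, which is $2$-dimensional; hence $\Id\in L(A)$ and $A$ is unital, after which $\mathbb{C}[x]/(p)$ with $p$ split gives exactly $A_1$ or $A_2$. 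This route is more conceptual than the usual derivation (Burde, and Bai in dimension $3$, work directly with structure constants of the left-multiplication operators and a case analysis on the associated Lie algebra); what you gain is a short, basis-free argument whose only nontrivial input is the elementary fact about maximal commutative subalgebras of $M_2(\mathbb{C})$, and you also supply the non-isomorphism check via $\dim(A\cdot A)$ and idempotents, which the paper leaves implicit.
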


\begin{proposition}\label{NON commutative pre-Lie algebra}\rm{(\cite{Burde, Bai-Meng})}
Let $\left(B, \cdot\right)$ be a 2-dimensional non-commutative pre-Lie algebra and $\left\{e_{1}, e_{2}\right\}$ be a basis of $B$. Then $(B,\cdot)$ isomorphic to  one of the
following algebras:

$(B_1)$ $e_2\cdot e_1=-e_1,~e_2\cdot e_2=e_1-e_2$;

$(B_2)$ $e_2\cdot e_1=-e_1,~e_2\cdot e_2=-e_2$;

$(B_3)$ $e_2\cdot e_1=-e_1,~e_2\cdot e_2=ke_2,~k\neq-1, k\in \mathbb{C}$;

$(B_4)$ $e_1\cdot e_2=e_1,~e_2\cdot e_2=e_2$;

$(B_5)$ $e_1\cdot e_2=le_1,~e_2\cdot e_1=(l-1)e_1,~e_2\cdot e_2=e_1+le_2,~l\neq0, l\in \mathbb{C}$;

$(B_6)$ $e_1\cdot e_1=2e_1,~e_1\cdot e_2=e_2,~e_2\cdot e_2=e_1$.
\end{proposition}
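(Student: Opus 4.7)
The pre-Lie identity is equivalent to the statement that left multiplication $L: B \to \gl(B)$, $L_x(y) := x\cdot y$, is a Lie algebra homomorphism $\g(B) \to \gl(B)$; conversely, any such representation satisfying the consistency $L_x y - L_y x = [x,y]_{\g(B)}$ determines a compatible pre-Lie product on $B$. Since $B$ is non-commutative, $\g(B)$ is the unique 2-dimensional non-abelian Lie algebra, so one may pick a basis $\{e_1,e_2\}$ of $B$ with $[e_1,e_2] = e_1$; isomorphism classes of pre-Lie algebras with this Lie hull then correspond to orbits under $\Aut(\g(B)) \subset GL(B)$ acting on the set of consistent representations $L$.

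The plan is to write $L_{e_1}$ and $L_{e_2}$ as generic $2\times 2$ matrices (eight scalar unknowns), impose the compatibility $L_{e_1}e_2 - L_{e_2}e_1 = e_1$ to cut the unknowns down to six, and then enforce the Lie representation equation $[L_{e_1},L_{e_2}] = L_{e_1}$. This produces four quadratic equations, which I would solve by splitting into cases according to the Jordan type of $L_{e_2}$ and whether $L_{e_1}$ preserves its eigenspaces. In each branch I would use the residual freedom in $\Aut(\g(B))$, namely the 2-parameter affine group generated by $e_1 \mapsto \al e_1$ (with $\al \in \Cset^{\times}$) and $e_2 \mapsto e_2 + \be e_1$, to normalize the surviving coefficients to $0$ or $1$ and match each branch to exactly one of the six listed families.

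To confirm irredundancy I would compute elementary isomorphism invariants on each family, such as the subspace $\{x\in B : x\cdot x = 0\}$, the ranks and traces of $L_{e_1}$ and $L_{e_2}$, and the presence of a nonzero idempotent. The main obstacle is book-keeping rather than conceptual depth: the quadratic system has many branches, and the excluded parameter values $k=-1$ in $(B_3)$ and $l=0$ in $(B_5)$ sit at degenerations where the normal form either becomes isomorphic to one of the earlier $(B_i)$ after a further basis change or produces a commutative algebra (contradicting the hypothesis), so each boundary deserves an explicit verification before the list can be declared complete.
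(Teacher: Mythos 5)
The paper offers no proof of this proposition --- it is quoted from Burde's classification --- so there is no in-paper argument to measure you against; I can only judge the proposal on its own terms. Your framework is correct and is essentially how the classification is obtained in the literature: the pre-Lie identity is exactly the condition $L_{[x,y]}=[L_x,L_y]$ for the commutator bracket, non-commutativity forces $\g(B)$ to be the unique non-abelian $2$-dimensional Lie algebra with $[e_1,e_2]=e_1$, and isomorphism classes of such pre-Lie structures are orbits of the compatible representations $L$ under $\Aut(\g(B))=\{e_1\mapsto \alpha e_1,\ e_2\mapsto e_2+\beta e_1,\ \alpha\neq 0\}$. Your equation count (two linear compatibility conditions cutting eight unknowns to six, four quadratic conditions from $[L_{e_1},L_{e_2}]=L_{e_1}$) is also right.

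The gap is that for a classification theorem the case analysis \emph{is} the proof, and you have carried out none of it: no branch of the quadratic system is solved, no normalization is performed, and neither completeness (every solution lies in one of the six orbits) nor irredundancy (no two listed algebras, and no two parameter values within $(B_3)$ or $(B_5)$, are isomorphic) is established. Two concrete issues await you in the matching step. First, $(B_6)$ is presented in a basis where $[e_1,e_2]=e_2$, so its derived subalgebra is spanned by $e_2$ rather than $e_1$; it will not appear among your normal forms until you undo your normalization by interchanging the roles of the basis vectors. Second, $(B_2)$ is formally the excluded value $k=-1$ of $(B_3)$: when $L_{e_1}=0$ the representation condition is vacuous, compatibility forces $L_{e_2}e_1=-e_1$, and the algebras $(B_1)$, $(B_2)$, $(B_3)$ are then separated by the Jordan type of $L_{e_2}$ (a nontrivial Jordan block with eigenvalue $-1$, the scalar $-\Id$, and $\mathrm{diag}(-1,k)$ with $k\neq-1$, respectively); this trichotomy --- which is where the eigenvalue $-1$ and the exclusion $k\neq-1$ actually come from --- has to be derived, not just anticipated. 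Until those computations are written out, what you have is a sound plan rather than a proof.
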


\begin{proposition}\label{commutative 3 dim pre-Lie algebra}\rm{(\cite{Bai-1})}
Let $\left(C, \cdot\right)$ be a 3-dimensional commutative associative algebra and $\left\{e_{1}, e_{2}, e_{3}\right\}$ be a basis of $C$. Then $(C,\cdot)$ isomorphic to  one of the
following algebras:

$(C_1)$ $e_3\cdot e_3=e_1$;

$(C_2)$ $e_i\cdot e_j=0,~i,j=1,2,3$;

$(C_3)$ $e_2\cdot e_2=e_1,~e_3\cdot e_3=e_1$;

$(C_4)$ $e_2\cdot e_3=e_3\cdot e_2=e_1,~e_3\cdot e_3=e_2$;

$(C_5)$ $e_1\cdot e_1=e_1,~e_2\cdot e_2=e_2,~e_3\cdot e_3=e_3$;

$(C_6)$ $e_2\cdot e_2=e_2,~e_3\cdot e_3=e_3$;

$(C_7)$ $e_1\cdot e_3=e_3\cdot e_1=e_1,~e_2\cdot e_2=e_2,~e_3\cdot e_3=e_3$;

$(C_8)$ $e_3\cdot e_3=e_3$;

$(C_9)$ $e_1\cdot e_3=e_3\cdot e_1=e_1,~e_3\cdot e_3=e_3$;

$(C_{10})$ $e_1\cdot e_3=e_3\cdot e_1=e_1,~e_2\cdot e_3=e_3\cdot e_2=e_2,~e_3\cdot e_3=e_3$;

$(C_{11})$ $e_1\cdot e_1=e_2,~e_3\cdot e_3=e_3$;

$(C_{12})$ $e_1\cdot e_1=e_2,~e_1\cdot e_3=e_3\cdot e_1=e_1,~e_2\cdot e_3=e_3\cdot e_2=e_2,~e_3\cdot e_3=e_3$.
\end{proposition}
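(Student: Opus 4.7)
The plan is to classify all $3$-dimensional commutative associative $\mathbb{C}$-algebras $(C, \cdot)$ by combining the Wedderburn principal theorem with a Peirce decomposition. By Wedderburn, write $C = S \oplus J$ where $J = \mathrm{rad}(C)$ is a nilpotent ideal and $S$ is a semisimple subalgebra. Since $\mathbb{C}$ is algebraically closed and $S$ is commutative semisimple, $S \cong \mathbb{C}^k$ for some $0 \le k \le 3$, and I can pick a basis of pairwise orthogonal idempotents $f_1, \ldots, f_k$ for $S$. The classification is then organized by $k = \dim S$.

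For $k = 3$ one immediately recovers $(C_5)$. For $k = 0$, that is, $C$ itself nilpotent, I would stratify by $\dim C^2$: if $C^2 = 0$ we get $(C_2)$; if $\dim C^2 = 1$ with $C^3 = 0$ we get $(C_1)$ or $(C_3)$ depending on whether one or two basis vectors of a complement to $C^2$ square to a generator of $C^2$; and $C^3 \neq 0$ forces $C \cong \mathbb{C}[t]/(t^4)$, giving $(C_4)$. For $k = 2$ the radical is one-dimensional, say $J = \mathbb{C} n$ with $n^2 = 0$, and the Peirce relations $f_i \cdot (f_j n) = \delta_{ij} f_j n$ together with nilpotency of $n$ force $n$ either to be annihilated by both $f_i$ (the non-unital trivial extension $(C_6)$) or to be a $1$-eigenvector of exactly one $f_i$ (the unital algebra $(C_7)$, with unit $f_1 + f_2$).

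The main case is $k = 1$, where $S = \mathbb{C} f$ with $f^2 = f$ and left multiplication by $f$ is an idempotent endomorphism of the $2$-dimensional radical $J$, producing a decomposition $J = J_0 \oplus J_1$ into $0$- and $1$-eigenspaces. I would then check that on $J$ the internal multiplication is either trivial ($J^2 = 0$) or $J \cong \mathbb{C}[t]/(t^3)$ with $\dim J^2 = 1$. In the $J^2 = 0$ subcase the three possibilities $(\dim J_0, \dim J_1) = (2,0),\ (1,1),\ (0,2)$ yield $(C_8)$, $(C_9)$, and $(C_{10})$ respectively. In the $J^2 \neq 0$ subcase, a short associativity argument (using $f(x^2) = (fx)\,x$ and the nilpotency of $x$) shows that $J$ must lie entirely in one eigenspace, giving $(C_{11})$ when $fJ = 0$ and $(C_{12})$ when $f$ acts as the identity on $J$; in particular the mixed case is ruled out.

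The main obstacle is the $k = 1$ analysis, where both the Peirce decomposition of $J$ under $f\cdot$ and the internal algebra structure of $J$ must be normalized simultaneously; the subtle point is precisely showing that the mixed eigenvalue configuration $(1,1)$ is incompatible with $J^2 \neq 0$. Once the twelve normal forms are produced, verifying that they are pairwise non-isomorphic is routine via invariants such as $\dim C^2$, the number of primitive idempotents, unitality, and the dimension of the annihilator $\{x \in C : x \cdot C = 0\}$.
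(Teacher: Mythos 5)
Your proposal is correct, but it necessarily takes a different route from the paper, because the paper offers no proof at all: Proposition \ref{commutative 3 dim pre-Lie algebra} is simply quoted from Bai's classification of $3$-dimensional left-symmetric algebras, which is obtained there via bijective $1$-cocycles on $3$-dimensional Lie algebras (the commutative associative algebras being the ones whose sub-adjacent Lie algebra is abelian). Your argument is instead a self-contained structure-theoretic derivation: split off a semisimple part $S\cong\mathbb{C}^k$ spanned by orthogonal idempotents (Wedderburn--Malcev, applied after adjoining a unit if one wants to be careful in the non-unital setting), and stratify by $k=\dim S$. The case count checks out: $k=3$ gives $(C_5)$; $k=2$ gives $(C_6),(C_7)$; $k=1$ gives $(C_8)$--$(C_{12})$; $k=0$ gives $(C_1)$--$(C_4)$ — twelve classes in total, and your invariants (number of primitive idempotents, $\dim C^2$, unitality, annihilator dimension, rank of the induced form $C/C^2\times C/C^2\to C^2$ in the nilpotent case) do separate them. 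The one step you rightly single out as delicate — that for $k=1$ a nontrivial radical multiplication forces $J$ into a single Peirce eigenspace — is handled correctly: $J_1J_0=0$ and $J_iJ_i\subseteq J_i$ follow from $f(ab)=(fa)b=a(fb)$, so if both eigenspaces were $1$-dimensional each would square to zero and $J^2=0$, a contradiction. Two small imprecisions worth fixing in a written version: in the $k=0$, $C^3\neq 0$ case the algebra is $t\,\mathbb{C}[t]/(t^4)$ (the $3$-dimensional nilpotent maximal ideal), not the $4$-dimensional unital algebra $\mathbb{C}[t]/(t^4)$; and in the $\dim C^2=1$, $C^3=0$ case the dichotomy $(C_1)$ versus $(C_3)$ should be phrased as the rank ($1$ or $2$) of the symmetric $C^2$-valued form on $C/C^2$, which over $\mathbb{C}$ can be diagonalized to $\mathrm{diag}(1,0)$ or $\mathrm{diag}(1,1)$, rather than as a count of basis vectors with nonzero square (which is basis-dependent). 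What your approach buys is a short, verifiable proof of exactly the sub-case this paper needs, without importing the full left-symmetric classification machinery.
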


\begin{proposition}\label{non-commutative 3 dim pre-Lie algebra}\rm{(\cite{Bai-1})}
Let $\left(D, \cdot\right)$ be a 3-dimensional non-commutative associative algebra and $\left\{e_{1}, e_{2}, e_{3}\right\}$ be a basis of $D$. Then $(D,\cdot)$ isomorphic to  one of the
following algebras:

$(D_1)$ $e_1 \cdot e_2 = \frac{1}{2} e_3,~
e_2 \cdot e_1 = -\frac{1}{2} e_3$;

$(D_2)$ $e_2 \cdot e_1 = - e_3$;

$(D_3)$ $e_1 \cdot
e_1 =  e_3,~e_1 \cdot e_2 = e_3,~ e_2 \cdot e_2 =\lambda e_3,~\lambda \neq 0, \lambda\in \mathbb{C}$;

$(D_4)$ $e_3 \cdot e_2 = e_2,~e_3 \cdot e_3 =e_3$;

$(D_5)$ $e_2 \cdot e_3 =e_2,~e_3
\cdot e_3 =e_3$;

$(D_6)$ $e_1
\cdot e_1 =e_1,~e_3 \cdot e_2 =e_2,~e_3 \cdot e_3 =e_3$;

$(D_7)$ $e_1 \cdot e_1 =e_1,~e_2 \cdot e_3
=e_2,~e_3 \cdot e_3 =e_3$;

$(D_8)$ $e_1
\cdot e_3 =e_1,~e_3 \cdot e_1 =e_1,~e_3 \cdot e_2 =e_2,~e_3
\cdot e_3 =e_3$;

$(D_9)$ $e_1 \cdot e_1 =e_1,~e_1
\cdot e_2 =e_2 \cdot e_1 =e_2,~e_1 \cdot e_3 =e_3 \cdot e_1=e_3,~
e_3 \cdot e_2 =e_2,~e_3 \cdot
e_3 =e_3$;

$(D_{10})$ $e_3 \cdot e_1 =e_1,~e_3 \cdot e_2
=e_2,~e_3 \cdot e_3 =e_3$;

$(D_{11})$ $e_1 \cdot e_3 =e_1,~e_2 \cdot e_3
=e_2,~e_3 \cdot e_3 =e_3$;

$(D_{12})$ $e_3 \cdot e_1 =e_1,~e_2 \cdot e_3
=e_2,~e_3 \cdot e_3 =e_3$.
\end{proposition}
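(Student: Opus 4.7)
The plan is to produce an exhaustive list of non-isomorphic 3-dimensional non-commutative associative $\mathbb{C}$-algebras by combining the Wedderburn principal theorem with a Peirce decomposition analysis, and then to verify that the twelve isomorphism types $D_1,\ldots,D_{12}$ exactly fill this list.

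First, I would stratify the classification by the dimension of the Jacobson radical $J=\mathrm{rad}(D)$. Writing $D = S \oplus J$ as vector spaces, with $S$ a maximal semisimple subalgebra (enlarging $D$ to its unitization if necessary), over $\mathbb{C}$ we have $S \cong \mathbb{C}^r$, since any matrix summand $M_n(\mathbb{C})$ with $n \ge 2$ would force $\dim D \ge 4$. This gives four subcases $r=0,1,2,3$. The case $r=3$ is excluded because it makes $D$ commutative; the case $r=0$ produces nilpotent algebras (candidates $D_1, D_2, D_3$); the case $r=1$ gives $\dim J = 2$ (candidates such as $D_4, D_5, D_6, D_7, D_8, D_{11}, D_{12}$); and $r=2$ gives $\dim J = 1$ (the remaining candidates, including $D_9, D_{10}$).

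Next, in each subcase with $r \ge 1$, I would let $f_1,\ldots,f_r$ be the primitive idempotents of $S$ and invoke the two-sided Peirce decomposition $J = \bigoplus_{i,j} f_i J f_j$. Since each $f_i$ acts with eigenvalues in $\{0,1\}$ by left and right multiplication, classifying the action of $S$ on $J$ reduces to choosing the bigrading of $J$. The induced product $J \otimes J \to J$ is then constrained by associativity and idempotent compatibility, and the residual freedom is killed by change-of-basis within each Peirce component; this yields the normal forms labelled $D_4$--$D_{12}$. For the nilpotent case $r=0$, I would stratify by nilpotency class and by $\dim D^2$: if $D^2=0$ no non-commutative product arises, so $\dim D^2 \ge 1$; a case analysis according to whether $D$ is $2$-step or $3$-step nilpotent, together with the rank of the induced bilinear map $D/D^2 \times D/D^2 \to D^2$, delivers $D_1, D_2, D_3$.

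The main obstacle will be managing the large number of branches without loss: it is easy to let two normal forms collapse into the same isomorphism class, or conversely to overlook a parameter family (such as the $\lambda \ne 0$ family in $D_3$). To control this, I would compute discriminating invariants for every candidate produced, namely $\dim \mathrm{rad}(D)$, $\dim D^2$, $\dim D \cdot D^2$, the left and right annihilators, and the center, and check that these separate all twelve classes and that no continuous parameter can be absorbed by a remaining invertible change of basis. A secondary obstacle is treating non-unital algebras correctly throughout, since several of the $D_i$ (for example $D_1, D_2$) lack a unit, so the Wedderburn decomposition must be applied to the unitization and then transferred back to $D$.
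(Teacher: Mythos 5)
The paper does not prove this proposition at all: it is quoted verbatim from the classification in \cite{Bai-1} (itself resting on the classical enumeration of low-dimensional associative algebras), so there is no in-paper argument to compare your route against. Judged on its own terms, your strategy --- Wedderburn--Malcev splitting $D=S\oplus J$ over the perfect field $\mathbb{C}$, stratification by $r=\dim(D/J)$, Peirce decomposition of $J$ for $r\ge 1$, and a separate filtration analysis in the nilpotent case $r=0$ --- is a standard and workable way to obtain such a classification, and your list of discriminating invariants ($\dim J$, $\dim D^2$, annihilators, center) is the right tool for the non-isomorphism half.

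However, as written this is a plan rather than a proof, and the gap is not cosmetic. First, every substantive step is deferred: ``the residual freedom is killed by change-of-basis within each Peirce component; this yields $D_4$--$D_{12}$'' and ``a case analysis \dots delivers $D_1,D_2,D_3$'' are precisely the assertions that need to be established, including the delicate points of which values of the modulus $\lambda$ in $D_3$ give isomorphic algebras and why the twelve families are exhaustive and pairwise non-isomorphic (e.g.\ distinguishing the mutually opposite pairs $D_4/D_5$ and $D_{10}/D_{11}$ from $D_{12}$). Second, the one piece of concrete bookkeeping you do supply is wrong in three of nine non-nilpotent cases: for $D_6$ and $D_7$ the radical is the one-dimensional ideal $\langle e_2\rangle$ and $D/J\cong\mathbb{C}^2$, so they belong to the stratum $r=2$, not $r=1$; conversely for $D_{10}$ the radical is $\langle e_1,e_2\rangle$ and $r=1$, not $2$. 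These misassignments show the Peirce analysis has not actually been carried out, and until it is, the proposal does not establish the statement.
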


\begin{definition} \rm{\cite{Wang}}
Let $(A, \cdot)$ be a pre-Lie algebra. A linear map  $N: A \rightarrow A$~is called a {\bf Nijenhuis operator} on $(A, \cdot)$ if
\begin{equation}\label{eq:Ni ope1}
N(x) \cdot N(y)=N(N(x) \cdot  y+x \cdot N(y)-N(x \cdot  y)),\quad\forall~ x, y \in A.
\end{equation}
\end{definition}

\begin{proposition}\label{PRO: Structural constants}
Let $(A, \cdot)$  be an $n$-dimensional pre-Lie algebra and  $\left\{e_{1}, \cdots, e_{n}\right\}$  be a basis of  $A$. For all positive integers~$1\leq i, j, t\leq n$ and structural constants $C_{i j}^{t}\in \mathbb{C}$,  set
\begin{equation}\label{eq:pre-Lie-muilt}
  e_{i} \cdot e_{j}=\sum_{t=1}^{n} C_{i j}^{t} e_{t}.
\end{equation}
If  $N:A\rightarrow A$ is a linear map given by
\begin{equation}\label{eq:Nii-mulit}
  N\left(e_{i}\right)=\sum_{j=1}^{n} n_{i j} e_{j}, \quad ~n_{i j}\in\mathbb{C},
\end{equation}
then $N$ is a Nijenhuis operator  on $(A, \cdot)$ if and only if $n_{i j}$ satisfies the following equations:
\begin{equation}\label{structural constants}
  \sum_{k, l, m,t=1}^{n}\left(C_{k l}^{m} n_{i k} n_{j l}+C_{i j}^{t} n_{t l}n_{l m}-C_{k j}^{l} n_{i k} n_{l m}-C_{i l}^{k} n_{j l} n_{k m}\right)=0.
\end{equation}
\end{proposition}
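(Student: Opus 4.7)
The plan is to prove this by a direct expansion of the Nijenhuis condition \eqref{eq:Ni ope1} on basis pairs. Because the identity is multilinear in $x$ and $y$, it suffices, by bilinearity of the product and linearity of $N$, to verify it on every pair $(e_i, e_j)$. So my first step is to substitute $x = e_i$, $y = e_j$ into $N(x)\cdot N(y) = N\bigl(N(x)\cdot y + x\cdot N(y) - N(x\cdot y)\bigr)$ and to use \eqref{eq:pre-Lie-muilt}--\eqref{eq:Nii-mulit} to rewrite each of the four terms as a linear combination of the $e_m$'s.

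Concretely, I would compute
\begin{align*}
N(e_i)\cdot N(e_j) &= \sum_{k,l,m} n_{ik}\, n_{jl}\, C_{kl}^{m}\, e_m,\\
N\bigl(N(e_i)\cdot e_j\bigr) &= \sum_{k,l,m} n_{ik}\, C_{kj}^{l}\, n_{lm}\, e_m,\\
N\bigl(e_i\cdot N(e_j)\bigr) &= \sum_{k,l,m} n_{jl}\, C_{il}^{k}\, n_{km}\, e_m,\\
N\bigl(N(e_i\cdot e_j)\bigr) &= \sum_{t,l,m} C_{ij}^{t}\, n_{tl}\, n_{lm}\, e_m,
\end{align*}
reading off the coefficient of each basis vector $e_m$ in turn.

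Next I would equate the coefficient of $e_m$ on the two sides of the Nijenhuis identity and transpose every term to the left. Since the $e_m$ are linearly independent, the operator identity for a fixed pair $(i,j)$ is equivalent to the scalar equation
\[
\sum_{k,l}\bigl(C_{kl}^{m} n_{ik} n_{jl} - C_{kj}^{l} n_{ik} n_{lm} - C_{il}^{k} n_{jl} n_{km}\bigr) + \sum_{t,l} C_{ij}^{t} n_{tl} n_{lm} = 0
\]
holding for every $m$, which is exactly \eqref{structural constants}. Conversely, if \eqref{structural constants} holds for all $i,j,m$, then reassembling these scalar equalities into vector identities recovers \eqref{eq:Ni ope1} on every basis pair, hence on all of $A$ by bilinearity. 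There is no genuine obstacle here; the entire content is careful bookkeeping of indices and the only thing to be alert to is the fact that the same summation letters play different roles in the four terms, so I would use the index names as above to make the pattern match \eqref{structural constants} exactly.
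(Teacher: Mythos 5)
Your proposal is correct and follows essentially the same route as the paper's own proof: expand the four terms of the Nijenhuis identity on a basis pair $(e_i,e_j)$ using \eqref{eq:pre-Lie-muilt} and \eqref{eq:Nii-mulit}, then compare coefficients of the $e_m$. Your version is if anything slightly cleaner in that you make explicit both the reduction to basis pairs by bilinearity and the fact that \eqref{structural constants} should be read as one scalar equation for each fixed $(i,j,m)$.
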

\begin{proof}
 For all $e_{i}, e_{j} \in \left\{e_{1}, \cdots, e_{n}\right\}$, $1\leq i, j\leq n$, set
 \begin{equation*}
N\left(e_{i}\right)=\sum_{k=1}^{n} n_{i k} e_{k},~N\left(e_{j}\right)=\sum_{l=1}^{n} n_{j l} e_{l},~ e_{i} \cdot e_{j}=\sum_{t=1}^{n} C_{i j}^{t} e_{t},~~~~ n_{i k},n_{j l},C_{i j}^{t}\in \mathbb{C}.
 \end{equation*}
Then we have
 \begin{equation*}
  N(e_{i}) \cdot N(e_{j})=\sum_{k,l=1}^{n} n_{i k}n_{j l} e_{k}\cdot e_{l}=\sum_{k,l,m=1}^{n}C_{k l}^{m} n_{i k}n_{j l} e_{m},
 \end{equation*}
and
\begin{eqnarray*}
&&N(N(e_{i}) \cdot  e_{j}+e_{i} \cdot N(e_{j})-N(e_{i} \cdot  e_{j}))\\
&=&N(\sum_{k=1}^{n} n_{i k} e_{k}\cdot e_{j}+e_{i} \cdot \sum_{l=1}^{n} n_{j l} e_{l}-N(\sum_{t=1}^{n} C_{i j}^{t} e_{t}))\\
&=&N(\sum_{k,l=1}^{n}C_{k j}^{l} n_{i k} e_{l}+ \sum_{l,k=1}^{n}C_{i l}^{k} n_{j l} e_{k}-\sum_{t,l=1}^{n} C_{i j}^{t}  n_{t l}e_{l})\\
&=&\sum_{k,l,m,t=1}^{n}(C_{k j}^{l} n_{i k}n_{l m}e_{m}+C_{i l}^{k} n_{j l}n_{k m} e_{m}- C_{i j}^{t}  n_{t l}n_{l m}e_{m}).\\
\end{eqnarray*}

If $N$ is a Nijenhuis operator, then
\begin{eqnarray*}
   && N(e_{i}) \cdot N(e_{j})-N(N(e_{i}) \cdot  e_{j}+e_{i} \cdot N(e_{j})-N(e_{i} \cdot  e_{j})) \\
  &=& \sum_{k,l,m=1}^{n}C_{k l}^{m} n_{i k}n_{j l} e_{m}-\sum_{k,l,m,t=1}^{n}(C_{k j}^{l} n_{i k}n_{l m}e_{m}+C_{i l}^{k} n_{j l}n_{k m} e_{m}- C_{i j}^{t}  n_{t l}n_{l m}e_{m}) \\
  &= & \sum_{k,l,m,t=1}^{n}(C_{k l}^{m} n_{i k}n_{j l} e_{m}+C_{i j}^{t}  n_{t l}n_{l m}e_{m}-C_{k j}^{l} n_{i k}n_{l m}e_{m}-C_{i l}^{k} n_{j l}n_{k m} e_{m}) \\
  &= & 0.
\end{eqnarray*}
Therefore,  $N$ is a Nijenhuis operator  on $(A, \cdot)$ if and only if \eqref{structural constants} holds.
\end{proof}

\section{Nijenhuis operators on 2-dimensional pre-Lie algebras}

Based on the classification results of 2-dimensional commutative and non-commutative complex pre-Lie algebras in Propositions \ref{commutative pre-Lie algebra} and \ref{NON commutative pre-Lie algebra}, this section determines all Nijenhuis operators on $(A_i,\cdot)$ and $(B_i,\cdot)$, respectively.

\subsection{Nijenhuis operators on 2-dimensional commutative pre-Lie algebras} In this subsection, we determine all Nijenhuis operators on $(A_i,\cdot)$.

\textbf{Reading guide for Theorem \ref{Ni ope on 2 c}:} we use $N_{A_i}^j$ to denote the $j$-th Nijenhuis operator on the pre-Lie algebra $(A_i, \cdot)$,  where the algebras $(A_i, \cdot)$ are taken from the complete classification in \cite{Bai-Meng}. The parameters $n_{ij}$, where $i,j=1,2$, of the matrix representing $N$ are arbitrary complex numbers unless otherwise specified.

\begin{theorem}\label{Ni ope on 2 c}
Let $N:A_i\rightarrow A_i,1\leq i\leq 5$, be a linear map on the pre-Lie algebra $(A_i, \cdot)$ defined by \eqref{eq:Nii-mulit}. Then the Nijenhuis operators on 2-dimensional commutative pre-Lie algebras are as follows:

$(1)$~The Nijenhuis operators on the pre-Lie algebra~$(A_{1},\cdot)$~are:
\begin{itemize}
   \item[] $N_{A_{1}}^1(e_1)= n_{11}e_1,~~ N_{A_{1}}^1(e_2)= n_{21}e_1+(n_{11}+n_{21})e_2,~n_{21}\neq0.$
   \item[] $N_{A_{1}}^2(e_1)= n_{11}e_1,~~ N_{A_{1}}^2(e_2)= n_{22}e_2.$
   \item[]  $N_{A_{1}}^3(e_1)=n_{11}e_1+ n_{12}e_2 ,~~ N_{A_{1}}^3(e_2)= (n_{11}-n_{12})e_2,~n_{12}\neq0.$
   \end{itemize}

$(2)$~The Nijenhuis operators on the pre-Lie algebra $(A_{2},\cdot)$~are:
\begin{itemize}
   \item[] $N_{A_{2}}^1(e_1)= n_{11}e_1,~~ N_{A_{2}}^1(e_2)= n_{22}e_2.$
   \item[]  $N_{A_{2}}^2(e_1)=n_{11}e_1+ n_{12}e_2 ,~~ N_{A_{2}}^2(e_2)= n_{11}e_2,~n_{12}\neq0.$
   \end{itemize}

$(3)$~The Nijenhuis operators on the pre-Lie algebra~$(A_{3},\cdot)$~are:
\begin{itemize}
   \item[] $N_{A_{3}}^1(e_1)= n_{11}e_1,~~ N_{A_{3}}^1(e_2)= n_{22}e_2.$
   \item[]  $N_{A_{3}}^2(e_1)=n_{11}e_1+ n_{12}e_2 ,~~ N_{A_{3}}^2(e_2)= n_{11}e_2,~n_{12}\neq0.$
   \end{itemize}

$(4)$~The Nijenhuis operators on the pre-Lie algebra~$(A_{4},\cdot)$~are:
\begin{itemize}
   \item[] $N_{A_{4}}^1(e_1)= n_{11}e_1+n_{12}e_2,~~ N_{A_{4}}^1(e_2)= n_{21}e_1+n_{22}e_2.$
   \end{itemize}

$(5)$~The Nijenhuis operators on the pre-Lie algebra~$(A_{5},\cdot)$~are:
\begin{itemize}
   \item[]  $N_{A_{5}}^1(e_1)=n_{11}e_1+ n_{12}e_2 ,~~ N_{A_{5}}^1(e_2)= n_{11}e_2.$
   \end{itemize}
\end{theorem}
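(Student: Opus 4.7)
The plan is a direct case analysis driven by Proposition \ref{PRO: Structural constants}. For each algebra $(A_i,\cdot)$, $1\le i\le 5$, I would fix the basis $\{e_1,e_2\}$, write $N(e_1)=n_{11}e_1+n_{12}e_2$ and $N(e_2)=n_{21}e_1+n_{22}e_2$, and read off the nonzero structure constants $C_{ij}^{t}$ from Proposition \ref{commutative pre-Lie algebra}. Substituting into \eqref{structural constants}---or equivalently computing $N(e_i)\cdot N(e_j)$ and $N(N(e_i)\cdot e_j+e_i\cdot N(e_j)-N(e_i\cdot e_j))$ side by side and comparing coefficients of $e_1,e_2$---produces a system of at most six polynomial equations of degree two in the four scalars $n_{pq}$. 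Since each $A_i$ is commutative, the relations coming from the pairs $(i,j)$ and $(j,i)$ coincide, so only the three pairs $(1,1),(1,2),(2,2)$ need to be handled.

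The easier cases are dispatched quickly. For $(A_4,\cdot)$ all structure constants vanish, so \eqref{structural constants} is automatic and every linear map is a Nijenhuis operator, yielding $N_{A_4}^1$. For $(A_5,\cdot)$ only $C_{11}^{2}=1$ is nonzero, and the three pair-equations force $n_{21}=0$ and $n_{22}=n_{11}$, giving $N_{A_5}^1$. The algebras $(A_2,\cdot)$ and $(A_3,\cdot)$ reduce, after expansion, to the pair of relations $n_{12}n_{21}=0$ and $n_{21}(n_{11}-n_{22})=0$ together with one compatibility condition each; a case split on which of $n_{12},n_{21}$ vanishes then gives precisely the two families $N^1,N^2$ listed in the theorem.

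The decisive case is $(A_1,\cdot)$ with $e_1\cdot e_1=e_1$, $e_2\cdot e_2=e_2$. Here the Nijenhuis identity collapses to the three quadratic relations
\begin{equation*}
n_{12}n_{21}=0,\qquad n_{12}(n_{12}-n_{11}+n_{22})=0,\qquad n_{21}(n_{21}+n_{11}-n_{22})=0.
\end{equation*}
Splitting on whether $n_{12}$ or $n_{21}$ vanishes then yields three mutually exclusive families: the diagonal solutions ($n_{12}=n_{21}=0$) give $N_{A_1}^2$; solutions with $n_{21}\neq 0$ and $n_{22}=n_{11}+n_{21}$ give $N_{A_1}^1$; solutions with $n_{12}\neq 0$ and $n_{22}=n_{11}-n_{12}$ give $N_{A_1}^3$. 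The main obstacle is not any individual computation---each reduces to a short bilinear identity---but the bookkeeping required to ensure that the five separate case splits are complete and that degenerate specializations (for instance, $n_{21}=0$ inside a family parametrized by a supposedly nonzero $n_{21}$) are absorbed into one of the other listed forms rather than counted twice, so that the lists in (1)--(5) are simultaneously exhaustive and pairwise disjoint.
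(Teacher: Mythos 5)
Your proposal is correct and follows essentially the same route as the paper: write $N$ in the basis, impose the Nijenhuis identity on the basis pairs to get a system of quadratic equations in the $n_{pq}$ (for $A_1$ the paper obtains exactly your system $n_{12}n_{21}=0$, $n_{12}(n_{12}-n_{11}+n_{22})=0$, $n_{21}(n_{21}+n_{11}-n_{22})=0$), and split on the vanishing of $n_{12}$ and $n_{21}$. The only quibble is a slip in your summary of the $A_2$, $A_3$ cases, where the governing relations are actually $n_{21}=0$ and $n_{12}(n_{11}-n_{22})=0$ rather than $n_{21}(n_{11}-n_{22})=0$, but this does not affect the families you list, which agree with the theorem.
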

   \begin{proof}
Let $\left\{e_{1},  e_{2}\right\}$ be a basis of 2-dimensional  commutative pre-Lie algebras~$(A_i,\cdot)$.~Set
  \begin{equation*}\label{eq:An operator on a basis}
    N(e_{1})=n_{11}e_1+n_{12}e_2,~N(e_{2})=n_{21}e_1+n_{22}e_2.
  \end{equation*}
Since $N$ is a Nijenhuis operator,~we have
 \begin{eqnarray}
    N(e_{1}) \cdot N(e_{1})&=&N(N(e_{1}) \cdot  e_{1}+e_{1} \cdot N(e_{1})-N(e_{1} \cdot  e_{1})),\label{eq:base e1e1}\\
     N(e_{1}) \cdot N(e_{2})&=&N(N(e_{1}) \cdot  e_{2}+e_{1} \cdot N(e_{2})-N(e_{1} \cdot  e_{2})),\label{eq:base e1e2}\\
      N(e_{2}) \cdot N(e_{1})&=&N(N(e_{2}) \cdot  e_{1}+e_{2} \cdot N(e_{1})-N(e_{2} \cdot  e_{1})),\label{eq:base e2e1}\\
       N(e_{2}) \cdot N(e_{2})&=&N(N(e_{2}) \cdot  e_{2}+e_{2} \cdot N(e_{2})-N(e_{2} \cdot  e_{2})).\label{eq:base e2e2}
 \end{eqnarray}
Calculating~\eqref{eq:base e1e1}~for the pre-Lie algebra~$(A_{1},\cdot)$, we have:
\begin{equation*}
   N(e_{1}) \cdot N(e_{1})
   = {n_{11}}^2e_1+{n_{12}}^2e_2,\label{eq11}
\end{equation*}
and
\begin{equation*}
  N(N(e_{1}) \cdot  e_{1}+e_{1} \cdot N(e_{1})-N(e_{1} \cdot  e_{1}))
  = (n_{11}^2-n_{12}n_{21})e_1+(n_{11}n_{12}-n_{12}n_{22})e_2.\label{eq12}
\end{equation*}
Then
\begin{equation}\label{A11}
  n_{{11}}^2=n_{11}^2-n_{12}n_{21},~{n_{12}}^2=n_{11}n_{12}-n_{12}n_{22}.
\end{equation}
Similarly,~by (\ref{eq:base e1e2}) $\sim$ (\ref{eq:base e2e2}), we have:
\begin{eqnarray}
  n_{11}n_{21}=n_{11}n_{21}+n_{12}n_{21},~n_{12}n_{22}=n_{12}n_{21}+n_{12}n_{22}\label{A12}. \\
  {n_{21}}^2 = n_{21}n_{22}-n_{11}n_{21},~{n_{22}}^2={n_{22}}^2-n_{12}n_{21}\label{A13}.
\end{eqnarray}
Simplifying~\eqref{A11}$\sim$\eqref{A13}, we obtain the following three equations:
\begin{equation}\label{A14}
\left\{\begin{array}{l}
n_{12}n_{21}=0,\\{n_{12}}^2=n_{11}n_{12}-n_{12}n_{22},\\
{n_{21}}^2=n_{21}n_{22}-n_{11}n_{21}.
\end{array}\right.
\end{equation}

To solve the quadratic equations, we distinguish the two cases depending on whether
or not $n_{12} = 0$.

$\mathbf{Case~1}$: If $n_{12}=0$,~then~\eqref{A14} implies~$n_{21}(n_{21}-n_{22}+n_{11})=0$.~There are two subcases:~$n_{21}=0$ and $n_{21}\neq0$. If $n_{21}\neq0$,~we have~$n_{11}+n_{21}=n_{22}$.~Then we get ~$N_{A_{1}}^1$.
If $n_{21}=0$,~then we get $N_{A_{1}}^2$.

$\mathbf{Case~2}$: If $n_{12}\neq0$, we have $n_{21}=0,~n_{22}=n_{11}-n_{12}$.~Then we get $N_{A_{1}}^3$.
Therefore, we obtain all Nijenhuis operators on the pre-Lie algebra $(A_1,\cdot)$.

Similarly, we can obtain all Nijenhuis operators on other 2-dimensional  commutative pre-Lie algebras. This completes the proof.
\end{proof}

\subsection{Nijenhuis operators on 2-dimensional non-commutative pre-Lie algebras}
In this subsection, we determine all Nijenhuis operators on $(B_i,\cdot)$.

\textbf{Reading guide for Theorem \ref{Ni ope on 2 non c}:} we use $N_{B_{i}}^j$ to denote the $j$-th Nijenhuis operator on the pre-Lie algebra $(B_{i}, \cdot)$, where the algebras $(B_{i}, \cdot)$ are taken from the complete classification in \cite{Bai-Meng}. The parameters $n_{ij}$, where $i,j=1,2$, of the matrix representing $N$ are arbitrary complex numbers unless otherwise specified. 
The following results are obtained through direct computation.
\begin{theorem}\label{Ni ope on 2 non c}
Let $N:B_i\rightarrow B_i$, $1\leq i\leq 6$, be the linear map on the pre-Lie algebras $(B_{i},\cdot)$ defined by \eqref{eq:Nii-mulit}. Then the Nijenhuis operators on 2-dimensional non-commutative pre-Lie algebras are as follows:

$(1)$~The Nijenhuis operators on the pre-Lie algebra~$(B_{1},\cdot)$~are:
\begin{itemize}
   \item[]  $N_{B_{1}}^1(e_1)=n_{11}e_1,~~ N_{B_{1}}^1(e_2)= n_{21}e_1+ n_{11}e_2.$
   \end{itemize}

$(2)$~The Nijenhuis operators on the pre-Lie algebra~$(B_{2},\cdot)$~are:
\begin{itemize}
   \item[]  $N_{B_{2}}^1(e_1)=n_{11}e_1+ n_{12}e_2 ,~~ N_{B_{2}}^1(e_2)= n_{21}e_1+ n_{22}e_2.$
   \end{itemize}

$(3)$~The Nijenhuis operators on the pre-Lie algebra~$(B_{3},\cdot)$~are:
\begin{itemize}
   \item[]  $N_{B_{3}}^1(e_1)=n_{11}e_1 ,~~ N_{B_{3}}^1(e_2)= n_{21}e_1+ n_{11}e_2,~n_{21}\neq0.$
   \item[]  $N_{B_{3}}^2(e_1)=n_{11}e_1 ,~~ N_{B_{3}}^2(e_2)= n_{22}e_2.$
 \end{itemize}

$(4)$~The Nijenhuis operators on the pre-Lie algebra~$(B_{4},\cdot)$~are:
\begin{itemize}
   \item[]  $N_{B_{4}}^1(e_1)=n_{11}e_1+ n_{12}e_2 ,~~ N_{B_{4}}^1(e_2)= n_{21}e_1+ n_{22}e_2.$
   \end{itemize}

$(5)$~The Nijenhuis operators on the pre-Lie algebra~$(B_{5},\cdot)$~are:
\begin{itemize}
   \item[]  $N_{B_{5}}^1(e_1)=n_{11}e_1 ,~~ N_{B_{5}}^1(e_2)= n_{21}e_1+ n_{11}e_2.$
   \end{itemize}
   \begin{itemize}
   \item[]  $N_{B_{5}}^2(e_1)=n_{11}e_1 ,~~ N_{B_{5}}^2(e_2)= \frac{n_{11}-n_{22}}{l-1}e_1+ n_{22}e_2,~n_{11}\neq n_{22},~l\neq1.$
   \end{itemize}

$(6)$~The Nijenhuis operators on the pre-Lie algebra~$(B_{6},\cdot)$~are:
\begin{itemize}
   \item[]  $N_{B_{6}}^1(e_1)=n_{11}e_1 ,~~ N_{B_{6}}^1(e_2)= n_{21}e_1+ (n_{11}\pm n_{21}\sqrt {-1})e_2.$
   \end{itemize}
\begin{itemize}
   \item[]  $N_{B_{6}}^2(e_1)=n_{11}e_1+n_{12}e_2 ,~~ N_{B_{6}}^2(e_2)= -n_{12}e_1+n_{11}e_2,~n_{12}\neq0.$
   \end{itemize}
\end{theorem}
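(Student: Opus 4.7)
The plan is to mirror the strategy used in the proof of Theorem~\ref{Ni ope on 2 c}. For each of the six non-commutative pre-Lie algebras $(B_i, \cdot)$, I would write a generic linear map
\begin{equation*}
N(e_1) = n_{11}e_1 + n_{12}e_2, \qquad N(e_2) = n_{21}e_1 + n_{22}e_2,
\end{equation*}
and substitute into the four Nijenhuis identities corresponding to the pairs $(e_1,e_1)$, $(e_1,e_2)$, $(e_2,e_1)$, $(e_2,e_2)$ (equations (\ref{eq:base e1e1})--(\ref{eq:base e2e2})). Using the explicit structure constants $C_{ij}^{t}$ listed in Proposition~\ref{NON commutative pre-Lie algebra}, each identity expands into a vector equation in $\{e_1, e_2\}$, producing two scalar relations; altogether one obtains eight quadratic polynomial equations in $n_{11}, n_{12}, n_{21}, n_{22}$ (and in $k$ for $B_3$, $l$ for $B_5$). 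After eliminating redundancies, each case reduces to a small system that can be solved by elementary algebra.

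The core of the argument is a case distinction on which of the entries $n_{ij}$ vanish, exactly as in the proof of Theorem~\ref{Ni ope on 2 c}. For $B_1$, the identities will force $n_{12}=0$ and $n_{22}=n_{11}$, leaving $n_{11}$ and $n_{21}$ free, giving $N_{B_1}^1$. For $B_2$, the equations will collapse to $0 = 0$ and all four entries remain free, giving $N_{B_2}^1$. For $B_3$ (which carries the parameter $k \ne -1$), the system will split according to $n_{21} = 0$ or $n_{21} \ne 0$, producing the two families $N_{B_3}^1$ and $N_{B_3}^2$; the constraint $k \ne -1$ is what rules out extra branches. The algebra $B_4$ behaves like $B_2$, yielding a fully free operator.

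The main obstacle is the parameter-dependent case $B_5$, where two scenarios arise according to whether $l = 1$ or $l \ne 1$, and the non-diagonal entry $n_{21}$ is pinned down by the relation $(l-1)n_{21} = n_{11} - n_{22}$; care is needed to state the answer as written (with $n_{11} \ne n_{22}$ forcing $l \ne 1$), and to confirm that the statement as it appears in the theorem covers all valid branches. The subtlest case is $B_6$: here the equation $e_2 \cdot e_2 = e_1$ is what introduces the imaginary coefficients. Expanding the four Nijenhuis identities will produce relations of the form $n_{12}^{2} + n_{21}^{2} = 0$ (or its analogue), whence $n_{12} = \pm n_{21}\sqrt{-1}$, together with $n_{22} = n_{11} \pm n_{21}\sqrt{-1}$ along one branch and $n_{22} = n_{11}$, $n_{21} = -n_{12}$ along the other, producing $N_{B_6}^1$ and $N_{B_6}^2$ respectively. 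I expect this complex-root bifurcation to be the main computational hurdle; once the split is made correctly, the remaining checks are routine substitution.

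Finally, for each family I would verify by a direct substitution back into (\ref{eq:Ni ope1}) that the listed operator is indeed Nijenhuis, which guarantees that the case analysis is both complete and consistent with Proposition~\ref{PRO: Structural constants}.
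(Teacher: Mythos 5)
Your proposal is correct and takes essentially the same approach as the paper: the paper supplies no written argument for this theorem beyond the remark that the results follow by direct computation, i.e.\ by substituting the generic matrix of $N$ into the four basis instances of the Nijenhuis identity, using the structure constants of each $(B_i,\cdot)$, and solving the resulting quadratic systems by a case split on the vanishing of the $n_{ij}$, exactly as in the proof of \thref{Ni ope on 2 c}. Your predicted outcomes for each case (forcing $n_{12}=0$ and $n_{22}=n_{11}$ for $B_1$; every linear map for $B_2$ and $B_4$; the branch on $n_{21}$ for $B_3$; the $l$-dependent pinning of $n_{21}$ for $B_5$; and the complex bifurcation for $B_6$, where the precise relation is $(n_{22}-n_{11})^2+n_{21}^2=0$ on the first branch) are consistent with the listed operators.
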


\begin{remark}
Theorem \ref{Ni ope on 2 c} and Theorem \ref{Ni ope on 2 non c} encompass all possible Nijenhuis operators on 2-dimensional pre-Lie algebras, and the above results are summarized in Table \ref{tab:prelie_algebras_nijenhuis} of the Appendix. 
\end{remark}

\section{Nijenhuis operators on 3-dimensional associative algebras}

Associative algebras are a special type of pre-Lie algebras. In this section, we use the classification results of commutative and non-commutative associative algebras from Proposition \ref{commutative 3 dim pre-Lie algebra} and \ref{non-commutative 3 dim pre-Lie algebra} to find all Nijenhuis operators on $(C_{i},\cdot)$ and $(D_{i},\cdot)$, respectively.

\subsection{Nijenhuis operators on 3-dimensional commutative associative algebras}
In this subsection, we determine all Nijenhuis operators on $(C_i,\cdot)$.

\textbf{Reading guide for Theorem \ref{Ni ope on 3 c a }:} we use $N_{C_{i}}^j$ to denote the $j$-th Nijenhuis operator on the commutative associative algebra $(C_{i}, \cdot)$, where the algebra $(C_{i}, \cdot)$ are taken from the complete classification in \cite{Bai-1}. The parameters $n_{ij}$, where $i,j=1,2,3$, of the matrix representing $N$ are arbitrary complex numbers unless otherwise specified. 

\begin{theorem}\label{Ni ope on 3 c a }
Let $N:C_i\rightarrow C_i$, $1\leq i \leq 12$, be a linear map on commutative associative algebras $(C_{i},\cdot)$ defined by \eqref{eq:Nii-mulit}. Then the Nijenhuis operators on 3-dimensional commutative associative algebras are as follows:

$(1)$~The Nijenhuis operators on the associative algebra~$(C_{1},\cdot)$~are:
\begin{itemize}
   \item[]  $N_{C_{1}}^1(e_1)=n_{11}e_1 ,~~ N_{C_{1}}^1(e_2)= n_{21}e_1+ n_{22}e_2,~~ N_{C_{1}}^1(e_3)= n_{31}e_1+ n_{32}e_2+ n_{11}e_3.$
 \item[]  $N_{C_{1}}^2(e_1)=n_{11}e_1+ n_{12}e_2 ,~~ N_{C_{1}}^2(e_2)=-\frac{(n_{11}-n_{33})^2}{n_{12}}e_1 + (2n_{33}-n_{11}) e_2,~~ N_{C_{1}}^2(e_3)= n_{31}e_1+ n_{32}e_2+ n_{33}e_3,~n_{12}\neq0.$
   \end{itemize}

$(2)$~The Nijenhuis operators on the associative algebra~$(C_{2},\cdot)$~are:
\begin{itemize}
   \item[]  $N_{C_{2}}^1(e_1)=n_{11}e_1+ n_{12}e_2+ n_{13}e_3 ,~~ N_{C_{2}}^1(e_2)= n_{21}e_1+ n_{22}e_2+ n_{23}e_3 ,~~ N_{C_{2}}^1(e_3)= n_{31}e_1+ n_{32}e_2+ n_{33}e_3.$
   \end{itemize}

$(3)$~The Nijenhuis operators on the associative algebra~$(C_{3},\cdot)$~are:
\begin{itemize}
    \item[]  $N_{C_{3}}^1(e_1)=n_{11}e_1 ,
   ~~ N_{C_{3}}^1(e_2)= n_{21}e_1+ n_{11}e_2,~~
   N_{C_{3}}^1(e_3)= n_{31}e_1+ n_{11}e_3.$
   \item[]  $N_{C_{3}}^2(e_1)=n_{11}e_1,
   ~~ N_{C_{3}}^2(e_2)= n_{21}e_1+ n_{11}e_2,
  N_{C_{3}}^2(e_3)= n_{31}e_1+ n_{32}e_2+ (n_{11}\pm n_{32}\sqrt{-1})e_3,~~n_{32}\neq0.$
   \item[]  $N_{C_{3}}^3(e_1)=n_{11}e_1,
   ~~ N_{C_{3}}^3(e_2)= n_{21}e_1+ (n_{11}\pm n_{23}\sqrt{-1})e_2+ n_{23}e_3,~
 N_{C_{3}}^3(e_3)= n_{31}e_1+ n_{11}e_3,~n_{23}\neq0.$
     \item[]  $N_{C_{3}}^4(e_1)=n_{11}e_1 ,
   ~~ N_{C_{3}}^4(e_2)= n_{21}e_1+ (n_{11}-n_{23}\sqrt{-1})e_2+n_{23}e_3,~~
   N_{C_{3}}^4(e_3)= n_{31}e_1+ n_{32}e_2+ (n_{11}+ n_{32}\sqrt{-1})e_3,~n_{23},~n_{32}\neq0.$
    \item[]  $N_{C_{3}}^5(e_1)=n_{11}e_1 ,
   ~~ N_{C_{3}}^5(e_2)= n_{21}e_1+ (n_{11}+n_{23}\sqrt{-1})e_2+n_{23}e_3,~~
   N_{C_{3}}^5(e_3)= n_{31}e_1+ n_{32}e_2+ (n_{11}- n_{32}\sqrt{-1})e_3,~n_{23},~n_{32}\neq0.$
   \end{itemize}

$(4)$~The Nijenhuis operators on the associative algebra~$(C_{4},\cdot)$~are:
\begin{itemize}
    \item[]  $N_{C_{4}}^1(e_1)=n_{11}e_1 ,
   ~~ N_{C_{4}}^1(e_2)= n_{21}e_1+ n_{11}e_2,~~
   N_{C_{4}}^1(e_3)= n_{31}e_1+ n_{32}e_2+ n_{11}e_3.$
   \end{itemize}

$(5)$~The Nijenhuis operators on the associative algebra~$(C_{5},\cdot)$~are:
\begin{itemize}
    \item[]  $N_{C_{5}}^1(e_1)=n_{11}e_1 ,
   ~~ N_{C_{5}}^1(e_2)=  n_{22}e_2,~~
   N_{C_{5}}^1(e_3)=  n_{33}e_3.$
   \item[]  $N_{C_{5}}^2(e_1)=n_{11}e_1 ,
   ~~ N_{C_{5}}^2(e_2)=  n_{22}e_2,~~
   N_{C_{5}}^2(e_3)=n_{31}e_1+  (n_{11}+n_{31})e_3,~n_{31}\neq0.$
    \item[]  $N_{C_{5}}^3(e_1)=n_{11}e_1 ,
   ~~ N_{C_{5}}^3(e_2)=  n_{22}e_2,~~
   N_{C_{5}}^3(e_3)=n_{32}e_2+  (n_{22}+n_{32})e_3,~n_{32}\neq0.$
    \item[]  $N_{C_{5}}^4(e_1)=n_{11}e_1 ,
   ~~ N_{C_{5}}^4(e_2)=  n_{21}e_1+  (n_{11}+n_{21})e_2,~~
   N_{C_{5}}^4(e_3)=n_{33}e_3,~n_{21}\neq0.$
    \item[]  $N_{C_{5}}^5(e_1)=n_{11}e_1 ,
   ~~ N_{C_{5}}^5(e_2)=(n_{23}+n_{33})e_2+  n_{23}e_3,~~
   N_{C_{5}}^5(e_3)=n_{33}e_3,~n_{23}\neq0.$
   \item[]  $N_{C_{5}}^6(e_1)=n_{11}e_1 ,
   ~~ N_{C_{5}}^6(e_2)=  n_{21}e_1+  (n_{11}+n_{21})e_2,~~
   N_{C_{5}}^6(e_3)=n_{32}e_1+n_{32}e_2+(n_{11}+n_{21}+n_{32})e_3,
   ~n_{21},~n_{32}\neq0.$
    \item[]  $N_{C_{5}}^7(e_1)=(n_{22}+n_{32}-n_{31})e_1 ,
   ~~ N_{C_{5}}^7(e_2)= n_{22}e_2,~~
   N_{C_{5}}^7(e_3)=n_{31}e_1+n_{32}e_2+(n_{22}+n_{32})e_3,
   ~n_{31},~n_{32}\neq0.$
   \item[]  $N_{C_{5}}^8(e_1)=n_{11}e_1+n_{13}e_3 ,
   ~~ N_{C_{5}}^8(e_2)=  n_{22}e_2,~~
   N_{C_{5}}^8(e_3)= (n_{11}+n_{13})e_3,~n_{13}\neq0.$
     \item[]  $N_{C_{5}}^9(e_1)=n_{11}e_1+n_{12}e_2 ,
   ~~ N_{C_{5}}^9(e_2)= (n_{11}-n_{12})e_2,~~
   N_{C_{5}}^9(e_3)= n_{33}e_3,~n_{12}\neq0.$
      \item[]  $N_{C_{5}}^{10}(e_1)=n_{11}e_1+n_{13}e_3 ,
   ~~ N_{C_{5}}^{10}(e_2)=  n_{23}e_1+  (n_{11}+n_{23})e_2+n_{23}e_3,~~
   N_{C_{5}}^{10}(e_3)=(n_{11}-n_{13})e_3,
   ~n_{13},~n_{23}\neq0.$
      \item[]  $N_{C_{5}}^{11}(e_1)=n_{11}e_1+n_{12}e_2 ,
   ~~ N_{C_{5}}^{11}(e_2)=   (n_{11}-n_{12})e_2,~~
   N_{C_{5}}^{11}(e_3)=n_{31}e_1+n_{31}e_2+(n_{31}+n_{11})e_3,
   ~n_{12},~n_{31}\neq0.$
   \item[]  $N_{C_{5}}^{12}(e_1)=(n_{23}+n_{33}-n_{21})e_1 ,
   ~~ N_{C_{5}}^{12}(e_2)= n_{21}e_1+ (n_{23}+n_{33})e_2+n_{23}e_3,~~
   N_{C_{5}}^{12}(e_3)=n_{33}e_3,
   ~n_{21},~n_{23}\neq0.$
   \item[]  $N_{C_{5}}^{13}(e_1)=(n_{22}-n_{23}-n_{31})e_1 ,
   ~~ N_{C_{5}}^{13}(e_2)= n_{23}e_1+n_{22} e_2+n_{23}e_3,~~
   N_{C_{5}}^{13}(e_3)=n_{31}e_1+(n_{22}-n_{23})e_3,
   ~n_{31},~n_{23}\neq0.$
      \item[]  $N_{C_{5}}^{14}(e_1)=n_{11}e_1+ n_{12}e_2+ n_{13}e_3 ,
   ~~ N_{C_{5}}^{14}(e_2)= (n_{11}-n_{12})e_2,~~
   N_{C_{5}}^{14}(e_3)=(n_{11}-n_{13})e_3,
   ~n_{12},~n_{13}\neq0.$
   \item[]  $N_{C_{5}}^{15}(e_1)=n_{11}e_1+ n_{12}e_2+ n_{12}e_3 ,
   ~~ N_{C_{5}}^{15}(e_2)= (n_{11}-n_{12}-n_{32}) e_2,~~
   N_{C_{5}}^{15}(e_3)=n_{32}e_2+(n_{11}-n_{12})e_3,
   ~n_{12},~n_{13},~n_{32}\neq0.$
     \item[]  $N_{C_{5}}^{16}(e_1)=n_{11}e_1+ n_{12}e_2+ n_{12}e_3 ,
   ~~ N_{C_{5}}^{16}(e_2)= (n_{11}-n_{12}) e_2+ n_{23}e_3,~~
   N_{C_{5}}^{16}(e_3)=(n_{11}-n_{12}-n_{23})e_3,
   ~n_{12},~n_{13},~n_{23}\neq0.$
   \end{itemize}

$(6)$~The Nijenhuis operators on the associative algebra~$(C_{6},\cdot)$~are:
\begin{itemize}
    \item[]  $N_{C_{6}}^1(e_1)=n_{11}e_1 ,
   ~~ N_{C_{6}}^1(e_2)=  n_{22}e_2,~~
   N_{C_{6}}^1(e_3)=  n_{33}e_3.$
   \item[]  $N_{C_{6}}^2(e_1)=n_{11}e_1 ,
   ~~ N_{C_{6}}^2(e_2)=  n_{22}e_2,~~
   N_{C_{6}}^2(e_3)=n_{31}e_1+  n_{11}e_3,~n_{31}\neq0.$
    \item[]  $N_{C_{6}}^3(e_1)=n_{11}e_1 ,
   ~~ N_{C_{6}}^3(e_2)=n_{21}e_1+  n_{11}e_2,~~
   N_{C_{6}}^3(e_3)=n_{33}e_3,~n_{21}\neq0.$
    \item[]  $N_{C_{6}}^4(e_1)=n_{11}e_1 ,
   ~~ N_{C_{6}}^4(e_2)=  n_{21}e_1+  n_{11}e_2,~~
   N_{C_{6}}^4(e_3)=n_{31}e_1+n_{11}e_3,~n_{21},~n_{31}\neq0.$
    \item[]  $N_{C_{6}}^5(e_1)=n_{11}e_1 ,
   ~~ N_{C_{6}}^5(e_2)=n_{22}e_2,~~
   N_{C_{6}}^5(e_3)=n_{32}e_2+(n_{22}+n_{32})e_3,~n_{32}\neq0.$
   \item[]  $N_{C_{6}}^6(e_1)=n_{11}e_1 ,
   ~~ N_{C_{6}}^6(e_2)=   (n_{23}+n_{33})e_2+n_{23}e_3,~~
   N_{C_{6}}^6(e_3)=n_{33}e_3,
   ~n_{23}\neq0.$
    \item[]  $N_{C_{6}}^7(e_1)=(n_{22}+n_{32})e_1 ,
   ~~ N_{C_{6}}^7(e_2)=   n_{22}e_2,~~
   N_{C_{6}}^7(e_3)=n_{31}e_1+n_{32}e_2+(n_{22}+n_{32})e_3,
   ~n_{31},~n_{32}\neq0.$
    \item[]  $N_{C_{6}}^8(e_1)=(n_{23}+n_{33})e_1 ,
    N_{C_{6}}^8(e_2)= n_{21}e_1+  (n_{23}+n_{33})e_2+n_{23}e_3,
   N_{C_{6}}^8(e_3)=n_{33}e_3,
   n_{21},n_{23}\neq0.$
   \end{itemize}

$(7)$~The Nijenhuis operators on the associative algebra~$(C_{7},\cdot)$~are:
\begin{itemize}
   \item[]  $N_{C_{7}}^1(e_1)=n_{11}e_1 ,
   ~~ N_{C_{7}}^1(e_2)=  n_{22}e_2,~~
   N_{C_{7}}^1(e_3)=  n_{33}e_3.$
   \item[]  $N_{C_{7}}^2(e_1)=n_{11}e_1 ,
   ~~ N_{C_{7}}^2(e_2)=  n_{22}e_2,~~
   N_{C_{7}}^2(e_3)=n_{31}e_1+  n_{11}e_3,~n_{31}\neq0.$
   \item[]  $N_{C_{7}}^3(e_1)=n_{11}e_1 ,
   ~~ N_{C_{7}}^3(e_2)=n_{21}e_1+  n_{11}e_2,~~
   N_{C_{7}}^3(e_3)=n_{31}e_1+n_{11}e_3,~n_{21}\neq0.$
    \item[]  $N_{C_{7}}^4(e_1)=(n_{22}+n_{32})e_1 ,
   ~~ N_{C_{7}}^4(e_2)=  n_{22}e_2,~~
   N_{C_{7}}^4(e_3)=n_{31}e_1+n_{32}e_2+(n_{22}+n_{32})e_3,~n_{31},~n_{32}\neq0.$
    \item[]  $N_{C_{7}}^5(e_1)=n_{11}e_1 ,
   ~~ N_{C_{7}}^5(e_2)=n_{22}e_2,~~
   N_{C_{7}}^5(e_3)=n_{32}e_2+(n_{22}+n_{32})e_3,~n_{32}\neq0.$
   \item[]  $N_{C_{7}}^6(e_1)=n_{11}e_1 ,
   ~~ N_{C_{7}}^6(e_2)=   (n_{23}+n_{33})e_2+n_{23}e_3,~~
   N_{C_{7}}^6(e_3)=n_{33}e_3,
   ~n_{23}\neq0.$
    \item[]  $N_{C_{7}}^7(e_1)=n_{22}e_1 ,
   ~~ N_{C_{7}}^7(e_2)= n_{21}e_1+  n_{22}e_2,~~
   N_{C_{7}}^7(e_3)=-n_{21}e_1+n_{32}e_2+(n_{22}+n_{32})e_3,
   ~n_{21},~n_{32}\neq0.$
    \item[]  $N_{C_{7}}^8(e_1)=n_{11}e_1 ,
   ~~ N_{C_{7}}^8(e_2)= n_{21}e_1+  (n_{23}+n_{11})e_2+n_{23}e_3,~~
   N_{C_{7}}^8(e_3)=-n_{21}e_1+n_{11}e_3,
   ~n_{21},~n_{23}\neq0.$
   \end{itemize}

$(8)$~The Nijenhuis operators on the associative algebra~$(C_{8},\cdot)$~are:
\begin{itemize}
  \item[]  $N_{C_{8}}^1(e_1)=n_{11}e_1 +n_{12}e_2,
   ~~ N_{C_{8}}^1(e_2)= n_{21}e_1 + n_{22}e_2,~~
   N_{C_{8}}^1(e_3)=  n_{33}e_3.$
   \item[]  $N_{C_{8}}^2(e_1)=n_{11}e_1 +n_{12}e_2 ,
   ~~ N_{C_{8}}^2(e_2)=  n_{22}e_2,~~
   N_{C_{8}}^2(e_3)=n_{32}e_2+  n_{22}e_3,~n_{32}\neq0.$
   \item[]  $N_{C_{8}}^3(e_1)= (n_{33}-\frac{n_{21}n_{32}}{n_{31}}) e_1 + \frac{n_{32}n_{33}-n_{22}n_{32}}{n_{31}}e_2,
   ~~ N_{C_{8}}^3(e_2)=n_{21}e_1+  n_{22}e_2,~~
   N_{C_{8}}^3(e_3)=n_{31}e_1+  n_{32}e_2+n_{33}e_3,~n_{31}\neq0.$
   \end{itemize}

$(9)$~The Nijenhuis operators on the associative algebra~$(C_{9},\cdot)$~are:
\begin{itemize}
   \item[]  $N_{C_{9}}^1(e_1)=n_{11}e_1 ,
   ~~ N_{C_{9}}^1(e_2)=  n_{22}e_2,~~
   N_{C_{9}}^1(e_3)=  n_{33}e_3.$
   \item[]  $N_{C_{9}}^2(e_1)=n_{11}e_1 ,
   ~~ N_{C_{9}}^2(e_2)=  n_{22}e_2,~~
   N_{C_{9}}^2(e_3)=n_{32}e_2+  n_{22}e_3,~n_{32}\neq0.$
    \item[]  $N_{C_{9}}^3(e_1)=n_{11}e_1 ,
   ~~ N_{C_{9}}^3(e_2)=  n_{22}e_2,~~
   N_{C_{9}}^3(e_3)=n_{31}e_1+  n_{11}e_3,~n_{31}\neq0.$
    \item[]  $N_{C_{9}}^4(e_1)=n_{11}e_1 ,
   ~~ N_{C_{9}}^4(e_2)= n_{11}e_2,~~
   N_{C_{9}}^4(e_3)=n_{31}e_1+ n_{32}e_2+n_{11}e_3,~n_{31},~n_{32}\neq0.$
    \item[]  $N_{C_{9}}^5(e_1)=n_{11}e_1 ,
   ~~ N_{C_{9}}^5(e_2)=n_{21}e_1+n_{22}e_2,~~
   N_{C_{9}}^5(e_3)=n_{31}e_1+n_{11}e_3,~n_{21}\neq0.$
   \item[]  $N_{C_{9}}^6(e_1)=n_{11}e_1 +n_{12}e_2,
   ~~ N_{C_{9}}^6(e_2)= n_{22}e_2,~~
   N_{C_{9}}^6(e_3)=n_{32}e_2+n_{22}e_3,
   ~n_{12}\neq0.$
   \end{itemize}

$(10)$~The Nijenhuis operators on the associative algebra~$(C_{10},\cdot)$~are:
\begin{itemize}
  \item[]  $N_{C_{10}}^1(e_1)=n_{11}e_1 +n_{12}e_2,
   ~~ N_{C_{10}}^1(e_2)= n_{21}e_1 + n_{22}e_2,~~
   N_{C_{10}}^1(e_3)=  n_{33}e_3.$
   \item[]  $N_{C_{10}}^2(e_1)=n_{11}e_1 +n_{12}e_2 ,
   ~~ N_{C_{10}}^2(e_2)=  n_{22}e_2,~~
   N_{C_{10}}^2(e_3)=n_{32}e_2+  n_{22}e_3,~n_{32}\neq0.$
   \item[]  $N_{C_{10}}^3(e_1)= (n_{33}-\frac{n_{21}n_{32}}{n_{31}}) e_1 + \frac{n_{32}n_{33}-n_{22}n_{32}}{n_{31}}e_2,
   ~~ N_{C_{10}}^3(e_2)=n_{21}e_1+  n_{22}e_2,~~
   N_{C_{10}}^3(e_3)=n_{31}e_1+  n_{32}e_2+n_{33}e_3,~n_{31}\neq0.$
   \end{itemize}

$(11)$~The Nijenhuis operators on the associative algebra~$(C_{11},\cdot)$~are:
\begin{itemize}
   \item[]  $N_{C_{11}}^1(e_1)=n_{11}e_1+n_{12}e_2 ,
   ~~ N_{C_{11}}^1(e_2)=  n_{11}e_2,~~
   N_{C_{11}}^1(e_3)=  n_{33}e_3.$
   \item[]  $N_{C_{11}}^2(e_1)=n_{11}e_1+n_{12}e_2  ,
   ~~ N_{C_{11}}^2(e_2)=  n_{11}e_2,~~
   N_{C_{11}}^2(e_3)=n_{32}e_2+  n_{11}e_3,~n_{32}\neq0.$
    \item[]  $N_{C_{11}}^3(e_1)=n_{11}e_1-n_{31}e_2 ,
   ~~ N_{C_{11}}^3(e_2)=  n_{11}e_2,~~
   N_{C_{11}}^3(e_3)=n_{31}e_1+n_{32}e_2+  n_{11}e_3,~n_{31}\neq0.$
   \end{itemize}

$(12)$~The Nijenhuis operators on the associative algebra~$(C_{12},\cdot)$~are:
\begin{itemize}
    \item[]  $N_{C_{12}}^1(e_1)=n_{11}e_1+n_{12}e_2 ,
   ~~ N_{C_{12}}^1(e_2)=  n_{11}e_2,~~
   N_{C_{12}}^1(e_3)=  n_{33}e_3.$
   \item[]  $N_{C_{12}}^2(e_1)=n_{11}e_1+n_{12}e_2  ,
   ~~ N_{C_{12}}^2(e_2)=  n_{11}e_2,~~
   N_{C_{12}}^2(e_3)=n_{32}e_2+  n_{11}e_3,~n_{32}\neq0.$
    \item[]  $N_{C_{12}}^3(e_1)=n_{11}e_1+n_{31}e_2 ,
   ~~ N_{C_{12}}^3(e_2)=  n_{11}e_2,~~
   N_{C_{12}}^3(e_3)=n_{31}e_1+n_{32}e_2+  n_{11}e_3,~n_{31}\neq0.$
   \end{itemize}
\end{theorem}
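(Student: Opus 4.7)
The plan is to adapt the strategy of the proof of Theorem \ref{Ni ope on 2 c}: for each of the twelve commutative associative algebras $(C_i,\cdot)$, I write a general linear map as $N(e_k)=\sum_{l=1}^{3}n_{kl}e_l$, with nine unknowns, and then invoke Proposition \ref{PRO: Structural constants} to replace the Nijenhuis condition by an explicit system of quadratic equations in the $n_{kl}$. Concretely, I would expand both sides of
\[
N(e_p)\cdot N(e_q)=N\bigl(N(e_p)\cdot e_q+e_p\cdot N(e_q)-N(e_p\cdot e_q)\bigr)
\]
for $(p,q)\in\{(1,1),(1,2),(1,3),(2,2),(2,3),(3,3)\}$ using the structure constants $C_{kl}^m$ of $(C_i,\cdot)$, and match coefficients of $e_1,e_2,e_3$. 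Commutativity collapses the nine a priori equations to these six, so the Nijenhuis condition becomes at most eighteen scalar quadratic relations among the nine entries $n_{kl}$.

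Next, for each algebra I would solve this polynomial system by a branching case analysis on which off-diagonal entries $n_{kl}$ vanish, exactly in the spirit of Cases 1 and 2 in the proof of Theorem \ref{Ni ope on 2 c}. The sparse algebras $C_1$--$C_4$, $C_8$, $C_9$, $C_{11}$, $C_{12}$ have very few nonzero structure constants, so the resulting equations collapse quickly; for example, on $(C_2,\cdot)$ there is no constraint at all, which recovers the arbitrary $N_{C_2}^1$, while on $(C_1,\cdot)$ the only nontrivial identity comes from $e_3\cdot e_3=e_1$ and already pins down the two families $N_{C_1}^1, N_{C_1}^2$. The denser algebras $C_5$, $C_6$, $C_7$, $C_{10}$, whose basis vectors are idempotent, produce factorable relations of the shape $n_{kl}(n_{kk}-n_{ll}\pm n_{kl})=0$ and similar; the branches on which factors vanish account for the numerous sub-cases $N_{C_i}^j$ listed in the statement.

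The main obstacle is the combinatorial growth of sub-cases for the dense algebras: $(C_5,\cdot)$ alone splits into sixteen parametrized families, and the argument must show both exhaustiveness (no solution is missed) and that each sub-case contributes a genuinely new operator family rather than being subsumed by a previous one. To control this I would organize the branching by the support pattern of $(n_{kl})$, starting from purely diagonal $N$ and successively relaxing single zero-hypotheses; at each step, the previously derived equations force the newly nonzero entries into the explicit linear relations with the diagonal entries that appear in the statement. Finally, for each candidate $N$ produced in this way I substitute back into the six defining equations to confirm that it is genuinely a Nijenhuis operator. Since every computation involved is of the same type as in the proof of Theorem \ref{Ni ope on 2 c}, only longer, the verification proceeds by direct (if lengthy) algebra, and the resulting list is precisely the one asserted in the theorem.
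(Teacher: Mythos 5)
Your proposal follows essentially the same route as the paper: write $N$ in coordinates, use Proposition \ref{PRO: Structural constants} to turn the Nijenhuis identity on basis pairs into a quadratic system in the $n_{kl}$, and solve by a case split on which entries vanish (the paper carries this out explicitly for $(C_1,\cdot)$, splitting on $n_{12}=0$ versus $n_{12}\neq 0$, and treats the remaining algebras ``similarly''). Your additional remarks on organizing the branching by support pattern and checking exhaustiveness are sensible refinements of the same computation, not a different method.
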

\begin{proof}
Let $\left\{e_{1},  e_{2}, e_{3}\right\}$ be a basis of 3-dimensional associative algebras~$(C_{i},\cdot)$, and let~$N$ be a linear map on~$C_{i}$. For all $n_{ij}\in
 \mathbb{C},~1\leqslant i,j\leqslant3$,~set
  \begin{equation*}
    N(e_{1})=n_{11}e_1+n_{12}e_2+n_{13}e_3,
    ~N(e_{2})=n_{21}e_1+n_{22}e_2+n_{23}e_3,
    ~N(e_{3})=n_{31}e_1+n_{32}e_2+n_{33}e_3.
  \end{equation*}
The non-zero structural constants of $(C_{1},\cdot)$ is
\begin{equation*}
   C_{3 3}^{1}=1.
\end{equation*}
By Proposition \ref{PRO: Structural constants}, $N$ is a Nijenhuis operator  on $C_{1}$ if and only if $n_{i j} $ satisfies the following equations:
\begin{equation}\label{3-dim ass alg C_1 equ}
  \left\{\begin{array}{l}
n_{13}=0,\\n_{23}=0,\\
({n_{11}}-n_{33})^2=-n_{12}n_{21},\\
n_{12}(2n_{33}-n_{11}-n_{22})=0.\\
\end{array}\right.
\end{equation}
To solve (\ref{3-dim ass alg C_1 equ}), we distinguish
the two cases depending on whether~$n_{12}$~is equal to zero or not.

$\mathbf{Case~1}$: If $n_{12} = 0$, then $n_{11} = n_{33}$, and we obtain $N_{C_{1}}^1$.

$\mathbf{Case~2}$: If $n_{12}\neq0$,~then we have $n_{21}=-\frac{(n_{11}-n_{33})^2}{n_{12}},~n_{22}=2n_{33}-n_{11}$.~Thus we get~$N_{C_{1}}^2$.

Similarly, we can obtain all Nijenhuis operators on other 3-dimensional  commutative associative algebras. This completes the proof.
\end{proof}

\subsection{Nijenhuis operators on 3-dimensional non-commutative associative algebras}
In this subsection, we determine all Nijenhuis operators on $(D_i,\cdot)$. 

\textbf{Reading guide for Theorem \ref{Ni ope on 3 non c a }}: we use $N_{D_{i}}^j$ to denote the $j$-th Nijenhuis operator on the non-commutative associative algebra $(D_i, \cdot)$, where the algebras $(D_i, \cdot)$ are taken from the complete classification in \cite{Bai-1}. The parameters $n_{ij}$, where $i,j=1,2,3$, of the matrix representing $N$ are arbitrary complex numbers unless otherwise specified.

\begin{theorem}\label{Ni ope on 3 non c a }
Let $N:D_i\rightarrow D_i$, $1 \leq i \leq 12$, be a linear map on non-commutative associative algebras $(D_i, \cdot)$ defined by \eqref{eq:Nii-mulit}. The Nijenhuis operators on 3-dimensional non-commutative associative algebras are then given as follows:

$(1)$~The Nijenhuis operators on the associative algebra~$(D_{1},\cdot)$~are:
\begin{itemize}
   \item[]  $N_{D_{1}}^1(e_1)=n_{11}e_1+  n_{13}e_3 ,~~ N_{D_{1}}^1(e_2)= n_{21}e_1+ n_{22}e_2+ n_{23}e_3 ,~~ N_{D_{1}}^1(e_3)=  n_{11}e_3.$
         \item[]  $N_{D_{1}}^2(e_1)=n_{11}e_1+n_{12}e_2+  n_{13}e_3 ,~~ N_{D_{1}}^2(e_2)= n_{22}e_2+ n_{23}e_3 ,~~ N_{D_{1}}^2(e_3)=  n_{11}e_3,~n_{12}\neq0.$
         \item[]  $N_{D_{1}}^3(e_1)=n_{11}e_1+n_{12}e_2+  n_{13}e_3 ,~~ N_{D_{1}}^3(e_2)= n_{21}e_1+ (n_{33}-\frac{n_{12}n_{21}}{n_{33}-n_{11}})e_2+ n_{23}e_3 ,~~ N_{D_{1}}^3(e_3)=  n_{33}e_3,~n_{33}\neq n_{11}.$
   \end{itemize}

$(2)$~The Nijenhuis operators on the associative algebra~$(D_{2},\cdot)$~are:
\begin{itemize}
  \item[]  $N_{D_{2}}^1(e_1)=n_{11}e_1+  n_{13}e_3 ,~~ N_{D_{2}}^1(e_2)= n_{21}e_1+ n_{22}e_2+ n_{23}e_3 ,~~ N_{D_{2}}^1(e_3)=  n_{22}e_3,~n_{21}\neq0.$
         \item[]  $N_{D_{2}}^2(e_1)=n_{11}e_1+n_{12}e_2+  n_{13}e_3 ,~~ N_{D_{2}}^2(e_2)= n_{22}e_2+ n_{23}e_3 ,~~ N_{D_{2}}^2(e_3)=  n_{11}e_3,~n_{12}\neq0.$
         \item[]  $N_{D_{2}}^3(e_1)=n_{11}e_1+  n_{13}e_3 ,~~ N_{D_{2}}^3(e_2)= n_{22}e_2+ n_{23}e_3 ,
             ~~ N_{D_{2}}^3(e_3)=  n_{22}e_3.$
          \item[]  $N_{D_{2}}^4(e_1)=n_{33}e_1+  n_{13}e_3 ,~~ N_{D_{2}}^4(e_2)= n_{22}e_2+ n_{23}e_3 ,
             ~~ N_{D_{2}}^4(e_3)=  n_{33}e_3,~n_{33}\neq n_{22}.$
   \end{itemize}

$(3)$~The Nijenhuis operators on the associative algebra~$(D_{3},\cdot)$~are:
\begin{itemize}
  \item[]  $N_{D_{3}}^1(e_1)=(n_{33}+\frac{(-1+\sqrt{1-4\lambda})n_{12}}{2})e_1+n_{12}e_2+  n_{13}e_3 ,~~ N_{D_{3}}^1(e_2)= \frac{(-1+\sqrt{1-4\lambda})(n_{22}-n_{33})}{2}e_1+ n_{22}e_2+ n_{23}e_3 ,~~ N_{D_{3}}^1(e_3)=  n_{33}e_3,~n_{12},~\lambda \neq0.$
      \item[]  $N_{D_{3}}^2(e_1)=(n_{33}+\frac{(-1-\sqrt{1-4\lambda})n_{12}}{2})e_1+n_{12}e_2+  n_{13}e_3
          ,~~ N_{D_{3}}^2(e_2)= \frac{(-1-\sqrt{1-4\lambda})(n_{22}-n_{33})}{2}e_1+ n_{22}e_2+ n_{23}e_3
           ,~~ N_{D_{3}}^2(e_3)=  n_{33}e_3,~n_{12},~\lambda \neq0.$
          \item[]  $N_{D_{3}}^3(e_1)=n_{33}e_1+  n_{13}e_3
          ,~~ N_{D_{3}}^3(e_2)= \frac{(-1\pm \sqrt{1-4\lambda})(n_{22}-n_{33})}{2}e_1+ n_{22}e_2+ n_{23}e_3
           ,~~ N_{D_{3}}^3(e_3)=  n_{33}e_3,~\lambda \neq 0.$
   \end{itemize}

$(4)$~The Nijenhuis operators on the associative algebra~$(D_{4},\cdot)$~are:
\begin{itemize}
    \item[]  $N_{D_{4}}^1(e_1)=n_{11}e_1,~~ N_{D_{4}}^1(e_2)= n_{22}e_2+ n_{23}e_3 ,~~ N_{D_{4}}^1(e_3)=  n_{32}e_2+ n_{33}e_3.$
         \item[]  $N_{D_{4}}^2(e_1)=n_{11}e_1 ,~~ N_{D_{4}}^2(e_2)= n_{22}e_2 ,~~ N_{D_{4}}^2(e_3)=n_{31}e_1+n_{32}e_2 + n_{11}e_3,~n_{31}\neq0.$
         \item[]  $N_{D_{4}}^3(e_1)=n_{11}e_1 ,~~ N_{D_{4}}^3(e_2)=  n_{21}e_1+ n_{22}e_2 ,
             ~~ N_{D_{4}}^3(e_3)= n_{31}e_1+ n_{32}e_2+  n_{11}e_3,~n_{21}\neq0.$
          \item[]  $N_{D_{4}}^4(e_1)=n_{11}e_1+  n_{12}e_2 ,~~ N_{D_{4}}^4(e_2)= n_{22}e_2 ,
             ~~ N_{D_{4}}^4(e_3)=  n_{32}e_2 + n_{22}e_3,~n_{12}\neq 0.$
   \end{itemize}

$(5)$~The Nijenhuis operators on the associative algebra~$(D_{5},\cdot)$~are:
\begin{itemize}
    \item[]  $N_{D_{5}}^1(e_1)=n_{11}e_1,~~ N_{D_{5}}^1(e_2)= n_{22}e_2+ n_{23}e_3 ,~~ N_{D_{5}}^1(e_3)=  n_{32}e_2+ n_{33}e_3.$
         \item[]  $N_{D_{5}}^2(e_1)=n_{11}e_1 ,~~ N_{D_{5}}^2(e_2)= n_{22}e_2 ,~~ N_{D_{5}}^2(e_3)=n_{31}e_1+n_{32}e_2 + n_{11}e_3,~n_{31}\neq0.$
         \item[]  $N_{D_{5}}^3(e_1)=n_{11}e_1 ,~~ N_{D_{5}}^3(e_2)=  n_{21}e_1+ n_{22}e_2 ,
             ~~ N_{D_{5}}^3(e_3)= n_{31}e_1+ n_{32}e_2+  n_{11}e_3,~n_{21}\neq0.$
          \item[]  $N_{D_{5}}^4(e_1)=n_{11}e_1+  n_{12}e_2 ,~~ N_{D_{5}}^4(e_2)= n_{22}e_2 ,
             ~~ N_{D_{5}}^4(e_3)=  n_{32}e_2 + n_{22}e_3,~n_{12}\neq 0.$
   \end{itemize}

$(6)$~The Nijenhuis operators on the associative algebra~$(D_{6},\cdot)$~are:
\begin{itemize}
     \item[]  $N_{D_{6}}^1(e_1)=n_{11}e_1,~~ N_{D_{6}}^1(e_2)= n_{22}e_2+ n_{23}e_3 ,~~ N_{D_{6}}^1(e_3)=  n_{32}e_2+ n_{33}e_3.$
         \item[]  $N_{D_{6}}^2(e_1)=n_{11}e_1+n_{12}e_2  ,~~ N_{D_{6}}^2(e_2)= n_{11}e_2 ,~~ N_{D_{6}}^2(e_3)=n_{32}e_2 + n_{11}e_3,~n_{12}\neq0.$
         \item[]  $N_{D_{6}}^3(e_1)=n_{11}e_1 ,~~ N_{D_{6}}^3(e_2)=  n_{21}e_1+ n_{22}e_2+  n_{21}e_3 ,
             ~~ N_{D_{6}}^3(e_3)= n_{31}e_1+ n_{32}e_2+  (n_{11}+n_{31})e_3,~n_{21}\neq0.$
          \item[]  $N_{D_{6}}^4(e_1)=n_{11}e_1+  n_{13}e_3 ,~~ N_{D_{6}}^4(e_2)= n_{22}e_2 ,
             ~~ N_{D_{6}}^4(e_3)= (n_{11}-n_{13})e_3,~n_{13}\neq 0.$
           \item[]  $N_{D_{6}}^5(e_1)=n_{11}e_1,~~ N_{D_{6}}^5(e_2)= n_{22}e_2 ,
             ~~ N_{D_{6}}^5(e_3)=n_{31}e_1+n_{32}e_2+ (n_{11}+n_{31})e_3,~n_{31}\neq 0.$
               \item[]  $N_{D_{6}}^6(e_1)=n_{11}e_1+n_{12}e_2+n_{13}e_2,~~ N_{D_{6}}^6(e_2)= n_{22}e_2 ,
             ~~ N_{D_{6}}^6(e_3)=(\frac{n_{11}n_{12}-n_{12}n_{13}-n_{12}
             n_{22}}{n_{13}})e_2+ (n_{11}-n_{13})e_3,~n_{12},~n_{13}\neq 0.$
   \end{itemize}

$(7)$~The Nijenhuis operators on the associative algebra~$(D_{7},\cdot)$~are:
\begin{itemize}
   \item[]  $N_{D_{7}}^1(e_1)=n_{11}e_1,~~ N_{D_{7}}^1(e_2)= n_{22}e_2+ n_{23}e_3 ,~~ N_{D_{7}}^1(e_3)=  n_{32}e_2+ n_{33}e_3.$
         \item[]  $N_{D_{7}}^2(e_1)=n_{11}e_1+n_{12}e_2  ,~~ N_{D_{7}}^2(e_2)= n_{11}e_2 ,~~ N_{D_{7}}^2(e_3)=n_{32}e_2 + n_{11}e_3,~n_{12}\neq0.$
         \item[]  $N_{D_{7}}^3(e_1)=n_{11}e_1 ,~~ N_{D_{7}}^3(e_2)=  n_{21}e_1+ n_{22}e_2+  n_{21}e_3 ,
             ~~ N_{D_{7}}^3(e_3)= n_{31}e_1+ n_{32}e_2+  (n_{11}+n_{31})e_3,~n_{21}\neq0.$
          \item[]  $N_{D_{7}}^4(e_1)=n_{11}e_1+  n_{13}e_3 ,~~ N_{D_{7}}^4(e_2)= n_{22}e_2 ,
             ~~ N_{D_{7}}^4(e_3)= (n_{11}-n_{13})e_3,~n_{13}\neq 0.$
           \item[]  $N_{D_{7}}^5(e_1)=n_{11}e_1,~~ N_{D_{7}}^5(e_2)= n_{22}e_2 ,
             ~~ N_{D_{7}}^5(e_3)=n_{31}e_1+n_{32}e_2+ (n_{11}+n_{31})e_3,~n_{31}\neq 0.$
               \item[]  $N_{D_{7}}^6(e_1)=n_{11}e_1+n_{12}e_2+n_{13}e_2,~~ N_{D_{7}}^6(e_2)= n_{22}e_2 ,
             ~~ N_{D_{7}}^6(e_3)=(\frac{n_{11}n_{12}-n_{12}n_{13}-n_{12}
             n_{22}}{n_{13}})e_2+ (n_{11}-n_{13})e_3,~n_{12},~n_{13}\neq 0.$
   \end{itemize}

$(8)$~The Nijenhuis operators on the associative algebra~$(D_{8},\cdot)$~are:
\begin{itemize}
  \item[]  $N_{D_{8}}^1(e_1)=n_{11}e_1,~~ N_{D_{8}}^1(e_2)= n_{22}e_2+ n_{23}e_3 ,~~ N_{D_{8}}^1(e_3)=  n_{32}e_2+ n_{33}e_3.$
         \item[]  $N_{D_{8}}^2(e_1)=n_{11}e_1 ,~~ N_{D_{8}}^2(e_2)= n_{22}e_2 ,~~ N_{D_{8}}^2(e_3)=n_{31}e_1+n_{32}e_2 + n_{11}e_3,~n_{31}\neq0.$
         \item[]  $N_{D_{8}}^3(e_1)=n_{11}e_1 ,~~ N_{D_{8}}^3(e_2)=  n_{21}e_1+ n_{22}e_2 ,
             ~~ N_{D_{8}}^3(e_3)= n_{31}e_1+ n_{32}e_2+  n_{11}e_3,~n_{21}\neq0.$
          \item[]  $N_{D_{8}}^4(e_1)=n_{11}e_1+  n_{12}e_2 ,~~ N_{D_{8}}^4(e_2)= n_{22}e_2 ,
             ~~ N_{D_{8}}^4(e_3)=  n_{32}e_2 + n_{22}e_3,~n_{12}\neq 0.$
   \end{itemize}

$(9)$~The Nijenhuis operators on the associative algebra~$(D_{9},\cdot)$~are:
\begin{itemize}
    \item[]  $N_{D_{9}}^1(e_1)=n_{11}e_1,~~ N_{D_{9}}^1(e_2)= n_{22}e_2+ n_{23}e_3 ,~~ N_{D_{9}}^1(e_3)=  n_{32}e_2+ n_{33}e_3.$
         \item[]  $N_{D_{9}}^2(e_1)=(n_{31}+n_{33})e_1 ,~~ N_{D_{9}}^2(e_2)= n_{22}e_2+ n_{23}e_3 ,~~ N_{D_{9}}^2(e_3)=n_{31}e_1+n_{32}e_2 + n_{33}e_3,~n_{31}\neq0.$
         \item[]  $N_{D_{9}}^3(e_1)=(n_{31}+n_{33})e_1,~~ N_{D_{9}}^3(e_2)=n_{21}e_1+  n_{22}e_2- n_{21}e_3,
             ~~ N_{D_{9}}^3(e_3)= n_{31}e_1+n_{32}e_2 + n_{33}e_3,~n_{21}\neq0.$
          \item[]  $N_{D_{9}}^4(e_1)=n_{11}e_1+  n_{12}e_2 ,~~ N_{D_{9}}^4(e_2)= n_{11}e_2 ,
             ~~ N_{D_{9}}^4(e_3)= n_{32}e_2+n_{11}e_3,~n_{12}\neq 0.$
               \item[]  $N_{D_{9}}^5(e_1)=n_{11}e_1+n_{12}e_2+n_{13}e_2,~~ N_{D_{9}}^5(e_2)= n_{22}e_2 ,
             ~~ N_{D_{9}}^5(e_3)=(\frac{n_{11}n_{12}+n_{12}n_{13}-n_{12}
             n_{22}}{n_{13}})e_2+ (n_{11}+n_{13})e_3,~n_{13}\neq 0.$
   \end{itemize}

$(10)$~The Nijenhuis operators on the associative algebra~$(D_{10},\cdot)$~are:
\begin{itemize}
  \item[]  $N_{D_{10}}^1(e_1)=n_{11}e_1 +n_{12}e_2+n_{13}e_3,
   ~~ N_{D_{10}}^1(e_2)= n_{21}e_1 +n_{22}e_2+n_{23}e_3,~~
   N_{D_{10}}^1(e_3)=  n_{31}e_1 +n_{32}e_2+n_{33}e_3.$
   \end{itemize}

$(11)$~The Nijenhuis operators on the associative algebra~$(D_{11},\cdot)$~are:
\begin{itemize}
   \item[]  $N_{D_{11}}^1(e_1)=n_{11}e_1 +n_{12}e_2+n_{13}e_3,
   ~~ N_{D_{11}}^1(e_2)= n_{21}e_1 +n_{22}e_2+n_{23}e_3,~~
   N_{D_{11}}^1(e_3)=  n_{31}e_1 +n_{32}e_2+n_{33}e_3.$
   \end{itemize}

$(12)$~The Nijenhuis operators on the associative algebra~$(D_{12},\cdot)$~are:
\begin{itemize}
   \item[]  $N_{D_{12}}^1(e_1)=n_{11}e_1,
   ~~ N_{D_{12}}^1(e_2)= n_{22}e_2,~~
   N_{D_{12}}^1(e_3)=  n_{31}e_1 +n_{32}e_2+n_{33}e_3.$
     \item[]  $N_{D_{12}}^2(e_1)=n_{11}e_1,
   ~~ N_{D_{12}}^2(e_2)= n_{21}e_1 +n_{22}e_2,~~
   N_{D_{12}}^2(e_3)=  n_{31}e_1 +n_{32}e_2+n_{11}e_3,~n_{21}\neq 0.$
     \item[]  $N_{D_{12}}^3(e_1)=n_{11}e_1+n_{12}e_2,
   ~~ N_{D_{12}}^3(e_2)= n_{22}e_2,~~
   N_{D_{12}}^3(e_3)=  n_{31}e_1 +n_{32}e_2+n_{22}e_3,~n_{12}\neq 0.$
    \item[]  $N_{D_{12}}^4(e_1)=n_{11}e_1,
   ~~ N_{D_{12}}^4(e_2)= n_{21}e_1 +n_{22}e_2+n_{23}e_3,~~
   N_{D_{12}}^4(e_3)=  \frac{n_{21}n_{33}-n_{11}n_{21}}{n_{23}}e_1 +n_{32}e_2+n_{33}e_3,~n_{23}\neq 0.$
    \item[]  $N_{D_{12}}^5(e_1)=n_{11}e_1 +n_{12}e_2+n_{13}e_3,
   ~~ N_{D_{12}}^5(e_2)=n_{22}e_2,~~
   N_{D_{12}}^5(e_3)=n_{31}e_1 + \frac{n_{12}n_{33}-n_{12}n_{22}}{n_{13}}~e_2+n_{33}e_3,~n_{13}\neq 0.$
   \end{itemize}
\end{theorem}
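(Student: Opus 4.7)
The plan is to follow the same template used in the proof of Theorem~\ref{Ni ope on 3 c a }. For each non-commutative associative algebra $(D_i,\cdot)$, $1 \leq i \leq 12$, I would read off the non-zero structural constants $C_{ij}^t$ from the multiplication tables in Proposition~\ref{non-commutative 3 dim pre-Lie algebra}, substitute them into the system \eqref{structural constants} from Proposition~\ref{PRO: Structural constants}, and solve for the nine unknowns $n_{ij}$. Writing
\begin{equation*}
N(e_1)=n_{11}e_1+n_{12}e_2+n_{13}e_3,\quad N(e_2)=n_{21}e_1+n_{22}e_2+n_{23}e_3,\quad N(e_3)=n_{31}e_1+n_{32}e_2+n_{33}e_3,
\end{equation*}
the Nijenhuis condition \eqref{eq:Ni ope1} reduces, for each $(D_i,\cdot)$, to a small system of quadratic polynomial equations in these nine unknowns.

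The individual algebras are then handled by elementary case analysis on whether various off-diagonal entries vanish. For $(D_1,\cdot)$, the only non-zero structural constants are $C_{12}^3=\frac{1}{2}$ and $C_{21}^3=-\frac{1}{2}$, and splitting according to whether $n_{12}$ vanishes and whether $n_{33}=n_{11}$ isolates the three families $N_{D_1}^1, N_{D_1}^2, N_{D_1}^3$ in the statement. The algebras $(D_2,\cdot),\ldots,(D_9,\cdot)$ follow the same pattern: one extracts a handful of polynomial constraints, typically of the form $n_{ab}n_{cd}=0$ together with linear relations among diagonal entries, and a case split on which off-diagonal entries vanish yields the listed families. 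The algebras $(D_{10},\cdot)$ and $(D_{11},\cdot)$ are degenerate in that \eqref{structural constants} is satisfied identically by every linear endomorphism, so one obtains the single nine-parameter families $N_{D_{10}}^1$ and $N_{D_{11}}^1$. The final case $(D_{12},\cdot)$ is the most elaborate, requiring simultaneous splits on $n_{12}$, $n_{13}$, $n_{21}$ and $n_{23}$ to isolate the five families $N_{D_{12}}^1,\ldots,N_{D_{12}}^5$.

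The main difficulty is bookkeeping rather than depth: it is easy either to miss a sub-case or to list the same family twice under different parametrizations, particularly for $(D_{12},\cdot)$ where two of the parametric families involve a quotient such as $\tfrac{n_{21}n_{33}-n_{11}n_{21}}{n_{23}}$ that is only valid under a non-vanishing assumption on the denominator. My plan is therefore to first write out, for each $D_i$, the independent equations obtained from \eqref{structural constants} after discarding tautologies and collapsing repeated index sums, then build an exhaustive tree of vanishing/non-vanishing sub-cases, and finally normalize the solutions so that each family appears exactly once. Since the method is identical to that used for the commutative algebras in the proof of Theorem~\ref{Ni ope on 3 c a }, I would present the computation in detail only for one representative algebra (for instance $(D_1,\cdot)$, together with $(D_{12},\cdot)$ as the most involved) and indicate that the remaining cases follow by an entirely analogous but lengthier computation, matching the style adopted earlier in the excerpt.
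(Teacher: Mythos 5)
Your proposal matches the paper's own treatment: the paper proves this theorem in one line by saying it "follows a similar approach to" the commutative case, i.e.\ exactly the procedure you describe of substituting the structural constants of each $(D_i,\cdot)$ into the system \eqref{structural constants} from Proposition \ref{PRO: Structural constants} and performing a case split on the vanishing of the relevant $n_{ij}$. Your additional observations (the correct structural constants for $(D_1,\cdot)$, the degeneracy of $(D_{10},\cdot)$ and $(D_{11},\cdot)$ where every linear map satisfies the condition, and the denominator caveats for $(D_{12},\cdot)$) are all consistent with the stated result, so the approach is essentially identical to the paper's.
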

\begin{proof}
The proof follows a similar approach to Proposition~\ref{Ni ope on 3 c a }.
\end{proof}
\begin{remark}
Theorem \ref{Ni ope on 3 c a } and Theorem \ref{Ni ope on 3 non c a } encompass all possible Nijenhuis operators on 3-dimensional associative algebras, and the above results are summarized in Tables \ref{tab:nijenhuis_3d_associative} $\sim$ \ref{tab:nijenhuis_3d_associative_D7-D12} of the Appendix. 
\end{remark}

\begin{remark}
A full orbit analysis under $\operatorname{Aut}(A)$ for each algebra is beyond the scope of this paper. As an illustration, consider $A_5$ in Proposition \ref{commutative pre-Lie algebra}. Its Nijenhuis operators are  
\[
N(e_1) = n_{11}e_1 + n_{12}e_2,\quad N(e_2) = n_{11}e_2,\quad n_{11}, n_{12} \in \mathbb{C}.
\]  
The automorphism group $\operatorname{Aut}(A_5) = \{\varphi \mid \varphi(e_1) = a e_1 + b e_2,\, \varphi(e_2) = a^2 e_2,\, a \ne 0,\, b \in \mathbb{C}\}$ acts by conjugation:  
\[
\varphi \circ N \circ \varphi^{-1} \colon N \mapsto \bigl(n_{11} \mapsto n_{11},\; n_{12} \mapsto a^{-1}n_{12}\bigr).
\]  
The parameter $b$ does not affect the action, so only $a$ rescales $n_{12}$. This yields two non-isomorphic types:  
 $(n_{12} = 0)$: $N(e_1) = n_{11}e_1,\; N(e_2) = n_{11}e_2$;  
$(n_{12} \ne 0)$: by setting $a = n_{12}$, we obtain the canonical form $N(e_1) = n_{11}e_1 + e_2,\; N(e_2) = n_{11}e_2$.  

These types are not isomorphic because the $\operatorname{Aut}(A_5)$-action preserves the condition $n_{12} = 0$. Similar reductions can be applied to other families. We present the operators in parametric form, where the parameters in a family may represent fewer distinct isomorphism classes than initially apparent.
\end{remark}

\section{Applications of Nijenhuis Operators to the CYBE}
In previous sections, we obtained Nijenhuis operators $N$ on 2-dimensional pre-Lie algebras and 3-dimensional commutative and non-commutative associative algebras $(A,\cdot)$. Using these results, we construct Rota-Baxter operators $R$ of weight zero on the sub-adjacent Lie algebras ${\frak g}(A)$ in this section. Consequently, these operators induce solutions of the CYBE on the Lie algebras ${\frak g}(A) \ltimes_{{\rm ad}^{\ast}} {\frak g}(A)^{\ast}$.

\subsection{From Nijenhuis operators to Rota-Baxter operators and CYBE solutions}

Briefly recall the construction:  
\begin{enumerate}  
    \item For a pre-Lie algebra $(A,\cdot)$, its sub-adjacent Lie algebra $\mathfrak{g}(A) = (A, [\cdot,\cdot])$ is defined by $[x,y] = x\cdot y - y\cdot x$.  
    \item A Nijenhuis operator $N$ on $A$ induces a Rota-Baxter operator $R$ of weight zero on $\mathfrak{g}(A)$ via Proposition~\ref{Ni ope and rb ope} and Proposition~\ref{rb on pre-Lie and Lie}.  
    \item A Rota-Baxter operator $R$ on $\mathfrak{g}(A)$ yields a solution $r$ of the CYBE on the double $\mathfrak{g}(A) \ltimes_{\mathrm{ad}^*} \mathfrak{g}(A)^*$ via the standard construction (Theorem~\ref{Bai:classical}).  
\end{enumerate}  
This provides a systematic pipeline:  
\[
\text{Nijenhuis on } A \xrightarrow{\text{Prop.~\ref{Ni ope and rb ope},~\ref{rb on pre-Lie and Lie}}} \text{Rota-Baxter on } \mathfrak{g}(A) \xrightarrow{\text{Thm.~\ref{Bai:classical}}} \text{CYBE solution on } \mathfrak{g}(A) \ltimes_{\mathrm{ad}^*} \mathfrak{g}(A)^*
\]

\begin{proposition}\label{Ni ope and rb ope}
Let  $\left(A, \cdot\right)$  be a pre-Lie algebra and  $N: A \rightarrow A$ be  a linear map.
If  $N^{2}=0$, then  $N$  is a Nijenhuis operator if and only if~$N$~is a Rota-Baxter operator of weight zero on  $(A,\cdot)$.
\end{proposition}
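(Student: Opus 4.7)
The plan is to derive both implications by a direct algebraic manipulation that rests on just two ingredients: the linearity of $N$ and the vanishing identity $N^{2}=0$. Specifically, starting from the Nijenhuis defining equation
\[
N(x)\cdot N(y)=N\bigl(N(x)\cdot y+x\cdot N(y)-N(x\cdot y)\bigr),
\]
I would split the right-hand side using linearity as
\[
N\bigl(N(x)\cdot y+x\cdot N(y)\bigr)-N^{2}(x\cdot y),
\]
and then invoke $N^{2}=0$ to kill the last term. This shows that under the assumption $N^{2}=0$, the Nijenhuis identity is equivalent to
\[
N(x)\cdot N(y)=N\bigl(N(x)\cdot y+x\cdot N(y)\bigr),
\]
which is exactly the Rota-Baxter identity of weight zero on the pre-Lie algebra $(A,\cdot)$.

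For the converse direction I would run the same manipulation in reverse: if $N$ is Rota-Baxter of weight zero and $N^{2}=0$, then
\[
N\bigl(N(x)\cdot y+x\cdot N(y)-N(x\cdot y)\bigr)=N\bigl(N(x)\cdot y+x\cdot N(y)\bigr)-N^{2}(x\cdot y)=N(x)\cdot N(y),
\]
so $N$ satisfies the Nijenhuis condition.

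There is really no technical obstacle here: the argument is a one-line algebraic identity, and the only subtlety is to make clear that the hypothesis $N^{2}=0$ is used precisely to remove the extra term $N(N(x\cdot y))$ that distinguishes the two defining equations. I would therefore present the proof as a short two-way calculation, emphasizing that the equivalence breaks without the nilpotency assumption, since for a general Nijenhuis operator the term $N^{2}(x\cdot y)$ need not vanish.
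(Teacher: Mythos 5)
Your proposal is correct and follows essentially the same route as the paper's proof: split the right-hand side of the Nijenhuis identity by linearity and use $N^{2}=0$ to remove the term $N^{2}(x\cdot y)$, observing that the two defining equations then coincide. Your write-up is in fact slightly more explicit than the paper's on the converse direction, which the paper dispatches in one sentence by the same reversed manipulation.
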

\begin{proof}
Let $N: \mathfrak{g} \rightarrow \mathfrak{g}$ be a linear map such that $N^2 = 0$.~If~$N$~is a Nijenhuis operator,~then
\begin{eqnarray*}
N(x) \cdot N(y)&=&N(N(x) \cdot  y+x \cdot N(y)-N(x \cdot  y))\\
&=&N(N(x) \cdot  y+x \cdot N(y))-N^2(x \cdot  y)\\
&=&N(N(x) \cdot  y+x \cdot N(y)),\quad\forall~ x, y \in A.
\end{eqnarray*}
Therefore~$N$~is a Rota-Baxter operator of weight zero on  $(A,\cdot)$.~
Conversely, if~$N$~is a Rota-Baxter operator of weight zero on  $(A,\cdot)$ such that $N^2 = 0$,  then $N$  is a Nijenhuis operator.
\end{proof}

\begin{proposition}\label{rb on pre-Lie and Lie}
If $R$ is a Rota-Baxter operator of weight zero on a pre-Lie algebra $(A,\cdot)$, then $R$ is
a Rota-Baxter operator of weight zero on its sub-adjacent Lie algebra ${\frak g}(A)$.
\end{proposition}
\begin{proof}
Let $R$ be a Rota-Baxter operator of weight zero on a pre-Lie algebra $(A,\cdot)$. Then for all~$x, y \in A$,~we
have
\begin{eqnarray*}
 R(x)\cdot R(y)=R(R(x)\cdot y + x\cdot R(y)),\\
 R(y)\cdot R(x)=R(R(y)\cdot x + y\cdot R(x)).
\end{eqnarray*}
Therefore,
\begin{eqnarray*}
[R(x), R(y)]&=&R(x)\cdot R(y)-R(y)\cdot R(x)\\
&=&R(R(x)\cdot y + x\cdot R(y))-R(R(y)\cdot x + y\cdot R(x))\\
&=& R(R(x)\cdot y - y\cdot R(x) + x\cdot R(y)-R(y)\cdot x)\\
&=&R([R(x), y] + [x, R(y)]),\quad\forall~x, y \in A.
\end{eqnarray*}
Hence $R$~is a Rota-Baxter operator of weight zero on its sub-adjacent Lie algebra ${\frak g}(A).$
\end{proof}

Let $\rho: \frak g \rightarrow \gl(V)$ be a representation of a Lie algebra $\frak g$.
On the vector space $\frak g \oplus V$,
we can define a Lie algebra structure
(denoted by $\frak g \ltimes_{\rho} V$ ) given by
\begin{equation}\label{Lie algebra structure}
 [x_{1} + v_{1}, x_{2} + v_{2}] = [x_{1}, x_{2}] + \rho(x_{1})v_{2} -
\rho(x_{2})v_{1},\quad \forall ~x_{1}, x_{2} \in \frak g, v_{1}, v_{2} \in V.
\end{equation}
Define a linear map $\rho^{\ast}: {\frak g} \rightarrow \gl({\frak g}^{\ast})$~by
\begin{equation}\label{dual rep}
\langle \rho_x^*(\xi),y \rangle=-\langle\xi, \rho_x(y)\rangle, \;\;
\forall~ x, y\in \g, \xi\in {\frak g}^*.
\end{equation}
Then~$\rho^{\ast}$~is a representation of a Lie algebra $\frak g$.~It is
the dual representation of~$\rho$.
\begin{proposition}\label{element in G+V}\rm{(\cite{Bai})}
If $R : V \rightarrow \mathfrak{g}$ is a linear map, then $R$ can be viewed
as an element in $\mathfrak{g}\otimes V^{\ast}
\subset (\mathfrak{g} \ltimes_{\rho^{\ast}} V^{\ast} ) \otimes
(\mathfrak{g} \ltimes_{\rho^{\ast}} V^{\ast} )$.
\end{proposition}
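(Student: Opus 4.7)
The plan is to exhibit the claimed inclusion through two standard natural identifications; no computation with brackets is required, since only the underlying vector-space structure of $\mathfrak{g} \ltimes_{\rho^{\ast}} V^{\ast}$ is relevant here.

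First, I would use the canonical isomorphism $\mathrm{Hom}(V,\mathfrak{g}) \cong \mathfrak{g} \otimes V^{\ast}$ available because everything in sight is finite-dimensional over $\mathbb{C}$. Concretely, fix a basis $\{v_{1},\ldots,v_{n}\}$ of $V$ with dual basis $\{v_{1}^{\ast},\ldots,v_{n}^{\ast}\}$ of $V^{\ast}$, and send a linear map $R: V \to \mathfrak{g}$ to the element
\[
\widetilde{R} = \sum_{i=1}^{n} R(v_{i}) \otimes v_{i}^{\ast} \in \mathfrak{g} \otimes V^{\ast}.
\]
This image is independent of the chosen basis, because it agrees with the canonical image of $R$ under the natural pairing isomorphism, so one can freely speak of ``$R$'' as an element of $\mathfrak{g} \otimes V^{\ast}$.

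Second, by the very definition \eqref{Lie algebra structure}, the underlying vector space of $\mathfrak{g} \ltimes_{\rho^{\ast}} V^{\ast}$ is simply the direct sum $\mathfrak{g} \oplus V^{\ast}$. Hence the canonical inclusions $\mathfrak{g} \hookrightarrow \mathfrak{g} \oplus V^{\ast}$ and $V^{\ast} \hookrightarrow \mathfrak{g} \oplus V^{\ast}$ induce a vector-space embedding
\[
\mathfrak{g} \otimes V^{\ast} \hookrightarrow (\mathfrak{g} \ltimes_{\rho^{\ast}} V^{\ast}) \otimes (\mathfrak{g} \ltimes_{\rho^{\ast}} V^{\ast}),
\]
and composing with the first step shows that $\widetilde{R}$ lives in the right-hand tensor square, as claimed.

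There is no genuine obstacle: the statement is a definitional/structural observation rather than a theorem with content. The only point worth making explicit, which I would note at the end, is that the Lie bracket of the semidirect product plays no role in the identification itself—it matters only later, when one wants to check whether the associated element satisfies the CYBE in $\mathfrak{g} \ltimes_{\rho^{\ast}} V^{\ast}$.
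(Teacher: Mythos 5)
Your proposal is correct and follows essentially the same route as the paper: both identify $R$ with $\sum_i R(v_i)\otimes v_i^{\ast}$ via the finite-dimensional isomorphism $\Hom(V,\mathfrak{g})\cong\mathfrak{g}\otimes V^{\ast}$ and then observe that $\mathfrak{g}\otimes V^{\ast}$ sits inside the tensor square of $\mathfrak{g}\ltimes_{\rho^{\ast}}V^{\ast}$ as vector spaces. Your added remarks on basis independence and on the irrelevance of the bracket are sound but do not change the argument.
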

\begin{proof}
Let $\{e_{1},\cdots , e_{n}\}$ be a basis of $\mathfrak{g}$.
Let $\{v_{1}, \cdots , v_{m}\}$ be a basis of $V$ and
$\{v^{\ast}_{1},\cdots, v^{\ast}_{m}\}$ be its dual basis,
that is $v^{\ast}_{i} (v_{j}) = \delta_{ij}$.
Set $R(v_{i}) =\sum\limits_{j=1}^{n} a_{ij}e_{j},$ $i=1,2,\cdots, n$.
Since as vector spaces, ${\rm Hom}(V, \mathfrak{g}) \cong \mathfrak{g}
\otimes V^{\ast}$, we have
\begin{equation}
R = \sum_{i=1}^{m} R(v_{i}) \otimes v_{i}^{\ast} =
\sum_{i=1}^{m} \sum_{j=1}^{n} a_{ij} e_{j} \otimes v_{i}^{\ast}
\in \mathfrak{g} \otimes V^{\ast} \subset
(\mathfrak{g} \ltimes_{\rho^{\ast}} V^{\ast} ) \otimes
(\mathfrak{g} \ltimes_{\rho^{\ast}} V^{\ast} ).
\end{equation}

Hence $R$ can be viewed
as an element in $\mathfrak{g}\otimes V^{\ast}
\subset (\mathfrak{g} \ltimes_{\rho^{\ast}} V^{\ast} ) \otimes
(\mathfrak{g} \ltimes_{\rho^{\ast}} V^{\ast} )$.
\end{proof}

For any tensor element $r=\sum_i a_i\otimes b_i\in V \otimes V$, denote $r^{21}=\sum_ib_i\otimes a_i$.

\begin{theorem}\rm{(\cite{Bai})}\label{Bai:classical}
Let $\frak g$ be a Lie algebra. A linear map $R: \frak g\rightarrow \frak g$ is a Rota-Baxter operator if and only if
$r = R - R^{21}$ is a skew-symmetric solution of CYBE
in $\mathfrak{g} \ltimes_{{\rm ad}^{\ast}} \frak g^{\ast}$.
\end{theorem}

\subsection{A complete worked example: from a 2D pre-Lie algebra $(B_1,\cdot)$ to an explicit $r$-matrix}
\label{subsec:example}
We now illustrate the pipeline from Nijenhuis operators to CYBE solutions through a concrete, end-to-end computation.

\textbf{Step 1: Choice of a pre-Lie algebra and its Nijenhuis operator.}
Consider the 2-dimensional pre-Lie algebra $(B_1,\cdot)$ with basis $\{e_1, e_2\}$ (appearing as algebra $A_{-1}$ in \cite{Bai-Meng}) and non-zero products  
\begin{equation}\label{eq:B1-product}  
e_2 \cdot e_1 = -e_1, \qquad e_2 \cdot e_2 = e_1 - e_2.  
\end{equation}  
From Theorem \ref{Ni ope on 2 non c}, the Nijenhuis operators on $(B_1,\cdot)$ are given by the family $N_{B_1}^1$ with two free complex parameters $n_{11}, n_{21} \in \mathbb{C}$:  
\begin{equation}\label{eq:Nij-B1}  
N_{B_1}^1(e_1) = n_{11} e_1, \qquad N_{B_1}^1(e_2) = n_{21} e_1 + n_{11} e_2,  
\end{equation}  
or in matrix form (relative to $\{e_1, e_2\}$)  
\[
N_{B_1}^1 = \begin{pmatrix} n_{11} & 0 \\ n_{21} & n_{11} \end{pmatrix}.
\]

\textbf{Step 2: Sub-adjacent Lie algebra.}
The sub-adjacent Lie algebra $\mathfrak{g}(B_1) = (B_1, [\cdot,\cdot])$ is obtained via $[x,y] = x\cdot y - y\cdot x$. Using~\eqref{eq:B1-product},  
\begin{align*}
[e_1, e_2] &= e_1\cdot e_2 - e_2\cdot e_1 = 0 - (-e_1) = e_1, \\
[e_2, e_1] &= e_2\cdot e_1 - e_1\cdot e_2 = -e_1 - 0 = -e_1,
\end{align*}  
with all other brackets zero. Hence, the only non-trivial brackets are  
$$
[e_1, e_2] = e_1, \quad [e_2, e_1] = -e_1.
$$

\textbf{Step 3: From Nijenhuis to Rota-Baxter operator.}
First compute the square of $N_{B_1}^1$:  
\[
(N_{B_1}^1)^2(e_1) = n_{11}^2 e_1, \qquad (N_{B_1}^1)^2(e_2) = 2n_{11}n_{21} e_1 + n_{11}^2 e_2.
\]  
According to Proposition~\ref{Ni ope and rb ope} and~\ref{rb on pre-Lie and Lie}, a Nijenhuis operator $N$ induces a weight-zero Rota-Baxter operator on $\mathfrak{g}(B_1)$ precisely when $N^2 = 0$. This forces  
\[
n_{11} = 0,
\]  
while $n_{21}$ remains free. Consequently, the admissible Nijenhuis operators are  
\[
N_{B_1}^1(e_1) = 0, \qquad N_{B_1}^1(e_2) = n_{21} e_1,
\]  
and the induced Rota-Baxter operator $R_{\mathfrak{g}(B_1)}^1$ on $\mathfrak{g}(B_1)$ has matrix  
\begin{equation}\label{eq:RB-param}
R_{\mathfrak{g}(B_1)}^1 = \begin{pmatrix} 0 & 0 \\ n_{21} & 0 \end{pmatrix}, \qquad \text{i.e.} \qquad R(e_1) = 0,\; R(e_2) = n_{21} e_1.
\end{equation}

\textbf{Step 4: Constructing the CYBE solution.}
Let $\{e_1^*, e_2^*\}$ be the dual basis in $\mathfrak{g}(B_1)^*$. The remaining brackets in $\mathfrak{g}(B_1) \ltimes_{\mathrm{ad}^*} \mathfrak{g}(B_1)^*$ are determined by the coadjoint representation $\mathrm{ad}^*: \mathfrak{g}(B_1) \to \mathfrak{gl}(\mathfrak{g}(B_1)^*)$, defined by  
\[
\langle \mathrm{ad}_x^*(\xi), y \rangle = -\langle \xi, [x, y] \rangle,\qquad x,y\in\mathfrak{g}(B_1),\; \xi\in\mathfrak{g}(B_1)^*.
\]  
A direct computation gives  
\[
\mathrm{ad}_{e_1}^*(e_1^*) = -e_2^*,\quad \mathrm{ad}_{e_1}^*(e_2^*) = 0,\quad \mathrm{ad}_{e_2}^*(e_1^*) = e_1^*,\quad \mathrm{ad}_{e_2}^*(e_2^*) = 0,
\]  
which yields the brackets  
\begin{equation}\label{eq:bracket-g1-double}
[e_1, e_2] = e_1,\qquad [e_1, e_1^*] = -e_2^*,\qquad [e_2, e_1^*] = e_1^*.
\end{equation}  
Following the standard construction (Theorem~\ref{Bai:classical}), a solution $r \in \mathfrak{g}(B_1) \otimes \mathfrak{g}(B_1)^*$ of the CYBE on the Lie algebra $\mathfrak{g}(B_1) \ltimes_{\mathrm{ad}^*} \mathfrak{g}(B_1)^*$ is obtained via  
\begin{equation}\label{eq:r-formula}
r = \sum_{i=1}^2 \bigl( R(e_i) \otimes e_i^* - e_i^* \otimes R(e_i) \bigr).
\end{equation}  
Substituting the explicit action~\eqref{eq:RB-param}:  
\begin{align*}
r &= \bigl( R(e_1) \otimes e_1^* - e_1^* \otimes R(e_1) \bigr) + \bigl( R(e_2) \otimes e_2^* - e_2^* \otimes R(e_2) \bigr) \\
&= \bigl( 0 \otimes e_1^* - e_1^* \otimes 0 \bigr) + \bigl( n_{21} e_1 \otimes e_2^* - e_2^* \otimes n_{21} e_1 \bigr) \\
&= n_{21} \bigl( e_1 \otimes e_2^* - e_2^* \otimes e_1 \bigr).
\end{align*}

\textbf{Step 5: Result.}
We thus obtain the one-parameter family of skew-symmetric $r$-matrices  
\begin{equation}\label{eq:r-explicit}  
r = n_{21} \; e_1 \wedge e_2^*,\quad \text{where}~e_1 \wedge e_2^* := e_1 \otimes e_2^* - e_2^* \otimes e_1,  
\end{equation}  
which is a solution of the classical Yang–Baxter equation on $\mathfrak{g}(B_1) \ltimes_{\mathrm{ad}^*} \mathfrak{g}(B_1)^*$.

\begin{remark}
The parameter $n_{21}$ is free; $n_{21}=0$ gives the trivial solution $r=0$, while any $n_{21}\neq 0$ yields a non‑trivial  solution. The whole chain is
    \[
    \text{pre‑Lie } B_1 \;\longrightarrow\; \text{Nijenhuis } N_{B_1}^1 \;\longrightarrow\; \text{Rota-Baxter } R_{\mathfrak{g}(B_1)}^1 \;\longrightarrow\; \text{CYBE solution } r
    \]
\end{remark}

\subsection{Sub-adjacent Lie algebras of  pre-Lie algebras and associative algebras}
Having illustrated the construction for a specific pre-Lie algebra $B_1$, we now list the sub-adjacent Lie algebras arising from all 2-dimensional pre-Lie and 3-dimensional associative algebras. These are computed directly following \textbf{Step 2}. The resulting Lie algebras will serve as the domains for the Rota-Baxter operators (and hence for the CYBE solutions) obtained via the Nijenhuis-operator pipeline.

\begin{proposition}\label{sub lie 2}
Let $ {\frak g}(B) $ be the sub-adjacent Lie algebra of a 2-dimensional pre-Lie algebra $ (B, \cdot) $. Then $ {\frak g}(B) $ has a basis $ \{e_1, e_2\} $ for which the non-zero product is one of the following:
\begin{eqnarray*}
  && {\frak g}(B_1):~[e_1,e_2]=e_1;\quad\quad{\frak g}(B_2):~[e_1,e_2]=e_1; \\
  && {\frak g}(B_3):~[e_1,e_2]=e_1;\quad\quad{\frak g}(B_4):~[e_1,e_2]=e_1; \\
  && {\frak g}(B_5):~[e_1,e_2]=e_1;\quad\quad{\frak g}(B_6):~[e_1,e_2]=e_2.
\end{eqnarray*}
\end{proposition}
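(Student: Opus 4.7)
The plan is to verify the claim by direct computation for each of the six non-commutative 2-dimensional pre-Lie algebras listed in Proposition \ref{NON commutative pre-Lie algebra}. Since the sub-adjacent Lie algebra on a 2-dimensional space has at most one independent bracket $[e_1, e_2]$, we only need to compute the single commutator
\[
[e_1, e_2] = e_1 \cdot e_2 - e_2 \cdot e_1
\]
using the multiplication tables of $(B_i, \cdot)$ and confirm that the result matches the list.

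First I would fix the basis $\{e_1, e_2\}$ and recall from equation \eqref{commutator} that the Lie bracket on $\g(B_i)$ is the commutator of the pre-Lie product. Then, processing the six cases $B_1, \ldots, B_6$ in order, I would read off from Proposition \ref{NON commutative pre-Lie algebra} which of $e_1 \cdot e_2$ and $e_2 \cdot e_1$ is non-zero and substitute. For $B_1, B_2, B_3$ only $e_2 \cdot e_1 = -e_1$ contributes off-diagonally, giving $[e_1, e_2] = e_1$; for $B_4$ only $e_1 \cdot e_2 = e_1$ contributes, again giving $[e_1, e_2] = e_1$; for $B_5$ both $e_1 \cdot e_2 = l e_1$ and $e_2 \cdot e_1 = (l-1) e_1$ contribute but cancel to $[e_1, e_2] = e_1$ independently of $l$; finally for $B_6$ only $e_1 \cdot e_2 = e_2$ contributes off-diagonally, giving $[e_1, e_2] = e_2$.

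There is no real obstacle, since each computation is a one-line substitution; the only small point to flag is that in $B_5$ the two contributions $l e_1$ and $(l-1) e_1$ both depend on the parameter $l$ but their difference is the parameter-free vector $e_1$, so the resulting Lie algebra structure is indeed independent of $l$. Similarly, the parameter $k$ in $B_3$ appears only in $e_2 \cdot e_2$ and drops out entirely from the commutator. I would conclude by noting that the six listed brackets are precisely the outputs of these six substitutions, completing the proof.
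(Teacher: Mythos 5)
Your proposal is correct and follows essentially the same route as the paper: the paper computes the commutator $[e_1,e_2]=e_1\cdot e_2-e_2\cdot e_1$ explicitly for $B_1$ and dispatches $B_2,\dots,B_6$ with ``similarly,'' while you carry out all six substitutions (and correctly note that the parameters $k$ and $l$ drop out). No gaps.
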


By Proposition \ref{sub lie 2}, ${\frak g}(B_1)$ equals ${\frak g}(B_2)$, ${\frak g}(B_3)$, ${\frak g}(B_4)$, and ${\frak g}(B_5)$; denote this common sub-adjacent Lie algebra by ${\frak g_1}$. Keep the notation ${\frak g}(B_6)$.

\begin{proposition}\label{sub lie 3}
Let ${\frak g}(D)$ be a non-trivial sub-adjacent Lie algebra of $3$-dimensional associative algebras~$(D,\cdot)$. Then ${\frak g}(D)$ has a basis $\left\{e_{1}, e_{2},e_{3}\right\}$ in which the non-zero products are described by one of the following:
\begin{eqnarray*}
  &&{\frak g}(D_1):~[e_1,e_2]=e_3;\quad\quad{\frak g}(D_2):[e_1,e_2]=e_3;\quad\quad{\frak g}(D_3):~[e_1,e_2]=e_3;\\
  && {\frak g}(D_4):~ [e_2,e_3]=-e_2;\quad\quad{\frak g}(D_5):~[e_2,e_3]=e_2;\quad\quad{\frak g}(D_6): [e_2,e_3]=-e_2; \\
  && {\frak g}(D_7): ~[e_2,e_3]=e_2;\quad\quad{\frak g}(D_8): [e_2,e_3]=-e_2;\quad\quad{\frak g}(D_9):~[e_2,e_3]=-e_2;\\
  &&{\frak g}(D_{10}): [e_1,e_3]=-e_1,~[e_2,e_3]=-e_2;\quad\quad{\frak g}(D_{11}): ~[e_1,e_3]=e_1,~[e_2,e_3]=e_2;\\
  &&{\frak g}(D_{12}): [e_1,e_3]=-e_1,~[e_2,e_3]=e_2.
\end{eqnarray*}
\end{proposition}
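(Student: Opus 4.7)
The plan is to apply the commutator formula \eqref{commutator} to each of the twelve non-commutative associative algebras $(D_i,\cdot)$ given in \prref{non-commutative 3 dim pre-Lie algebra}, exactly as was done for the 2-dimensional case in \prref{sub lie 2}. For each fixed $i$, I would take the basis $\{e_1,e_2,e_3\}$ and compute the six possibly non-zero Lie brackets $[e_1,e_2]$, $[e_1,e_3]$, $[e_2,e_3]$ from the multiplication table of $D_i$; the remaining brackets are either zero or determined by antisymmetry. Since each $D_i$ has only a handful of non-zero products in the multiplication table, each of these computations reduces to at most two subtractions.

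Concretely, I would organize the work by three regimes. First, for $D_1,D_2,D_3$, whose products land in $e_3$ and involve only $e_1,e_2$ on the left, the only non-trivial commutator is $[e_1,e_2]$; a direct check (e.g.\ $[e_1,e_2]=e_1\cdot e_2-e_2\cdot e_1=\tfrac{1}{2}e_3-(-\tfrac{1}{2}e_3)=e_3$ for $D_1$) yields $[e_1,e_2]=e_3$ in each case, and for $D_3$ the symmetric terms $e_1\cdot e_1$ and $e_2\cdot e_2$ cancel under antisymmetrization, so the bracket is still $e_3$. Second, for $D_4$ through $D_9$, whose products pair $e_2$ or $e_3$ with $e_3$ and whose diagonal $e_i\cdot e_i$ terms again cancel in the commutator, I would verify that the single surviving bracket is $[e_2,e_3]=\pm e_2$ with sign determined by whether $e_3\cdot e_2$ or $e_2\cdot e_3$ appears in the multiplication table. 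Third, for $D_{10},D_{11},D_{12}$, two of the brackets among $\{[e_1,e_3],[e_2,e_3]\}$ survive, and the signs are again read off directly from whether $e_3$ acts on the left or the right.

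I expect no real obstacle: the proof is a routine bookkeeping exercise, and the only thing to be careful about is sign conventions, particularly for $D_{12}$ where $e_3\cdot e_1=e_1$ but $e_2\cdot e_3=e_2$, giving the asymmetric signs $[e_1,e_3]=-e_1$ and $[e_2,e_3]=e_2$. Since the author has already carried out the analogous computation explicitly for $B_1$ in the proof of \prref{sub lie 2}, I would present one or two representative computations in full (say $D_1$ and $D_{12}$) and then state that the remaining ten cases follow by identical direct calculation.
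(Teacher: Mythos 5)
Your proposal is correct and takes essentially the same route as the paper: the paper's proof of this proposition simply defers to the method of Proposition \ref{sub lie 2}, namely computing $[e_i,e_j]=e_i\cdot e_j-e_j\cdot e_i$ directly from each multiplication table. Your spot checks (e.g.\ $[e_1,e_2]=e_3$ for $D_1$ and the asymmetric signs $[e_1,e_3]=-e_1$, $[e_2,e_3]=e_2$ for $D_{12}$) all agree with the stated brackets.
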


By Proposition \ref{sub lie 3}, ${\frak g}(D_1) = {\frak g}(D_2) = {\frak g}(D_3)$, denoted ${\frak g_2}$. ${\frak g}(D_4) = {\frak g}(D_6) = {\frak g}(D_8) = {\frak g}(D_9)$, denoted ${\frak g_3}$. ${\frak g}(D_5) = {\frak g}(D_7)$, denoted ${\frak g_4}$. Keep the notations $\mathfrak{g}(D_{10})$, $\mathfrak{g}(D_{11})$ and $\mathfrak{g}(D_{12})$.

\begin{remark}
The sub-adjacent Lie algebras of the pre-Lie algebras~$(A,\cdot)$~and~$(C,\cdot)$~are trivial. Therefore, all linear maps on their sub-adjacent Lie algebras are Rota-Baxter operators of weight zero.
\end{remark}

\subsection{Rota-Baxter operators on sub-adjacent Lie algebras}
We use the $2\times 2$ matrix ~$R_{{\frak g}(A)}^j$~ to represent the ~$j$-th Rota-Baxter operator of weight zero on the sub-adjacent Lie algebra ~${\frak g}(A)$. The following results are obtained using a method similar to that in \textbf{Step 3}.
\begin{proposition}\label{thm:g1}
The Rota-Baxter operators of weight zero on ${\frak g_1}$, derived from the Nijenhuis operator $N_B$, are as follows:
\begin{center}
{\footnotesize \begin{tabular}{ll}\toprule[1.5pt]\specialrule{0em}{2pt}{2pt} {\normalsize Nijenhuis operators on $B$} & { \normalsize Rota-Baxter operators of weight zero on ${\frak g_1}$}
\\\midrule
$N_{B_1}^1,~N_{B_5}^1$ & $R_{\frak g_1}^1=\left(\begin{array}{cc}
  0 & 0 \\
  n_{21} & 0 \\
\end{array}\right)
$
\\\specialrule{0em}{1pt}{2pt}
$N_{B_3}^2$&$R_{\frak g_1}^2=0$
\\\specialrule{0em}{2pt}{2pt}
 $N_{B_2}^1,~N_{B_4}^1$ & $R_{\frak g_1}^3=\{ N|N^2=0\}$
\\\specialrule{0em}{1pt}{2pt}
$N_{B_3}^1$ & $R_{\frak g_1}^4=\left(\begin{array}{cc}
  0 & 0 \\
  n_{21} & 0 \\
\end{array}\right),~ n_{21}\neq0
$
\\\bottomrule[1.5pt]
\end{tabular}}
\end{center}
\end{proposition}

\begin{remark}
   The Rota-Baxter operators of weight zero on the sub-adjacent Lie algebra~${{\frak g}(B_{6})}$ derived by Nijenhuis operators $N_B$ is~0.
\end{remark}

\begin{proposition}\label{proof of g2}
The matrices of Rota-Baxter operators of weight zero on $\frak g_2$, induced by Nijenhuis operators $N$, are as follows:

\begin{center}
{\footnotesize \begin{tabular}{ll}\toprule[1.5pt]\specialrule{0em}{2pt}{2pt} {\normalsize Nijenhuis operators on $D$} & {\normalsize Rota-Baxter operators of weight zero on~${\frak g_2}$~}
\\\midrule
 $N_{D_1}^1$
& $R_{\frak g_2}^1=
\left(\begin{array}{ccc}0 & 0 &0\\
  n_{21} & 0&n_{23}\\
  0& 0& 0 \\
\end{array}\right)$\\\specialrule{0em}{1pt}{1pt}&
$R_{\frak g_2}^2=
\left(\begin{array}{ccc}0 & 0 &n_{13}\\
  0 & 0&n_{23}\\
  0& 0& 0 \\
\end{array}\right),~n_{13}\neq0$
\\\specialrule{0em}{2pt}{2pt}
$N_{D_1}^2,~N_{D_2}^2$
&$R_{\frak g_2}^3=
\left(\begin{array}{ccc}0 & n_{12} &n_{13}\\
  0 & 0&0\\
  0& 0& 0 \\
\end{array}\right),~n_{12}\neq0$
\\\specialrule{0em}{2pt}{2pt}
$N_{D_1}^3$&$R_{\frak g_2}^4=
\left(\begin{array}{ccc}n_{11} & n_{12} &-\frac{n_{12}n_{23}}{n_{11}}\\
 -\frac{{n_{11}}^2}{n_{12}} & -n_{11}&n_{23}\\
  0& 0& 0 \\
\end{array}\right),~ n_{11}, n_{12}\neq0$
\\\specialrule{0em}{2pt}{2pt}
$N_{D_2}^1$
&$R_{\frak g_2}^5=\left(\begin{array}{ccc}
0 & 0 &0\\
  n_{21} & 0&n_{23}\\
  0& 0& 0 \\
\end{array}\right),~ n_{21}\neq0 $
\\\specialrule{0em}{2pt}{2pt}
$N_{D_2}^3,~N_{D_3}^3$&$R_{\frak g_2}^6=
\left(\begin{array}{ccc}
0 & 0 &n_{13}\\
0 & 0&n_{23}\\
0& 0& 0 \\
\end{array}\right)$
\\\specialrule{0em}{2pt}{2pt}
$N_{D_3}^1$&$R_{\frak g_2}^7=
\left(\begin{array}{ccc}
k_1 n_{12} & n_{12} &n_{13}\\
-{k_1}^2n_{12} & -k_1n_{12}&-k_1n_{13}\\
0& 0& 0 \\
\end{array}\right),~\begin{array}{c}
k_1=\frac{(-1+\sqrt{1-4\lambda})}{2}, \\
n_{12},\lambda \neq0
\end{array}$
\\\specialrule{0em}{2pt}{2pt}
$N_{D_3}^2$&$R_{\frak g_2}^8=
\left(\begin{array}{ccc}
k_2 n_{12} & n_{12} &n_{13}\\
-{k_2}^2n_{12} & -k_2n_{12}&-k_2n_{13}\\
0& 0& 0 \\
\end{array}\right),~\begin{array}{c}
k_2=\frac{(-1-\sqrt{1-4\lambda})}{2}, \\
n_{12},\lambda \neq0
\end{array}$
\\\bottomrule[1.5pt]
\end{tabular}}
\end{center}
\end{proposition}

\begin{proof}
The Nijenhuis operators $N_{D_1}^1$ on the pre-Lie algebra $(D_1,\cdot)$ are
  $$N_{D_{1}}^1(e_1)=n_{11}e_1+  n_{13}e_3 ,~~ N_{D_{1}}^1(e_2)= n_{21}e_1+ n_{22}e_2+ n_{23}e_3 ,~~ N_{D_{1}}^1(e_3)=  n_{11}e_3.$$
 Then we have
  \begin{eqnarray*}
    ({N_{D_1}^1})^2(e_1) &=&{n_{11}}^2e_1+n_{11}n_{13}e_2,\\
    ({N_{D_1}^1})^2(e_2)&=& n_{21}(n_{11}+n_{22})e_1+ {n_{22}}^2e_2+(n_{13}n_{21}+n_{22}n_{23}+n_{11}n_{23})e_3, \\
    ({N_{D_1}^1})^2(e_3) &=& {n_{11}}^2e_3.
  \end{eqnarray*}
   If  $({N_{D_1}^1})^2=0$,~then~$n_{11}=0$,~$n_{22}=0$,~$n_{13}n_{21}=0$.

 \textbf{Case~1}: If~$n_{13}=0$,
by Proposition~\ref{Ni ope and rb ope}~and~\ref{rb on pre-Lie and Lie},  we obtain the Rota-Baxter operator~$R_{\frak g_2}^1$.

\textbf{Case~2}: If~$n_{13}\neq0$,
~then~$n_{21}=0$, 
 yielding the Rota-Baxter operator $R_{\frak g_2}^2$.
All other Rota-Baxter operators are obtained similarly. This completes the proof.
\end{proof}

The following Propositions \ref{thm:g3}, Propositions \ref{thm:g4}, and Propositions \ref{thm:g5} are direct applications of Theorem \ref{Ni ope on 3 non c a } and follow the same approach as in the proof of Proposition~\ref{proof of g2}.

\begin{proposition}\label{thm:g3}
The Rota-Baxter operators of weight zero on ${\frak g_3}$, induced by Nijenhuis operators $N$, are as follows:
\begin{center}
{\footnotesize \begin{tabular}{ll}\toprule[1.5pt]\specialrule{0em}{2pt}{2pt} {\normalsize
Nijenhuis operators on $D$} & {\normalsize Rota-Baxter operators of weight zero on~${\frak g_3}$~}
\\\midrule
$
N_{D_4}^1,~N_{D_6}^1,~N_{D_8}^1,~N_{D_9}^1$&$
R_{\frak g_3}^1=\left(\begin{array}{ccc}0 & 0&0\\
0 & 0&0\\
  0&  n_{32}&0 \\
\end{array}\right)$\\\specialrule{0em}{1pt}{1pt}
&$R_{\frak g_3}^2=
\left(\begin{array}{ccc}0 & 0&0\\
0 & n_{22}&n_{23}\\
  0&  -\frac{{n_{22}}^2}{n_{23}}&-n_{22} \\
\end{array}\right),~ n_{23}\neq0$
\\\specialrule{0em}{2pt}{2pt}
$
N_{D_4}^2,~N_{D_8}^2$&$
R_{\frak g_3}^3=\left(\begin{array}{ccc}
0 & 0 &0\\
  0 & 0&0\\
 n_{31}& n_{32}&0
\end{array}\right),~ n_{31}\neq0$
\\\specialrule{0em}{2pt}{2pt}
$
N_{D_4}^3,~N_{D_8}^3$&$
R_{\frak g_3}^4=\left(\begin{array}{ccc}
0 & 0 &0\\
  n_{21} & 0&0\\
 n_{31}& 0& 0
\end{array}\right),~ n_{21}\neq0$
\\\specialrule{0em}{2pt}{2pt}
$
N_{D_4}^4,~N_{D_6}^2,~N_{D_8}^4,~N_{D_9}^4$&$
R_{\frak g_3}^5=\left(\begin{array}{ccc}
0 & n_{12} &0\\
  0 & 0&0\\
 0& n_{32}&0
\end{array}\right),~n_{12}\neq0$
\\\specialrule{0em}{2pt}{2pt}
$
N_{D_6}^3$&$
R_{\frak g_3}^6=\left(\begin{array}{ccc}
 0 & 0 &0\\
  n_{21} & -n_{31}&n_{21}\\
 n_{31}& -\frac{{n_{31}}^2}{n_{21}}&  n_{31}
\end{array}\right),~n_{21}\neq0$
\\\specialrule{0em}{2pt}{2pt}
$N_{D_9}^3$&$R_{\frak g_3}^7=\left(\begin{array}{ccc}
0& 0 &0\\
  n_{21}& -n_{33}&-n_{21}\\
   -n_{33}&  \frac{{n_{33}}^2}{n_{21}}& n_{33}
\end{array}\right),~n_{21}\neq0$
\\\bottomrule[1.5pt]
\end{tabular}}
\end{center}
\end{proposition}

\begin{proposition}\label{thm:g4}
The Rota-Baxter operators of weight zero on ${\frak g_4}$, derived from Nijenhuis operators $N$, are as follows:
\begin{center}
{\footnotesize \begin{tabular}{ll}\toprule[1.5pt]\specialrule{0em}{2pt}{2pt}{\normalsize
Nijenhuis operators on $D$} & {\normalsize Rota-Baxter operators of weight zero on~${\frak g_4}$~}
\\\midrule
$
N_{D_5}^1,~N_{D_7}^1$&$
R_{\frak g_4}^1=\left(\begin{array}{ccc}0 & 0&0\\
0 & 0&0\\
  0&  n_{32}&0 \\
\end{array}\right)$\\\specialrule{0em}{1pt}{1pt}&
$R_{\frak g_4}^2=
\left(\begin{array}{ccc}0 & 0&0\\
0 & n_{22}&n_{23}\\
  0&  -\frac{{n_{22}}^2}{n_{23}}&-n_{22} \\
\end{array}\right),~ n_{23}\neq0$
\\\specialrule{0em}{2pt}{2pt}
$
N_{D_5}^2$&$
R_{\frak g_4}^3=\left(\begin{array}{ccc}
0 & 0 &0\\
  0 & 0&0\\
 n_{31}& n_{32}&0
\end{array}\right),~n_{31}\neq0$
\\\specialrule{0em}{2pt}{2pt}
$
N_{D_5}^3$&$
R_{\frak g_4}^4=\left(\begin{array}{ccc}
0 & 0 &0\\
  n_{21} & 0&0\\
 n_{31}& 0& 0
\end{array}\right),~n_{21}\neq0$
\\\specialrule{0em}{2pt}{2pt}
$
N_{D_5}^4,~N_{D_7}^2$&$
R_{\frak g_4}^5=\left(\begin{array}{ccc}
0 & n_{12} &0\\
  0 & 0&0\\
 0& n_{32}&0
\end{array}\right),~n_{12}\neq0$
\\\specialrule{0em}{2pt}{2pt}
$
N_{D_7}^3$&$
R_{\frak g_4}^4=\left(\begin{array}{ccc}
 0 & 0 &0\\
  n_{21} & -n_{31}&n_{21}\\
 n_{31}& -\frac{{n_{31}}^2}{n_{21}}&  n_{31}
\end{array}\right),~n_{21}\neq0$
\\\bottomrule[1.5pt]
\end{tabular}}
\end{center}
\end{proposition}

\begin{remark}
 All linear maps on~$(D_{10},\cdot)$~and~$(D_{11},\cdot)$, whose squares are 0, are Rota-Baxter operators of weight zero on the sub-adjacent Lie algebras~${{\frak g}(D_{10})}$~and~${{\frak g}(D_{11})}$.
\end{remark}

\begin{proposition}\label{thm:g5}
The Rota-Baxter operators of weight zero on ${\frak g}(D_{12})$ induced by Nijenhuis operators $N$ are as follows:
\begin{center}
{\footnotesize \begin{tabular}{ll}\toprule[1.5pt]\specialrule{0em}{2pt}{2pt} {\normalsize
Nijenhuis operators on $D$} & {\normalsize Rota-Baxter operators of weight zero on ${\frak g}(D_{12})$}
\\\midrule
$
N_{D_{12}}^1$&$
R_{{\frak g}_{(D_{12})}}^1=\left(\begin{array}{ccc}
 0 & 0 &0\\
  0 & 0&0\\
  n_{31} & n_{32}&0
\end{array}\right)$\\\specialrule{0em}{2pt}{2pt}
$
N_{D_{12}}^2$&$
R_{{\frak g}_{(D_{12})}}^2=\left(\begin{array}{ccc}
 0 & 0 &0\\
  n_{21} &0&0\\
  n_{31} & 0& 0
\end{array}\right),~n_{21}\neq0$
\\\specialrule{0em}{2pt}{2pt}
$
N_{D_{12}}^3$&$
R_{{\frak g}_{(D_{12})}}^3=\left(\begin{array}{ccc}
  0 & n_{12} &0\\
  0 & 0&0\\
  0 & n_{32}&0
\end{array}\right),~n_{12}\neq0$
\\\specialrule{0em}{2pt}{2pt}
$
N_{D_{12}}^4$&$
R_{{\frak g}_{(D_{12})}}^4=\left(\begin{array}{ccc}
 0 & 0 &0\\
  n_{21} & -n_{33}&n_{23}\\
  \frac{n_{21}n_{33}}{n_{23}} &-\frac{{n_{33}}^2}{n_{23}}& n_{33}
\end{array}\right),~n_{23}\neq0$
\\\specialrule{0em}{2pt}{2pt}
$
N_{D_{12}}^5$&$
R_{{\frak g}_{(D_{12})}}^5=\left(\begin{array}{ccc}
  -n_{33} & n_{12} &n_{13} \\
  0 & 0&0\\
  -\frac{{n_{33}}^2}{n_{13}}& \frac{n_{12}n_{33}}{n_{13}}& n_{33}
\end{array}\right),~n_{13}\neq0$
\\\bottomrule[1.5pt]
\end{tabular}}
\end{center}
\end{proposition}
\begin{remark}
There are the same Nijenhuis operators on different pre-Lie algebras.~For example,~$(1)~N_{D_{1}}^2$ is equal to $N_{D_{2}}^2$.~$(2)~N_{D_{4}}^1$ is equal to $N_{D_{5}}^1$,~$N_{D_{6}}^1$,~$N_{D_{7}}^1$,~$N_{D_{8}}^1$,~$N_{D_{9}}^1$.
~$(3)~N_{D_{4}}^2$ is equal to $N_{D_{5}}^2$,~$N_{D_{8}}^2$.
~$(4)~N_{D_{4}}^3$ is equal to $N_{D_{5}}^3$,~$N_{D_{8}}^3$.~$(5)~N_{D_{4}}^4$ is equal to $N_{D_{5}}^4$,~$N_{D_{8}}^4$.
~$(6)~N_{D_{6}}^2$ is equal to $N_{D_{7}}^2$,~$N_{D_{9}}^4$.
~$(7)~N_{D_{6}}^3$ is equal to $N_{D_{7}}^3$.
The sub-adjacent Lie algebras~${\frak g}(D_{4})$~and~${\frak g}(D_{12})$ are different,~but the Rota-Baxter operators derived by~$N_{D_{4}}^4$~and~$N_{D_{12}}^3$~are the same.
Therefore,~the Rota-Baxter operators of weight zero on the different sub-adjacent Lie algebras derived by Nijenhuis operators may be identical.

\end{remark}

\subsection{Solutions of CYBE on Lie algebras~${\frak g}(A) \ltimes_{{\rm ad}^{\ast}} {\frak g}(A)^{\ast}$}

\begin{proposition}\label{3-dim sol of cybe}
Let $\{e_{1}, e_{2}, e_{3}\}$ be a basis of ${\frak g_2}$ and $\{e^{\ast}_{1}, e^{\ast}_{2}, e^{\ast}_{3}\}$ be its dual basis. Then the Lie algebra ${\frak g_2} \ltimes_{{\rm ad}^{\ast}} {\frak g_2}^{\ast}$ has a basis $\{e_1,e_2,e_3,e_1^{\ast},e_2^{\ast},e^{\ast}_{3}\}$, with the following non-zero Lie bracket:  
$$
[e_1,e_2]=e_3,\quad [e_1,e^{\ast}_{3}]=-e^{\ast}_{2},\quad [e_2,e^{\ast}_{3}]=e^{\ast}_{1}.
$$  
The classical Yang–Baxter equation on this Lie algebra has solutions as follows:
\begin{itemize}
\item[] \hskip-0.5cm  $r_1= (n_{21}e_1+n_{23}e_2)\otimes e_2^{\ast}-e_2^{\ast}\otimes(n_{21}e_1+n_{23}e_2) .$
\item[]\hskip-0.5cm  $ r_2 = (n_{11}e_1+n_{12}e_2-\frac{n_{12}n_{23}}{n_{11}}e_3) \otimes e_1^{\ast}+(-\frac{{n_{11}}^2}{n_{12}}e_1-n_{11}e_2+n_{23}e_3) \otimes e_2^{\ast}\\
   - e_1^{\ast} \otimes (n_{11}e_1+n_{12}e_2-\frac{n_{12}n_{23}}{n_{11}}e_3)- e_2^{\ast}\otimes (-\frac{{n_{11}}^2}{n_{12}}e_1-n_{11}e_2+n_{23}e_3)~n_{11},n_{12}\neq0.$
\item[]\hskip-0.5cm  $ r_3=(n_{12}e_2+n_{13}e_3)\otimes e_1^{\ast}-e_1^{\ast}\otimes(n_{12}e_2+n_{13}e_3),~n_{12}\neq0.$
\item[]\hskip-0.5cm  $r_4=n_{13}e_3 \otimes e_1^{\ast}+n_{23}e_3 \otimes e_2^{\ast} - e_1^{\ast} \otimes n_{13}e_3- e_2^{\ast}\otimes n_{23}e_3.$
\item[]\hskip-0.5cm  $ r_5 = (\frac{(-1+\sqrt{1-4\lambda})}{2}n_{12}e_1+n_{12}e_2+n_{13}e_3) \otimes e_1^{\ast}-\frac{(-1+\sqrt{1-4\lambda})}{2}(\frac{(-1+\sqrt{1-4\lambda})
    }{2}n_{12}e_1+n_{12}e_2+n_{13}e_3) \otimes e_2^{\ast}\\
   - e_1^{\ast} \otimes (\frac{(-1+\sqrt{1-4\lambda})}{2}n_{12}e_1+n_{12}e_2+n_{13}e_3)+ e_2^{\ast}\otimes \frac{(-1+\sqrt{1-4\lambda})}{2}(\frac{(-1+\sqrt{1-4\lambda})}{2}n_{12}e_1
   +n_{12}e_2+n_{13}e_3),~n_{12},\lambda \neq0.$
\item[]\hskip-0.5cm  $ r_6 = (\frac{(-1-\sqrt{1-4\lambda})}{2}n_{12}e_1+n_{12}e_2+n_{13}e_3) \otimes e_1^{\ast}-\frac{(-1-\sqrt{1-4\lambda})}{2}(\frac{(-1-\sqrt{1-4\lambda})
    }{2}n_{12}e_1+n_{12}e_2+n_{13}e_3) \otimes e_2^{\ast}\\
   - e_1^{\ast} \otimes (\frac{(-1-\sqrt{1-4\lambda})}{2}n_{12}e_1+n_{12}e_2+n_{13}e_3)+ e_2^{\ast}\otimes \frac{(-1-\sqrt{1-4\lambda})}{2}(\frac{(-1-\sqrt{1-4\lambda})}{2}n_{12}e_1
   +n_{12}e_2+n_{13}e_3),~n_{12},\lambda\neq0.$
\end{itemize}
\end{proposition}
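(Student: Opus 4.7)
The plan is to mirror the proof of Proposition~\ref{pro:CYBEg1} step for step, adapted to the three-dimensional setting. First I would pin down the Lie algebra structure of ${\frak g_2}\ltimes_{\ad^{\ast}}{\frak g_2}^{\ast}$ by computing the coadjoint action of ${\frak g_2}$ on ${\frak g_2}^{\ast}$. Since ${\frak g_2}$ has the single non-trivial bracket $[e_1,e_2]=e_3$, the defining identity $\langle\ad_x^{\ast}(\xi),y\rangle=-\langle\xi,[x,y]\rangle$ forces $\ad_{e_3}^{\ast}\equiv 0$ and reduces everything to six easy pairings on the dual basis; the only non-vanishing coadjoint images turn out to be $\ad_{e_1}^{\ast}(e_3^{\ast})=-e_2^{\ast}$ and $\ad_{e_2}^{\ast}(e_3^{\ast})=e_1^{\ast}$. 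Plugging these into the semidirect product formula \eqref{Lie algebra structure} produces precisely the three non-zero brackets $[e_1,e_2]=e_3$, $[e_1,e_3^{\ast}]=-e_2^{\ast}$, $[e_2,e_3^{\ast}]=e_1^{\ast}$ claimed in the statement.

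Second, for each Rota-Baxter operator $R_{\frak g_2}^{j}$ of weight zero listed in Proposition~\ref{proof of g2}, I would invoke Proposition~\ref{element in G+V} with $V={\frak g_2}$ to rewrite the operator as the tensor
\[
R_{\frak g_2}^{j}\;=\;\sum_{i=1}^{3}R_{\frak g_2}^{j}(e_i)\otimes e_i^{\ast}\;\in\;{\frak g_2}\otimes{\frak g_2}^{\ast}\;\subset\;\bigl({\frak g_2}\ltimes_{\ad^{\ast}}{\frak g_2}^{\ast}\bigr)^{\otimes 2},
\]
with the vectors $R_{\frak g_2}^{j}(e_i)\in{\frak g_2}$ read straight off the $i$-th row of the matrix supplied by Proposition~\ref{proof of g2}. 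Lemma~\ref{Bai:classical} then delivers a skew-symmetric solution of the CYBE for free by setting $r_{j}:=R_{\frak g_2}^{j}-(R_{\frak g_2}^{j})^{21}$; the skew-symmetry is built into the construction and the Yang-Baxter identity is exactly the equivalence asserted by the lemma.

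The remaining task is pure enumeration: I would run through the eight matrices of Proposition~\ref{proof of g2}, form the associated tensor $R_{\frak g_2}^{j}$ and its skew-symmetrisation $r_{j}$, and compare the result with $r_{1},\ldots,r_{6}$ in the statement. I expect the only genuine subtlety to be bookkeeping, in two respects. First, several of the eight Rota-Baxter matrices produce the same tensor up to relabelling of parameters (for instance the collapse of $R_{\frak g_2}^{5}$ and $R_{\frak g_2}^{6}$ into a common $r_{4}$), which is what turns the list of eight operators into the list of six solutions. Second, the coefficients $k_{1}=\tfrac{-1+\sqrt{1-4\lambda}}{2}$ and $k_{2}=\tfrac{-1-\sqrt{1-4\lambda}}{2}$ appearing in $R_{\frak g_2}^{7}$ and $R_{\frak g_2}^{8}$ have to be tracked carefully so that the second row $-k_{i}^{2}n_{12},-k_{i}n_{12},-k_{i}n_{13}$ is recognised as $-k_{i}$ times the first row, which is what produces the clean factorisation visible in $r_{5}$ and $r_{6}$. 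Once these identifications are made, verifying the six formulae is a matter of one tensor calculation per solution.
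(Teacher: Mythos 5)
Your proposal follows the paper's proof essentially verbatim: the paper likewise computes $\ad^{\ast}$ pairing-by-pairing against the dual basis to obtain the three non-zero brackets of ${\frak g_2}\ltimes_{{\rm ad}^{\ast}}{\frak g_2}^{\ast}$, rewrites each Rota-Baxter matrix from Proposition \ref{proof of g2} as a tensor in ${\frak g_2}\otimes{\frak g_2}^{\ast}$ via Proposition \ref{element in G+V}, and skew-symmetrizes via Lemma \ref{Bai:classical}, treating the remaining cases as identical bookkeeping. (One small slip in your illustrative aside: it is $R^{1}_{{\frak g}_2}$ and $R^{5}_{{\frak g}_2}$ that merge into $r_1$, and $R^{2}_{{\frak g}_2}$ and $R^{6}_{{\frak g}_2}$ that merge into $r_4$, not $R^{5}_{{\frak g}_2}$ and $R^{6}_{{\frak g}_2}$.)
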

\begin{proof}
Let $\rho=\ad^{\ast}$.~By \eqref{dual rep},
the dual representation~$\ad^{\ast}: {\frak g_2} \rightarrow \gl({\frak g_2}^{\ast})$~is defined by
\begin{equation}
\langle \ad_x^*(\xi),y \rangle=-\langle\xi, \ad_x(y)\rangle, \;\;
\forall~x, y\in \g_2,~\xi \in {\frak g_2}^*.
\end{equation}
By (\ref{Lie algebra structure}) we have
\begin{equation*}
[e_1,e_1]=0,\quad[e_1,e_2]=-[e_2,e_1]=e_3,\quad[e_1,e_3]=-[e_3,e_1]=0,
\end{equation*}
\begin{equation*}
[e_2,e_2]=0,\quad[e_2,e_3]=-[e_3,e_2]=0,\quad[e_3,e_3]=0,
\end{equation*}
\begin{equation*}
[e^{\ast}_{1},e^{\ast}_{1}]=0,
\quad
[e^{\ast}_{1},e^{\ast}_{2}]=-[e^{\ast}_{2},e^{\ast}_{1}]=0,
\quad[e^{\ast}_{1},e^{\ast}_{3}]=-[e^{\ast}_{3},e^{\ast}_{1}]=0,
\end{equation*}
\begin{equation*}
[e^{\ast}_{2},e^{\ast}_{2}]=0,\quad
[e^{\ast}_{2},e^{\ast}_{3}]=-[e^{\ast}_{3},e^{\ast}_{2}]=0,
\quad[e^{\ast}_{3},e^{\ast}_{3}]=0,
\end{equation*}
\begin{equation*}
[e_1,e^{\ast}_{1}]=-[e^{\ast}_{1},e_1]=\ad_{e_1}^*e^{\ast}_{1},\quad
[e_1,e^{\ast}_{2}]=-[e^{\ast}_{2},e_1]=\ad_{e_1}^*e^{\ast}_{2},\quad
[e_1,e^{\ast}_{3}]=-[e^{\ast}_{3},e_1]=\ad_{e_1}^*e^{\ast}_{3},
\end{equation*}
\begin{equation*}
[e_2,e^{\ast}_{1}]=-[e^{\ast}_{1},e_2]=\ad_{e_2}^*e^{\ast}_{1},\quad
[e_2,e^{\ast}_{2}]=-[e^{\ast}_{2},e_2]=\ad_{e_2}^*e^{\ast}_{2},\quad
[e_2,e^{\ast}_{3}]=-[e^{\ast}_{3},e_2]=\ad_{e_2}^*e^{\ast}_{3},
\end{equation*}
\begin{equation*}
[e_3,e^{\ast}_{1}]=-[e^{\ast}_{1},e_3]=\ad_{e_3}^*e^{\ast}_{1},\quad
[e_3,e^{\ast}_{2}]=-[e^{\ast}_{2},e_3]=\ad_{e_3}^*e^{\ast}_{2}.\quad
[e_3,e^{\ast}_{3}]=-[e^{\ast}_{3},e_3]=\ad_{e_3}^*e^{\ast}_{3}.
\end{equation*}
Set~$\ad_{e_1}^*(e^{\ast}_{1})
=k_{11}e^{\ast}_{1}+k_{12}e^{\ast}_{2}+k_{13}e^{\ast}_{3},~k_{11},~k_{12},~k_{13}\in \mathbb{C}.$
Then
\begin{equation*}
\langle \ad_{e_1}^*(e^{\ast}_{1}),{e_1} \rangle=-\langle e^{\ast}_{1}, \ad_{e_1}(e_{1})\rangle=-\langle e^{\ast}_{1}, 0\rangle=0, \;\;
\end{equation*}
\begin{equation*}
(k_{11}e^{\ast}_{1}+k_{12}e^{\ast}_{2}+k_{13}e^{\ast}_{3})({e_1})=k_{11}.
\end{equation*}
This implies that $k_{11}=0$. From
\begin{equation*}
\langle \ad_{e_1}^*(e^{\ast}_{1}),{e_2} \rangle=-\langle e^{\ast}_{1}, \ad_{e_1}(e_{2})\rangle=-\langle e^{\ast}_{1}, {e_3}\rangle=0, \;\;
\end{equation*}
\begin{equation*}
(k_{11}e^{\ast}_{1}+k_{12}e^{\ast}_{2}+k_{13}e^{\ast}_{3})({e_2})=k_{12},
\end{equation*}
we get~$k_{12}=0$. From
\begin{equation*}
\langle \ad_{e_1}^*(e^{\ast}_{1}),{e_3} \rangle=-\langle e^{\ast}_{1}, \ad_{e_1}(e_{3})\rangle=-\langle e^{\ast}_{1}, 0\rangle=0, \;\;
\end{equation*}
\begin{equation*}
(k_{11}e^{\ast}_{1}+k_{12}e^{\ast}_{2}+k_{13}e^{\ast}_{3})({e_3})=k_{13},
\end{equation*}
we get~$k_{13}=0$.~Thus,~$\ad_{e_1}^*(e^{\ast}_{1})
=0$.
Similarly, we have $\ad_{e_1}^*(e^{\ast}_{2})=0$, $\ad_{e_1}^*(e^{\ast}_{3})=-e^{\ast}_{2}$, $\ad_{e_2}^*(e^{\ast}_{1})=0$, $\ad_{e_2}^*(e^{\ast}_{2})=0$, $\ad_{e_2}^*(e^{\ast}_{3})=e^{\ast}_{1}$, $\ad_{e_3}^*(e^{\ast}_{1})=0$, $\ad_{e_3}^*(e^{\ast}_{2})=0$, and $\ad_{e_3}^*(e^{\ast}_{3})=0$.
Therefore, the non-zero Lie brackets of ${\frak g_2} \ltimes_{{\rm ad}^{\ast}} {\frak g_2}^{\ast}$ are
$$
[e_1,e_2]=e_3,\quad [e_1,e^{\ast}_{3}]=-e^{\ast}_{2},\quad [e_2,e^{\ast}_{3}]=e^{\ast}_{1}.
$$

The Rota-Baxter operators~$R_{\frak g_2}^1$~of weight zero on the sub-adjacent Lie algebra~${\frak g_2}$~are
\begin{equation*}
R_{\frak g_2}^1=
\left(\begin{array}{ccc}0 & 0 &0\\
  n_{21} & 0&n_{23}\\
  0& 0& 0 \\
\end{array}\right).
\end{equation*}
By Proposition \ref{element in G+V}, we have $R_{\frak g_2}^1=(n_{21}e_1+n_{23}e_3) \otimes e_2^{\ast}$~and~$(R_{\frak g_1}^1)^{21}=e_2^{\ast} \otimes (n_{21}e_1+n_{23}e_3)$.
By Lemma \ref{Bai:classical}, the induced solutions of CYBE on the Lie algebra~${\frak g_2} \ltimes_{{\rm ad}^{\ast}} {\frak g_2}^{\ast}$~are
$$r_1 =R_{\frak g_2}^1-(R_{\frak g_2}^1)^{21}= (n_{21}e_1+n_{23}e_3) \otimes e_2^{\ast} - e_2^{\ast} \otimes (n_{21}e_1+n_{23}e_3).$$
The other solutions of CYBE on the Lie algebra~${\frak g_2} \ltimes_{{\rm ad}^{\ast}} {\frak g_2}^{\ast}$ are obtained for the same step. This completes the proof.
\end{proof}

Using a similar proof as in Proposition \ref{3-dim sol of cybe}, the solutions of CYBE on the 6-dimensional Lie algebras ${\frak g_3} \ltimes_{{\rm ad}^{\ast}} {\frak g_3}^{\ast}$, ${\frak g_4} \ltimes_{{\rm ad}^{\ast}} {\frak g_4}^{\ast}$, and ${\frak g}(D_{12}) \ltimes_{{\rm ad}^{\ast}} {\frak g}(D_{12})^{\ast}$ can be obtained by applying Rota-Baxter operators of weight zero on the sub-adjacent Lie algebras from Propositions \ref{thm:g3}, \ref{thm:g4}, and \ref{thm:g5}, as shown in the following Propositions \ref{pro:CYBEg3}, \ref{pro:CYBEg4}, and \ref{pro:CYBED12}.
\begin{proposition}\label{pro:CYBEg3}
Let $\{e_{1}, e_{2}, e_{3}\}$ be a basis of ${\frak g_3}$ and $\{e^{\ast}_{1}, e^{\ast}_{2}, e^{\ast}_{3}\}$ be its dual basis. Then the Lie algebra ${\frak g_3} \ltimes_{{\rm ad}^{\ast}} {\frak g_3}^{\ast}$ has a basis $\{e_1,e_2,e_3,e_1^{\ast},e_2^{\ast},e^{\ast}_{3}\}$, with the following non-zero Lie brackets:  
$$  
[e_2,e_3]=-e_2,\quad [e_2,e^{\ast}_{2}]=e^{\ast}_{3},\quad [e_3,e^{\ast}_{2}]=-e^{\ast}_{2}.  
$$  
The classical Yang-Baxter equation on this Lie algebra has solutions as follows:

\begin{itemize}
\item[] \hskip-0.5cm  $r_1 =n_{32}e_2 \otimes e_3^{\ast}-e_3^{\ast} \otimes n_{32}e_2 .$
\item[] \hskip-0.5cm  $r_2 = (n_{22}e_2+{n_{23}}e_3) \otimes e_2^{\ast}-(\frac{{n_{22}}^2}{n_{23}}e_2+n_{22}e_3) \otimes e_3^{\ast}\\
  -e_2^{\ast} \otimes (n_{22}e_2+{n_{23}}e_3)+ e_3^{\ast}\otimes (\frac{{n_{22}}^2}{n_{23}}e_2+n_{22}e_3),~n_{23}\neq0.$
\item[]\hskip-0.5cm  $ r_3=(n_{31}e_1+n_{32}e_2)\otimes e_3^{\ast}-e_3^{\ast}\otimes(n_{31}e_1+n_{32}e_2),~n_{31}\neq0.$
\item[]\hskip-0.5cm  $r_4=n_{21}e_1 \otimes e_2^{\ast}+n_{31}e_1 \otimes e_3^{\ast}-e_2^{\ast} \otimes n_{21}e_1- e_3^{\ast}\otimes n_{31}e_1,~n_{21}\neq0.$
\item[]\hskip-0.5cm  $r_5 =n_{12}e_2 \otimes e_1^{\ast}+n_{32}e_2 \otimes e_3^{\ast}-e_1^{\ast} \otimes n_{12}e_2- e_3^{\ast}\otimes n_{32}e_2,~n_{12}\neq0.$
\item[]\hskip-0.5cm  $r_6 =(n_{21}e_1-n_{31}e_2+{n_{21}}e_3) \otimes e_2^{\ast}+(n_{31}e_1-\frac{{n_{31}}^2}{n_{21}}e_2+n_{31}e_3) \otimes e_3^{\ast}\\
  -e_2^{\ast} \otimes (n_{21}e_1-n_{31}e_2+{n_{21}}e_3)- e_3^{\ast}\otimes (n_{31}e_1-\frac{{n_{31}}^2}{n_{21}}e_2+n_{31}e_3),~n_{21}\neq0.$
\item[]\hskip-0.5cm  $ r_7 = (n_{21}e_1-n_{33}e_2-{n_{21}}e_3) \otimes e_2^{\ast}+(-n_{33}e_1+\frac{{n_{33}}^2}{n_{21}}e_2+n_{33}e_3) \otimes e_3^{\ast}\\
  -e_2^{\ast} \otimes (n_{21}e_1-n_{33}e_2-{n_{21}}e_3)- e_3^{\ast}\otimes (-n_{33}e_1+\frac{{n_{33}}^2}{n_{21}}e_2+n_{33}e_3),~n_{21}\neq0.$
\end{itemize}
\end{proposition}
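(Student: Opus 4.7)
The plan is to mirror exactly the argument used in Proposition~\ref{3-dim sol of cybe}, but for the Lie algebra $\g_3$ instead of $\g_2$. First I would set $\rho = \ad^{\ast}$ and work out the semidirect product structure on $\g_3 \ltimes_{\ad^{\ast}} \g_3^{\ast}$. By Proposition~\ref{sub lie 3}, the only non-zero bracket in $\g_3$ is $[e_2,e_3]=-e_2$, so from \eqref{Lie algebra structure} all brackets among the $e_i$ and among the $e_i^{\ast}$ vanish except $[e_2,e_3]=-e_2$, and the remaining brackets reduce to computing $\ad_{e_i}^{\ast}(e_j^{\ast})$ via \eqref{dual rep}. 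Writing $\ad_{e_i}^{\ast}(e_j^{\ast}) = \sum_k k_{ij}^{\,k} e_k^{\ast}$ and pairing with each basis vector $e_l$ as in the proof of Proposition~\ref{3-dim sol of cybe}, the only non-trivial pairings come from $\ad_{e_2}(e_3)=-e_2$ and $\ad_{e_3}(e_2)=e_2$. This yields $\ad_{e_2}^{\ast}(e_2^{\ast}) = e_3^{\ast}$ and $\ad_{e_3}^{\ast}(e_2^{\ast}) = -e_2^{\ast}$, with all other coadjoint actions zero, giving the three stated non-zero brackets of $\g_3 \ltimes_{\ad^{\ast}} \g_3^{\ast}$.

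Next, I would run through each of the seven Rota-Baxter operators $R_{\g_3}^{1},\ldots,R_{\g_3}^{7}$ from Proposition~\ref{thm:g3}. By Proposition~\ref{element in G+V}, each such operator, regarded as a linear map $\g_3^{\ast}\to\g_3$ (after identifying the domain with the coefficients in the matrix), lifts to a tensor in $\g_3\otimes\g_3^{\ast}\subset (\g_3 \ltimes_{\ad^{\ast}} \g_3^{\ast})^{\otimes 2}$. Concretely, reading off column $j$ of the matrix $R_{\g_3}^{k}$ gives the image $R_{\g_3}^{k}(e_j^{\ast})\in\g_3$ (equivalently, each nonzero entry $r_{ij}$ contributes the summand $r_{ij}\,e_i\otimes e_j^{\ast}$ to the tensor). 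Then Lemma~\ref{Bai:classical} guarantees that $r_k := R_{\g_3}^{k} - (R_{\g_3}^{k})^{21}$ is a skew-symmetric solution of the CYBE, where $(R_{\g_3}^{k})^{21}$ is obtained by swapping the two tensor factors.

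Finally, I would assemble each $r_k$ explicitly: for instance, $R_{\g_3}^{1}$ is the matrix with the unique non-zero entry $n_{32}$ in position $(2,3)$, so it corresponds to $n_{32}e_2\otimes e_3^{\ast}$, and $r_1 = n_{32}e_2\otimes e_3^{\ast} - e_3^{\ast}\otimes n_{32}e_2$; $R_{\g_3}^{2}$ has non-zero entries in columns $2$ and $3$ giving $(n_{22}e_2+n_{23}e_3)\otimes e_2^{\ast}$ and $(-\tfrac{n_{22}^2}{n_{23}}e_2-n_{22}e_3)\otimes e_3^{\ast}$, whose antisymmetrization yields $r_2$; and so on through $r_7$, each simply read off from the matrix followed by subtraction of the flipped tensor. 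Since Lemma~\ref{Bai:classical} already provides the CYBE identity, no further verification of the Yang-Baxter equation is required.

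The main obstacle is purely organizational: accurately transcribing each of the seven matrices into a tensor in $\g_3\otimes\g_3^{\ast}$ and antisymmetrizing without bookkeeping errors on the parameters $n_{ij}$. The conceptual content is entirely supplied by the preceding framework (coadjoint computation $+$ Proposition~\ref{element in G+V} $+$ Lemma~\ref{Bai:classical}), and after the first case the remaining six are mechanical. Hence the proof can legitimately be shortened, as the authors do, to computing the bracket structure of $\g_3\ltimes_{\ad^{\ast}}\g_3^{\ast}$ in full, working out $r_1$ as a prototype, and stating that $r_2,\ldots,r_7$ are obtained by the same procedure.
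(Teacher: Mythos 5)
Your proposal is correct and follows exactly the route the paper takes: compute the coadjoint action via \eqref{dual rep} to get the three non-zero brackets of ${\frak g_3}\ltimes_{{\rm ad}^{\ast}}{\frak g_3}^{\ast}$, then apply Proposition \ref{element in G+V} and Lemma \ref{Bai:classical} to each operator from Proposition \ref{thm:g3} and antisymmetrize, treating $r_1$ as the prototype. One small caution: the paper's convention \eqref{2 dim map } puts the coefficients of $R(e_i)$ in \emph{row} $i$ (so the entry $n_{32}$ of $R_{\frak g_3}^1$ sits in position $(3,2)$ and yields $R(e_3)=n_{32}e_2$, hence $n_{32}e_2\otimes e_3^{\ast}$), whereas your prose reads off columns; your final tensors are nonetheless the correct ones, so this is only a bookkeeping slip in the description, not in the result.
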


\begin{proposition}\label{pro:CYBEg4}
Let $\{e_{1}, e_{2}, e_{3}\}$ be a basis of ${\frak g_4}$ and  $\{e^{\ast}_{1}, e^{\ast}_{2}, e^{\ast}_{3}\}$ be its dual basis. Then the Lie algebra ${\frak g_4} \ltimes_{{\rm ad}^{\ast}} {\frak g_4}^{\ast}$ has a basis $\{e_1, e_2, e_3, e_1^{\ast}, e_2^{\ast}, e_3^{\ast}\}$, with the non-zero Lie brackets:  
$$  
[e_2, e_3] = e_2, \quad [e_2, e^{\ast}_{2}] = -e^{\ast}_{3}, \quad [e_3, e^{\ast}_{2}] = e^{\ast}_{2}.  
$$  
The classical Yang–Baxter equation on this Lie algebra has solutions as follows:

\begin{itemize}
\item[] \hskip-0.5cm  $r_1 =n_{32}e_2 \otimes e_3^{\ast}-e_3^{\ast} \otimes n_{32}e_2 .$
\item[] \hskip-0.5cm  $r_2=(n_{22}e_2+{n_{23}}e_3) \otimes e_2^{\ast}-(\frac{{n_{22}}^2}{n_{23}}e_2+n_{22}e_3) \otimes e_3^{\ast}\\
    -
e_2^{\ast} \otimes (n_{22}e_2+{n_{23}}e_3)+ e_3^{\ast}\otimes (\frac{{n_{22}}^2}{n_{23}}e_2+n_{22}e_3),~n_{23}\neq0.$
\item[]\hskip-0.5cm  $ r_3=(n_{31}e_1+n_{32}e_2)\otimes e_3^{\ast}-e_3^{\ast}\otimes(n_{31}e_1+n_{32}e_2),~n_{31}\neq0.$
\item[]\hskip-0.5cm  $r_4=n_{21}e_1 \otimes e_2^{\ast}+n_{31}e_1 \otimes e_3^{\ast}-e_2^{\ast} \otimes n_{21}e_1- e_3^{\ast}\otimes n_{31}e_1,~n_{21}\neq0.$
\item[]\hskip-0.5cm  $r_5=n_{12}e_2 \otimes e_1^{\ast}+n_{32}e_2 \otimes e_3^{\ast}-e_1^{\ast} \otimes n_{12}e_2- e_3^{\ast}\otimes n_{32}e_2,~n_{12}\neq0.$
\item[]\hskip-0.5cm  $r_6=(n_{21}e_1-n_{31}e_2+{n_{21}}e_3) \otimes e_2^{\ast}+(n_{31}e_1-\frac{{n_{31}}^2}{n_{21}}e_2+n_{31}e_3) \otimes e_3^{\ast}\\-
e_2^{\ast} \otimes (n_{21}e_1-n_{31}e_2+{n_{21}}e_3)- e_3^{\ast}\otimes (n_{31}e_1-\frac{{n_{31}}^2}{n_{21}}e_2+n_{31}e_3),~n_{21}\neq0.$
\end{itemize}
\end{proposition}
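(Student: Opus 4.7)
The plan is to follow the same three-step template used in the proof of Proposition \ref{3-dim sol of cybe}. First, I would pin down the semidirect-product Lie algebra ${\frak g_4}\ltimes_{\ad^*}{\frak g_4}^*$ by computing the dual representation $\ad^*:{\frak g_4}\to\gl({\frak g_4}^*)$ on the basis $\{e_1^*,e_2^*,e_3^*\}$. Since the only non-zero bracket on ${\frak g_4}$ is $[e_2,e_3]=e_2$, the defining identity $\langle\ad_x^*(\xi),y\rangle=-\langle\xi,[x,y]\rangle$ forces almost every $\ad_{e_i}^*(e_j^*)$ to vanish: the only surviving pairings come from $[e_2,e_3]=e_2$ and its skew counterpart. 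A direct check gives $\ad_{e_2}^*(e_2^*)=-e_3^*$ and $\ad_{e_3}^*(e_2^*)=e_2^*$, and all other $\ad_{e_i}^*(e_j^*)=0$. Inserting these into the semidirect-product bracket \eqref{Lie algebra structure} then produces exactly the three non-zero brackets
\[
[e_2,e_3]=e_2,\qquad [e_2,e_2^*]=-e_3^*,\qquad [e_3,e_2^*]=e_2^*,
\]
as claimed.

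Second, having identified ${\frak g_4}\ltimes_{\ad^*}{\frak g_4}^*$, I would invoke Lemma \ref{Bai:classical}, which says that $r=R-R^{21}$ is a skew-symmetric solution of the CYBE precisely when $R$ is a Rota-Baxter operator of weight zero on ${\frak g_4}$. Proposition \ref{thm:g4} lists all such $R$ (in matrix form) induced by Nijenhuis operators on the 3-dimensional non-commutative associative algebras $(D_5,\cdot)$ and $(D_7,\cdot)$; there are precisely six families $R_{\frak g_4}^1,\dots,R_{\frak g_4}^6$. Via Proposition \ref{element in G+V}, each matrix $R$ is reinterpreted as an element of ${\frak g_4}\otimes{\frak g_4}^*\subset ({\frak g_4}\ltimes_{\ad^*}{\frak g_4}^*)^{\otimes 2}$ by the rule $R=\sum_i R(e_i)\otimes e_i^*$, i.e. by reading off the $i$-th column as a vector in ${\frak g_4}$ tensored with $e_i^*$.

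Third, I would run through the six matrices one at a time, compute $R(e_1),R(e_2),R(e_3)\in{\frak g_4}$, form $R=\sum_i R(e_i)\otimes e_i^*$ and its flip $R^{21}=\sum_i e_i^*\otimes R(e_i)$, and write $r_k=R-R^{21}$. For example, the entry $R_{\frak g_4}^1$ with $R(e_2)=n_{32}e_2$ supported only in the third column yields $r_1=n_{32}e_2\otimes e_3^*-e_3^*\otimes n_{32}e_2$; the other cases are identical in spirit but carry more surviving columns. The main obstacle is purely bookkeeping: keeping track of which matrix entries are non-zero in each of the six Rota-Baxter families (and the parameter constraints like $n_{23}\neq 0$, $n_{21}\neq 0$, etc.) so that the resulting $r_k$ match the list in the statement. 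No new conceptual input is needed once the dual action has been computed, because Lemma \ref{Bai:classical} and Proposition \ref{element in G+V} do all the heavy lifting and the work is parallel to Proposition \ref{3-dim sol of cybe}.
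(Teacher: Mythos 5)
Your proposal is correct and follows essentially the same route as the paper, which proves this result by repeating the argument of Proposition \ref{3-dim sol of cybe}: compute $\ad^*$ to get the three non-zero brackets of ${\frak g_4}\ltimes_{\ad^*}{\frak g_4}^*$, then convert each of the six Rota--Baxter operators of Proposition \ref{thm:g4} into $r=R-R^{21}$ via Proposition \ref{element in G+V} and Lemma \ref{Bai:classical}. One small bookkeeping correction: under the paper's convention $R(e_i)=\sum_j r_{ij}e_j$ it is the $i$-th \emph{row} of the matrix that gives $R(e_i)$ (so for $R_{\frak g_4}^1$ one has $R(e_3)=n_{32}e_2$, not $R(e_2)=n_{32}e_2$), although your stated rule $R=\sum_i R(e_i)\otimes e_i^*$ and the resulting $r_1,\dots,r_6$ are all correct.
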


\begin{proposition} \label{pro:CYBED12}
Let $\{e_{1}, e_{2}, e_{3}\}$ be a basis of ${\frak g}(D_{12})$, and let $\{e^{\ast}_{1}, e^{\ast}_{2}, e^{\ast}_{3}\}$ be its dual basis. Then the Lie algebra ${\frak g}(D_{12}) \ltimes_{{\rm ad}^{\ast}} {\frak g}(D_{12})^{\ast}$ has a basis $\{e_1, e_2, e_3, e_1^{\ast}, e_2^{\ast}, e_3^{\ast}\}$, with the non-zero Lie brackets:  
$$
[e_1,e_3]=-e_1,\quad [e_1,e^{\ast}_{1}]=e^{\ast}_{3},\quad [e_2,e_3]=e_2,\quad [e_2,e^{\ast}_{2}]=-e^{\ast}_{3},\quad [e_3,e^{\ast}_{1}]=-e^{\ast}_{1},\quad [e_3,e^{\ast}_{2}]=e^{\ast}_{2}.
$$  
The classical Yang–Baxter equation on this Lie algebra has solutions as follows:
\begin{itemize}
\item[] \hskip-0.5cm  $r_1=(n_{31}e_1+n_{32}e_2)\otimes e_3^{\ast}-e_3^{\ast}\otimes(n_{31}e_1+n_{32}e_2).$
\item[]\hskip-0.5cm  $r_2=n_{21}e_1 \otimes e_2^{\ast}+n_{31}e_1 \otimes e_3^{\ast}-e_2^{\ast} \otimes n_{21}e_1- e_3^{\ast}\otimes n_{31}e_1,~n_{21}\neq0.$
\item[]\hskip-0.5cm  $r_3=n_{12}e_2 \otimes e_1^{\ast}+n_{32}e_2 \otimes e_3^{\ast}-e_1^{\ast} \otimes n_{12}e_2- e_3^{\ast}\otimes n_{32}e_2,~n_{12}\neq0.$
\item[]\hskip-0.5cm  $r_4=(n_{21}e_1-n_{33}e_2+{n_{23}}e_3) \otimes e_2^{\ast}+(\frac{n_{21}n_{33}}{n_{23}}e_1-\frac{{n_{33}}^2}{n_{23}}e_2+n_{33}e_3) \otimes e_3^{\ast}\\
    -e_2^{\ast} \otimes (n_{21}e_1-n_{33}e_2+{n_{23}}e_3)- e_3^{\ast}\otimes (\frac{n_{21}n_{33}}{n_{23}}e_1-\frac{{n_{33}}^2}{n_{23}}e_2+n_{33}e_3) ,~n_{23}\neq0.$
\item[]\hskip-0.5cm  $ r_5 = (-n_{33}e_1+n_{12}e_2+{n_{13}}e_3) \otimes e_1^{\ast}+(-\frac{{n_{33}}^2}{n_{13}}e_1+\frac{n_{12}n_{33}}{n_{13}}e_2+n_{33}e_3) \otimes e_3^{\ast}\\
  -e_1^{\ast} \otimes (-n_{33}e_1+n_{12}e_2+{n_{13}}e_3)- e_3^{\ast}\otimes (-\frac{{n_{33}}^2}{n_{13}}e_1+\frac{n_{12}n_{33}}{n_{13}}e_2+n_{33}e_3),~n_{13}\neq0.$
\end{itemize}
\end{proposition}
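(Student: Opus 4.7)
The plan is to follow the same template established in the proofs of Propositions \ref{pro:CYBEg1} and \ref{3-dim sol of cybe}. The argument has two essentially independent parts: (i) compute the bracket table of ${\frak g}(D_{12}) \ltimes_{{\rm ad}^{\ast}} {\frak g}(D_{12})^{\ast}$ explicitly from \eqref{Lie algebra structure} and \eqref{dual rep}, and (ii) turn each Rota-Baxter operator from Proposition \ref{thm:g5} into a skew-symmetric solution of the CYBE by applying Proposition \ref{element in G+V} and Lemma \ref{Bai:classical}.

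For part (i), I would first recall from Proposition \ref{sub lie 3} that the only non-zero brackets on ${\frak g}(D_{12})$ are $[e_1,e_3]=-e_1$ and $[e_2,e_3]=e_2$. Next I would compute each $\ad^*_{e_i}(e_j^*)$ by writing $\ad^*_{e_i}(e_j^*)=\sum_k k_{jk}e_k^*$ and using the defining equality $\langle \ad^*_{e_i}(e_j^*),e_l\rangle = -\langle e_j^*, [e_i,e_l]\rangle$ to read off each $k_{jk}$. The only non-zero coefficients will come from the two non-trivial brackets, producing $\ad^*_{e_1}(e_1^*)=e_3^*$, $\ad^*_{e_2}(e_2^*)=-e_3^*$, $\ad^*_{e_3}(e_1^*)=-e_1^*$, $\ad^*_{e_3}(e_2^*)=e_2^*$, and all other $\ad^*_{e_i}(e_j^*)=0$. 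Substituting these into \eqref{Lie algebra structure} yields precisely the six non-zero brackets claimed in the statement.

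For part (ii), I would take each of the five operators $R^1_{{\frak g}(D_{12})},\ldots,R^5_{{\frak g}(D_{12})}$ from Proposition \ref{thm:g5}, expand its matrix form via \eqref{2 dim map } as $R(e_j)=\sum_i r_{ji}e_i$, and then rewrite $R$ as a tensor $\sum_j R(e_j)\otimes e_j^{\ast}\in {\frak g}(D_{12})\otimes{\frak g}(D_{12})^{\ast}$ in accordance with Proposition \ref{element in G+V}. The corresponding solution of the CYBE is then $r_j = R^j - (R^j)^{21}$ by Lemma \ref{Bai:classical}. Matching the matrix entries of each $R^j_{{\frak g}(D_{12})}$ against the column index that multiplies $e_j^*$ gives the five tensor expressions $r_1,\ldots,r_5$ displayed in the statement.

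The computation is entirely mechanical, so the main obstacle is not conceptual but bookkeeping: one must avoid sign errors in the $\ad^*$ calculation (because $D_{12}$ has \emph{both} a $[e_1,e_3]=-e_1$ and a $[e_2,e_3]=+e_2$, so the dual brackets come out with opposite signs), and one must carefully collect the tensor terms for the more involved operators $R^4$ and $R^5$, whose entries include rational expressions in the parameters $n_{ij}$. Once the $\ad^*$ table is correct and the tensor expansion is done column by column, the result drops out directly from Lemma \ref{Bai:classical}.
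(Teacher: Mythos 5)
Your proposal is correct and follows exactly the route the paper itself takes (the paper only remarks that Proposition \ref{pro:CYBED12} is proved "using a similar proof as in Proposition \ref{3-dim sol of cybe}", i.e., compute the $\ad^*$ table from \eqref{dual rep} and \eqref{Lie algebra structure}, then apply Proposition \ref{element in G+V} and Lemma \ref{Bai:classical} to each operator of Proposition \ref{thm:g5}); your $\ad^*$ values and the resulting six brackets check out. One trivial slip: with the paper's convention $R(e_j)=\sum_i r_{ji}e_i$, it is the \emph{row} index of $R^j_{{\frak g}(D_{12})}$ that pairs with $e_j^{\ast}$ in $\sum_j R(e_j)\otimes e_j^{\ast}$, not the column index, but this does not affect the tensors you obtain.
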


\begin{remark}[Relation to known CYBE solutions]\label{rem:novelty}
Unlike the direct computation of Rota-Baxter operators on simple three-dimensional $\mathfrak{sl}(2)$ (\cite{Pei}) or $\mathfrak{so}(3)$ (\cite{Jiang}), the solutions obtained in this subsection are constructed via Nijenhuis operators on three-dimensional non-commutative associative algebras $D_1$ to $D_{12}$. The sub-adjacent Lie algebras $\mathfrak{g}(D_1)$ to $\mathfrak{g}(D_{12})$ are solvable three-dimensional Lie algebras (they are not isomorphic to the solvable 3-dimensional Lie algebra in \cite{Linli-Wu}), from which we derive new Rota-Baxter operators and hence solutions to the classical Yang-Baxter equation on the corresponding semidirect products.
\end{remark}

\section{Conclusions}
This work fully describes Nijenhuis operators on 2-dimensional pre-Lie algebras and 3-dimensional associative algebras, and systematically constructs the corresponding Rota-Baxter operators (from  $N^2=0$) and solutions to the CYBE. We conclude with remarks on implications and future research directions.

(1) \textbf{Classification suggests for higher dimensions.}
The systematic three-step approach demonstrated in dimensions 2 and 3 can be generalized to any finite dimension:  
\begin{itemize}  
  \item Step 1: Classify the underlying high-dimensional algebras up to isomorphism (e.g., 4-dimensional nilpotent pre-Lie algebras \cite{Adashev}).  
  \item Step 2: Derive equations for the coefficients satisfying the Nijenhuis operator identity using its definition. For dimensions $\geq 4$, symbolic computation combined with theoretical analysis is essential, as manual calculation becomes impractical (or solve the system of equations \eqref{structural constants} in Proposition \ref{PRO: Structural constants}).  
  \item Step 3: Obtain the Rota-Baxter operator on the sub-adjacent Lie algebra from $N^2 = 0$. Following \cite[Section 2]{Bai}, the corresponding skew-symmetric solution of the Classical Yang-Baxter Equation (CYBE) on the Lie algebra $\mathfrak{g}(A) \ltimes_{\mathrm{ad}^*} \mathfrak{g}(A)^*$ can be constructed.  
\end{itemize}

(2) \textbf{Computational verification of the Nijenhuis operators.}
All operator families presented in Theorems~\ref{Ni ope on 2 c}, \ref{Ni ope on 2 non c}, 
\ref{Ni ope on 3 c a }, \ref{Ni ope on 3 non c a }  were verified symbolically using \textsf{Mathematica}~14. For each Nijenhuis family, we confirmed that $N$ satisfies the Nijenhuis equation:  
\begin{equation*}
N(x) \cdot N(y) = N(N(x) \cdot y + x \cdot N(y) - N(x \cdot y)), \quad \forall~x, y \in A.
\end{equation*}  
The code checks all families for arbitrary parameter values and is available from the authors upon request. This computational verification prevents algebraic errors from manual calculations and ensures the correctness of the results.

(3) \textbf{Future directions}. Our Nijenhuis family lists are parametrically complete but do not classify up to algebra automorphisms. Several natural extensions remain open:  
\begin{itemize}  
  \item A full orbit analysis under the automorphism group $\mathrm{Aut}(A)$ for each pre-Lie algebra $A$ would simplify the Nijenhuis family lists. This is a natural but non-trivial next step.  
  \item The Nijenhuis operators studied here can define para-complex structures on pre-Lie algebras (\cite{Wang}), analogous to classical complex geometry. This enables the study of para-complex quadratic pre-Lie algebras and para-complex pseudo-Hessian pre-Lie algebras, which may have rich geometric interpretations and connections to information geometry and statistical manifolds.  
  \item The algebraic framework in this paper is independent of dimension and applies to infinite-dimensional pre-Lie algebras. Promising examples include those from vertex operator algebras (VOAs) in conformal field theory (\cite[Subsection 2.6, 2.7]{Bai-in}). Such extensions could link our construction to theoretical physics and integrable systems.  
\end{itemize}

\textbf{Acknowledgements.}  This work is supported by the National Natural Science Foundation of China (Grant No. 12571042) and the Natural Science Foundation of Zhejiang Province (Grant No. LZ25A010002). The authors would like to thank Shizhuo Yu for
helpful discussions.  We are also grateful to the anonymous reviewers for their valuable comments and suggestions.

{\bf Data availability.} Data sharing not applicable to this article as no datasets were
generated or analysed during the current study.

{\bf Conflict of Interest.} The authors have no relevant financial or non-financial
interests to disclose.

\newpage

\section{Appendix}

\begin{table}[htbp]
    \centering
    \tiny
    \caption{Nijenhuis operators on 2-dimensional pre-Lie algebras.}
    \begin{tabular}{|c|c|>{\raggedright\arraybackslash}m{3cm}|c|l|l|}
        \hline
        \textbf{Pre-Lie} & \textbf{Dim.} & \textbf{Multiplication table} & \textbf{Commutative?} & \textbf{Nijenhuis operators} & \textbf{Parameters} \\
        \hline
        $A_1$ & 2 & $\begin{cases} e_1e_1=e_1 \\ e_2e_2=e_2 \end{cases}$ & Yes & 
        $N_{A_1}^1=\begin{pmatrix} n_{11} & 0 \\ n_{21} & n_{11}+n_{21} \end{pmatrix}$ & $n_{21}\neq0$ \\
        & & & & $N_{A_1}^2=\begin{pmatrix} n_{11} & 0 \\ 0 & n_{22} \end{pmatrix}$ & $/$ \\
        & & & & $N_{A_1}^3=\begin{pmatrix} n_{11} & n_{12} \\ 0 & n_{11}-n_{12} \end{pmatrix}$ & $n_{12}\neq0$ \\
        \hline
        $A_2$ & 2 & $\begin{cases}e_1e_1=e_1\\ e_1e_2=e_2e_1=e_2\end{cases}$ & Yes & 
        $N_{A_2}^1=\begin{pmatrix} n_{11} & 0 \\ 0 & n_{22} \end{pmatrix}$ & $/$ \\
        & & & & $N_{A_2}^2=\begin{pmatrix} n_{11} & n_{12} \\ 0 & n_{11} \end{pmatrix}$ & $n_{12}\neq0$ \\
        \hline
        $A_3$ & 2 & $ e_1e_1=e_1 $ & Yes & 
        $N_{A_3}^1=\begin{pmatrix} n_{11} & 0 \\ 0 & n_{22} \end{pmatrix}$ & $/$ \\
        & & & & $N_{A_3}^2=\begin{pmatrix} n_{11} & n_{12} \\ 0 & n_{11} \end{pmatrix}$ & $n_{12}\neq0$ \\
        \hline
        $A_4$ & 2 & $ e_ie_j=0,\; i,j=1,2 $ & Yes & 
        $N_{A_4}^1=\begin{pmatrix} n_{11} & n_{12} \\ n_{21} & n_{22} \end{pmatrix}$ & $/$ \\
        \hline
        $A_5$ & 2 & $e_1e_1=e_2 $ & No & 
        $N_{A_5}^1=\begin{pmatrix} n_{11} & n_{12} \\ 0 & n_{11} \end{pmatrix}$ & $/$ \\
        \hline
        $B_1$ & 2 & $\begin{cases} e_2e_1=-e_1\\ e_2e_2=e_1-e_2 \end{cases}$ & No & 
        $N_{B_1}^1=\begin{pmatrix} n_{11} & 0 \\ n_{21} & n_{11} \end{pmatrix}$ & $/$ \\
        \hline
        $B_2$ & 2 & $\begin{cases} e_2e_1=-e_1\\ e_2e_2=-e_2 \end{cases}$ & No & 
        $N_{B_2}^1=\begin{pmatrix} n_{11} & n_{12} \\ n_{21} & n_{22} \end{pmatrix}$ & $/$ \\
        \hline
        $B_3$ & 2 & $\begin{cases} e_2e_1=-e_1\\e_2e_2=ke_2,\; k\neq -1 \end{cases}$ & No & 
        $N_{B_3}^1=\begin{pmatrix} n_{11} & 0 \\ n_{21} & n_{11} \end{pmatrix}$ & $n_{21}\neq0$ \\
        & & & & $N_{B_3}^2=\begin{pmatrix} n_{11} & 0 \\ 0 & n_{22} \end{pmatrix}$ & $/$ \\
        \hline
        $B_4$ & 2 & $\begin{cases} e_1e_2=e_1\\e_2e_2=e_2 \end{cases}$ & No & 
        $N_{B_4}^1=\begin{pmatrix} n_{11} & n_{12} \\ n_{21} & n_{22} \end{pmatrix}$ & $/$ \\
        \hline
        $B_5$ & 2 & $\begin{cases} e_1e_2=le_1\\ e_2e_1=(l-1)e_1 \\ e_2e_2=e_1+le_2,\; l\neq 0 \end{cases}$ & No & 
        \makecell[l]{$l=1:$ \\ $N_{B_5}^1=\begin{pmatrix} n_{11} & 0 \\ n_{21} & n_{11} \end{pmatrix}$} & $/$ \\
        & & & & \makecell[l]{$l\neq 1:$ \\ $N_{B_5}^2=\begin{pmatrix} n_{11} & 0 \\ n_{21} & n_{22} \end{pmatrix}$} & $n_{11}=n_{22}$ \\
        & & & & \makecell[l]{$\phantom{l\neq 1:}$ \\ $N_{B_5}^3=\begin{pmatrix} n_{11} & 0 \\ \frac{n_{11}-n_{22}}{l-1} & n_{22} \end{pmatrix}$} & $n_{11}\neq n_{22}$ \\
        \hline
        $B_6$ & 2 & $\begin{cases} e_1e_1=2e_1\\e_1e_2=e_2\\e_2e_2=e_1 \end{cases}$ & No & 
        $N_{B_6}^1=\begin{pmatrix} n_{11} & 0 \\ n_{21} & n_{11}\pm n_{21}\sqrt{-1} \end{pmatrix}$ & $/$ \\
        & & & & $N_{B_6}^2=\begin{pmatrix} n_{11} & n_{12} \\ -n_{12} & n_{11} \end{pmatrix}$ & $n_{12}\neq0$ \\
        \hline
    \end{tabular}
    \label{tab:prelie_algebras_nijenhuis}
\end{table}
\newpage

\begin{table}[H]
    \centering
    \tiny
    \caption{Nijenhuis operators on 3-dimensional associative algebras.}
    \setlength{\tabcolsep}{1.5pt}
    \begin{tabular}{|c|c|l|l|l|l|}
    \hline
    \textbf{Ass} & \textbf{Dim.} & \textbf{Multiplication Table} & \textbf{Commutative?} & \textbf{Nijenhuis operators} ($C_1$-$C_5$) & \textbf{Parameters} \\
    \hline
    $C_1$ & 3 & $e_3 e_3 = e_1$ & Yes & 
    $N_{C_1}^1 = \begin{pmatrix} n_{11} & 0 & 0 \\ n_{21} & n_{22} & 0 \\ n_{31} & n_{32} & n_{11} \end{pmatrix}$ & $/$ \\
    & & & & 
    $N_{C_1}^2 = \begin{pmatrix} n_{11} & n_{12} & 0 \\ -\frac{(n_{11}-n_{33})^2}{n_{12}} & 2n_{33}-n_{11} & 0 \\ n_{31} & n_{32} & n_{33} \end{pmatrix}$ & $n_{12}\neq0$ \\
    \hline
    $C_2$ & 3 & $e_i e_j =0, i,j=1,2,3$ & Yes & 
    $N_{C_2}^1 = \begin{pmatrix} n_{11} & n_{12} & n_{13} \\ n_{21} & n_{22} & n_{23} \\ n_{31} & n_{32} & n_{33} \end{pmatrix}$ & $/$ \\
    \hline
    \multirow{9}{*}{$C_3$} & \multirow{9}{*}{3} & \multirow{9}{*}{$\begin{cases} e_2 e_2 = e_1 \\ e_3 e_3 = e_1 \end{cases}$} & \multirow{9}{*}{Yes} & 
    $N_{C_3}^1 = \begin{pmatrix} n_{11} & 0 & 0 \\ n_{21} & n_{11} & 0 \\ n_{31} & 0 & n_{11} \end{pmatrix}$ & $/$\\
    \cline{5-6}
    & & & &  $N_{C_3}^2 = \begin{pmatrix} n_{11} & 0 & 0 \\ n_{21} & n_{11} & 0 \\ n_{31} & n_{32} & n_{33} \end{pmatrix}$& \makecell[l]{$n_{33}=n_{11}\pm n_{32}\sqrt{-1},$\\$n_{32}\neq0$}  \\  
    \cline{5-6}
    & & && $N_{C_3}^3 = \begin{pmatrix} n_{11} & 0 & 0 \\ n_{21} & n_{22} & n_{23} \\ n_{31} & 0 & n_{11} \end{pmatrix}$& \makecell[l]{$n_{22}=n_{11}\pm n_{23}\sqrt{-1},$\\$n_{23}\neq0$}  \\  
    \cline{5-6}
    & & & & $N_{C_3}^4 = \begin{pmatrix} n_{11} & 0 & 0 \\ n_{21} & n_{22} & n_{23} \\ n_{31} & n_{32} & n_{33} \end{pmatrix}$ & \begin{tabular}{@{}l@{}}$n_{23}\neq0$,$n_{32}\neq0$\\$n_{22}=n_{11}-n_{23}\sqrt{-1}$,\\$n_{33}=n_{11}+n_{32}\sqrt{-1}$\end{tabular} \\
\cline{5-6}
    & & & & $N_{C_3}^5 = \begin{pmatrix} n_{11} & 0 & 0 \\ n_{21} & n_{22} & n_{23} \\ n_{31} & n_{32} & n_{33} \end{pmatrix}$ & \begin{tabular}{@{}l@{}}$n_{23}\neq0$,$n_{32}\neq0$\\$n_{22}=n_{11}+n_{23}\sqrt{-1}$,\\$n_{33}=n_{11}-n_{32}\sqrt{-1}$\end{tabular}  \\
    \hline
    $C_4$ & 3 & $\begin{cases} e_2 e_3 = e_3 e_2 = e_1 \\ e_3 e_3 = e_2 \end{cases}$ & Yes & 
    $N_{C_4}^1 = \begin{pmatrix} n_{11} & 0 & 0 \\ n_{21} & n_{11} & 0 \\ n_{31} & n_{32} & n_{11} \end{pmatrix}$ & $/$ \\
    \hline
    \multirow{16}{*}{$C_5$} & \multirow{16}{*}{3} & \multirow{16}{*}{$\begin{cases} e_1 e_1 = e_1 \\ e_2 e_2 = e_2 \\ e_3 e_3 = e_3 \end{cases}$} & \multirow{16}{*}{Yes} & 
    $N_{C_5}^1 = \text{diag}(n_{11}, n_{22}, n_{33})$ & $/$ \\
    & & & & $N_{C_5}^2 = \begin{pmatrix} n_{11} & 0 & 0 \\ 0 & n_{22} & 0 \\ n_{31} & 0 & n_{11}+n_{31} \end{pmatrix}$ & $n_{31}\neq0$ \\
    & & & & $N_{C_5}^3 = \begin{pmatrix} n_{11} & 0 & 0 \\ 0 & n_{22} & 0 \\ 0 & n_{32} & n_{22}+n_{32} \end{pmatrix}$ & $n_{32}\neq0$ \\
    & & & & $N_{C_5}^4 = \begin{pmatrix} n_{11} & 0 & 0 \\ n_{21} & n_{11}+n_{21} & 0 \\ 0 & 0 & n_{33} \end{pmatrix}$ & $n_{21}\neq0$ \\
    & & & & $N_{C_5}^5 = \begin{pmatrix} n_{11} & 0 & 0 \\ 0 & n_{23}+n_{33} & n_{23} \\ 0 & 0 & n_{33} \end{pmatrix}$ & $n_{23}\neq0$ \\
    & & & & $N_{C_5}^6 = \begin{pmatrix} n_{11} & 0 & 0 \\ n_{21} & n_{11}+n_{21} & 0 \\ n_{32} & n_{32} & n_{11}+n_{21}+n_{32} \end{pmatrix}$ & $n_{21},n_{32}\neq0$ \\
    & & & & $N_{C_5}^7 = \begin{pmatrix} n_{22}+n_{32}-n_{31} & 0 & 0 \\ 0 & n_{22} & 0 \\ n_{31} & n_{32} & n_{22}+n_{32} \end{pmatrix}$ & $n_{31},n_{32}\neq0$ \\
    & & & & $N_{C_5}^8 = \begin{pmatrix} n_{11} & 0 & n_{13} \\ 0 & n_{22} & 0 \\ 0 & 0 & n_{11}+n_{13} \end{pmatrix}$ & $n_{13}\neq0$ \\
    & & & & $N_{C_5}^9 = \begin{pmatrix} n_{11} & n_{12} & 0 \\ 0 & n_{11}-n_{12} & 0 \\ 0 & 0 & n_{33} \end{pmatrix}$ & $n_{12}\neq0$ \\
    & & & & $N_{C_5}^{10} = \begin{pmatrix} n_{11} & 0 & n_{13} \\ n_{23} & n_{11}+n_{23} & n_{23} \\ 0 & 0 & n_{11}-n_{13} \end{pmatrix}$ & $n_{13},n_{23}\neq0$ \\
    & & & & $N_{C_5}^{11} = \begin{pmatrix} n_{11} & n_{12} & 0 \\ 0 & n_{11}-n_{12} & 0 \\ n_{31} & n_{31} & n_{31}+n_{11} \end{pmatrix}$ & $n_{12},n_{31}\neq0$ \\
       & & & & $N_{C_5}^{12} = \begin{pmatrix} n_{23}+n_{33}-n_{21} & 0 & 0 \\ n_{21} & n_{23}+n_{33} & n_{23} \\ 0 & 0 & n_{33} \end{pmatrix}$ & $n_{21},n_{23}\neq0$ \\
  & & & & $N_{C_5}^{13} = \begin{pmatrix} n_{22}-n_{23}-n_{31} & 0 & 0 \\ n_{23} & n_{22} & n_{23} \\ n_{31} & 0 & n_{22}-n_{23} \end{pmatrix}$ & $n_{31},n_{23}\neq0$ \\
    \hline
    \end{tabular}
    \label{tab:nijenhuis_3d_associative}
\end{table}
\begin{table}[H]
    \centering
    \tiny
    \caption{Nijenhuis operators on 3-dimensional associative algebras (continued).}
    \setlength{\tabcolsep}{1.5pt}
    \begin{tabular}{|c|c|l|l|l|l|}
    \hline
    \textbf{Ass} & \textbf{Dim.} & \textbf{Multiplication Table} & \textbf{Commutative?} & \textbf{Nijenhuis operators} ($C_5$-$C_8$) & \textbf{Parameters} \\
    \hline
    \multirow{3}{*}{$C_5$} & \multirow{3}{*}{3} & \multirow{3}{*}{$\begin{cases} e_1 e_1 = e_1 \\ e_2 e_2 = e_2 \\ e_3 e_3 = e_3 \end{cases}$} & \multirow{3}{*}{Yes} & 
    $N_{C_5}^{14} = \begin{pmatrix} n_{11} & n_{12} & n_{13} \\ 0 & n_{11}-n_{12} & 0 \\ 0 & 0 & n_{11}-n_{13} \end{pmatrix}$ & $n_{12},n_{13}\neq0$ \\
    & & & & $N_{C_5}^{15} = \begin{pmatrix} n_{11} & n_{12} & n_{12} \\ 0 & n_{11}-n_{12}-n_{32} & 0 \\ 0 & n_{32} & n_{11}-n_{12} \end{pmatrix}$ & $n_{12},n_{32}\neq0$ \\
    & & & & $N_{C_5}^{16} = \begin{pmatrix} n_{11} & n_{12} & n_{12} \\ 0 & n_{11}-n_{12} & n_{23} \\ 0 & 0 & n_{11}-n_{12}-n_{23} \end{pmatrix}$ & $n_{12},n_{23}\neq0$ \\
    \hline
    \multirow{8}{*}{$C_6$} & \multirow{8}{*}{3} & \multirow{8}{*}{$\begin{cases} e_2 e_2 = e_2 \\ e_3 e_3 = e_3 \end{cases}$} & \multirow{8}{*}{Yes} & 
    $N_{C_6}^1 = \text{diag}(n_{11}, n_{22}, n_{33})$ & $/$ \\
    & & & & $N_{C_6}^2 = \begin{pmatrix} n_{11} & 0 & 0 \\ 0 & n_{22} & 0 \\ n_{31} & 0 & n_{11} \end{pmatrix}$ & $n_{31}\neq0$ \\
    & & & & $N_{C_6}^3 = \begin{pmatrix} n_{11} & 0 & 0 \\ n_{21} & n_{11} & 0 \\ 0 & 0 & n_{33} \end{pmatrix}$ & $n_{21}\neq0$ \\
    & & & & $N_{C_6}^4 = \begin{pmatrix} n_{11} & 0 & 0 \\ n_{21} & n_{11} & 0 \\ n_{31} & 0 & n_{11} \end{pmatrix}$ & $n_{21},n_{31}\neq0$ \\
    & & & & $N_{C_6}^5 = \begin{pmatrix} n_{11} & 0 & 0 \\ 0 & n_{22} & 0 \\ 0 & n_{32} & n_{22}+n_{32} \end{pmatrix}$ & $n_{32}\neq0$ \\
    & & & & $N_{C_6}^6 = \begin{pmatrix} n_{11} & 0 & 0 \\ 0 & n_{23}+n_{33} & n_{23} \\ 0 & 0 & n_{33} \end{pmatrix}$ & $n_{23}\neq0$ \\
    & & & & $N_{C_6}^7 = \begin{pmatrix} n_{22}+n_{32} & 0 & 0 \\ 0 & n_{22} & 0 \\ n_{31} & n_{32} & n_{22}+n_{32} \end{pmatrix}$ & $n_{31},n_{32}\neq0$ \\
    & & & & $N_{C_6}^8 = \begin{pmatrix} n_{23}+n_{33} & 0 & 0 \\ n_{21} & n_{23}+n_{33} & n_{23} \\ 0 & 0 & n_{33} \end{pmatrix}$ & $n_{21},n_{23}\neq0$ \\
    \hline
    \multirow{8}{*}{$C_7$} & \multirow{8}{*}{3} & \multirow{8}{*}{$\begin{cases} e_1 e_3 = e_3 e_1 = e_1 \\ e_2 e_2 = e_2 \\ e_3 e_3 = e_3 \end{cases}$} & \multirow{8}{*}{Yes} & 
    $N_{C_7}^1 = \text{diag}(n_{11}, n_{22}, n_{33})$ & $/$ \\
    & & & & $N_{C_7}^2 = \begin{pmatrix} n_{11} & 0 & 0 \\ 0 & n_{22} & 0 \\ n_{31} & 0 & n_{11} \end{pmatrix}$ & $n_{31}\neq0$ \\
    & & & & $N_{C_7}^3 = \begin{pmatrix} n_{11} & 0 & 0 \\ n_{21} & n_{11} & 0 \\ n_{31} & 0 & n_{11} \end{pmatrix}$ & $n_{21}\neq0$ \\
    & & & & $N_{C_7}^4 = \begin{pmatrix} n_{22}+n_{32} & 0 & 0 \\ 0 & n_{22} & 0 \\ n_{31} & n_{32} & n_{22}+n_{32} \end{pmatrix}$ & $n_{31},n_{32}\neq0$ \\
    & & & & $N_{C_7}^5 = \begin{pmatrix} n_{11} & 0 & 0 \\ 0 & n_{22} & 0 \\ 0 & n_{32} & n_{22}+n_{32} \end{pmatrix}$ & $n_{32}\neq0$ \\
    & & & & $N_{C_7}^6 = \begin{pmatrix} n_{11} & 0 & 0 \\ 0 & n_{23}+n_{33} & n_{23} \\ 0 & 0 & n_{33} \end{pmatrix}$ & $n_{23}\neq0$ \\
    & & & & $N_{C_7}^7 = \begin{pmatrix} n_{22} & 0 & 0 \\ n_{21} & n_{22} & 0 \\ -n_{21} & n_{32} & n_{22}+n_{32} \end{pmatrix}$ & $n_{21},n_{32}\neq0$ \\
    & & & & $N_{C_7}^8 = \begin{pmatrix} n_{11} & 0 & 0 \\ n_{21} & n_{23}+n_{11} & n_{23} \\ -n_{21} & 0 & n_{11} \end{pmatrix}$ & $n_{21},n_{23}\neq0$ \\
    \hline
    \multirow{3}{*}{$C_8$} & \multirow{3}{*}{3} & \multirow{3}{*}{$e_3 e_3 = e_3$} & \multirow{3}{*}{Yes} & 
    $N_{C_8}^1 = \begin{pmatrix} n_{11} & n_{12} & 0 \\ n_{21} & n_{22} & 0 \\ 0 & 0 & n_{33} \end{pmatrix}$ & $/$ \\
    & & & & $N_{C_8}^2 = \begin{pmatrix} n_{11} & n_{12} & 0 \\ 0 & n_{22} & 0 \\ 0 & n_{32} & n_{22} \end{pmatrix}$ & $n_{32}\neq0$ \\
    & & & & $N_{C_8}^3 = \begin{pmatrix} n_{33}-\frac{n_{21}n_{32}}{n_{31}} & \frac{n_{32}n_{33}-n_{22}n_{32}}{n_{31}} & 0 \\ n_{21} & n_{22} & 0 \\ n_{31} & n_{32} & n_{33} \end{pmatrix}$ & $n_{31}\neq0$ \\
    \hline
    \end{tabular}
    \label{tab:nijenhuis_3d_associative_complete}
\end{table}

\begin{table}[H]
    \centering
    \tiny
    \caption{Nijenhuis operators on 3-dimensional associative algebras (continued).}
    \setlength{\tabcolsep}{1.5pt}
    \begin{tabular}{|c|c|l|l|l|l|}
    \hline
    \textbf{Ass} & \textbf{Dim.} & \textbf{Multiplication Table} & \textbf{Commutative?} & \textbf{Nijenhuis operators} ($C_9$-$C_{12}$, $D_1$) & \textbf{Parameters} \\
    \hline
    \multirow{6}{*}{$C_9$} & \multirow{6}{*}{3} & \multirow{6}{*}{$\begin{cases} e_1 e_3 = e_3 e_1 = e_1 \\ e_3 e_3 = e_3 \end{cases}$} & \multirow{6}{*}{Yes} & 
    $N_{C_9}^1 = \text{diag}(n_{11}, n_{22}, n_{33})$ & $/$ \\
    & & & & $N_{C_9}^2 = \begin{pmatrix} n_{11} & 0 & 0 \\ 0 & n_{22} & 0 \\ 0 & n_{32} & n_{22} \end{pmatrix}$ & $n_{32}\neq0$ \\
    & & & & $N_{C_9}^3 = \begin{pmatrix} n_{11} & 0 & 0 \\ 0 & n_{22} & 0 \\ n_{31} & 0 & n_{11} \end{pmatrix}$ & $n_{31}\neq0$ \\
    & & & & $N_{C_9}^4 = \begin{pmatrix} n_{11} & 0 & 0 \\ 0 & n_{11} & 0 \\ n_{31} & n_{32} & n_{11} \end{pmatrix}$ & $n_{31},n_{32}\neq0$ \\
    & & & & $N_{C_9}^5 = \begin{pmatrix} n_{11} & 0 & 0 \\ n_{21} & n_{22} & 0 \\ n_{31} & 0 & n_{11} \end{pmatrix}$ & $n_{21}\neq0$ \\
    & & & & $N_{C_9}^6 = \begin{pmatrix} n_{11} & n_{12} & 0 \\ 0 & n_{22} & 0 \\ 0 & n_{32} & n_{22} \end{pmatrix}$ & $n_{12}\neq0$ \\
    \hline
    \multirow{3}{*}{$C_{10}$} & \multirow{3}{*}{3} & \multirow{3}{*}{$\begin{cases} e_1 e_3 = e_3 e_1 = e_1 \\ e_2 e_3 = e_3 e_2 = e_2 \\ e_3 e_3 = e_3 \end{cases}$} & \multirow{3}{*}{Yes} & 
    $N_{C_{10}}^1 = \begin{pmatrix} n_{11} & n_{12} & 0 \\ n_{21} & n_{22} & 0 \\ 0 & 0 & n_{33} \end{pmatrix}$ & $/$ \\
    & & & & $N_{C_{10}}^2 = \begin{pmatrix} n_{11} & n_{12} & 0 \\ 0 & n_{22} & 0 \\ 0 & n_{32} & n_{22} \end{pmatrix}$ & $n_{32}\neq0$ \\
    & & & & $N_{C_{10}}^3 = \begin{pmatrix} n_{33}-\frac{n_{21}n_{32}}{n_{31}} & \frac{n_{32}n_{33}-n_{22}n_{32}}{n_{31}} & 0 \\ n_{21} & n_{22} & 0 \\ n_{31} & n_{32} & n_{33} \end{pmatrix}$ & $n_{31}\neq0$ \\
    \hline
    \multirow{3}{*}{$C_{11}$} & \multirow{3}{*}{3} & \multirow{3}{*}{$\begin{cases} e_1 e_1 = e_2 \\ e_3 e_3 = e_3 \end{cases}$} & \multirow{3}{*}{Yes} & 
    $N_{C_{11}}^1 = \begin{pmatrix} n_{11} & n_{12} & 0 \\ 0 & n_{11} & 0 \\ 0 & 0 & n_{33} \end{pmatrix}$ & $/$ \\
    & & & & $N_{C_{11}}^2 = \begin{pmatrix} n_{11} & n_{12} & 0 \\ 0 & n_{11} & 0 \\ 0 & n_{32} & n_{11} \end{pmatrix}$ & $n_{32}\neq0$ \\
    & & & & $N_{C_{11}}^3 = \begin{pmatrix} n_{11} & -n_{31} & 0 \\ 0 & n_{11} & 0 \\ n_{31} & n_{32} & n_{11} \end{pmatrix}$ & $n_{31}\neq0$ \\
    \hline
    \multirow{3}{*}{$C_{12}$} & \multirow{3}{*}{3} & \multirow{3}{*}{$\begin{cases} e_1 e_1 = e_2 \\ e_1 e_3 = e_3 e_1 = e_1 \\ e_2 e_3 = e_3 e_2 = e_2 \\ e_3 e_3 = e_3 \end{cases}$} & \multirow{3}{*}{Yes} & 
    $N_{C_{12}}^1 = \begin{pmatrix} n_{11} & n_{12} & 0 \\ 0 & n_{11} & 0 \\ 0 & 0 & n_{33} \end{pmatrix}$ & $/$ \\
    & & & & $N_{C_{12}}^2 = \begin{pmatrix} n_{11} & n_{12} & 0 \\ 0 & n_{11} & 0 \\ 0 & n_{32} & n_{11} \end{pmatrix}$ & $n_{32}\neq0$ \\
    & & & & $N_{C_{12}}^3 = \begin{pmatrix} n_{11} & n_{31} & 0 \\ 0 & n_{11} & 0 \\ n_{31} & n_{32} & n_{11} \end{pmatrix}$ & $n_{31}\neq0$ \\
    \hline
    \multirow{3}{*}{$D_1$} & \multirow{3}{*}{3} & \multirow{3}{*}{$\begin{cases} e_1 \cdot e_2 = \frac{1}{2} e_3 \\ e_2 \cdot e_1 = -\frac{1}{2} e_3 \end{cases}$} & \multirow{3}{*}{No} & 
    $N_{D_1}^1 = \begin{pmatrix} n_{11} & 0 & n_{13} \\ n_{21} & n_{22} & n_{23} \\ 0 & 0 & n_{11} \end{pmatrix}$ & $/$ \\
    & & & & $N_{D_1}^2 = \begin{pmatrix} n_{11} & n_{12} & n_{13} \\ 0 & n_{22} & n_{23} \\ 0 & 0 & n_{11} \end{pmatrix}$ & $n_{12} \neq 0$ \\
    & & & & $N_{D_1}^3 = \begin{pmatrix} n_{11} & n_{12} & n_{13} \\ n_{21} & n_{33} - \frac{n_{12}n_{21}}{n_{33}-n_{11}} & n_{23} \\ 0 & 0 & n_{33} \end{pmatrix}$ & $n_{33} \neq n_{11}$ \\
    \hline
    \end{tabular}
    \label{tab:nijenhuis_3d_associative_C9-C12}
\end{table}

\begin{table}[H]
    \centering
    \tiny
    \caption{Nijenhuis operators on 3-dimensional associative algebras (continued).}
    \setlength{\tabcolsep}{1.5pt}
    \begin{tabular}{|c|c|l|c|l|l|}
    \hline
    \textbf{Ass} & \textbf{Dim.} & \textbf{Multiplication Table} & \textbf{Commutative?} & \textbf{Nijenhuis operators $D_2$-$D_6$} & \textbf{Parameters} \\
    \hline
    \multirow{4}{*}{$D_2$} & \multirow{4}{*}{3} & \multirow{4}{*}{$e_2 \cdot e_1 = - e_3$} & \multirow{4}{*}{No} & 
    $N_{D_2}^1 = \begin{pmatrix} n_{11} & 0 & n_{13} \\ n_{21} & n_{22} & n_{23} \\ 0 & 0 & n_{22} \end{pmatrix}$ & $n_{21} \neq 0$ \\
    & & & & $N_{D_2}^2 = \begin{pmatrix} n_{11} & n_{12} & n_{13} \\ 0 & n_{22} & n_{23} \\ 0 & 0 & n_{11} \end{pmatrix}$ & $n_{12} \neq 0$ \\
    & & & & $N_{D_2}^3 = \begin{pmatrix} n_{11} & 0 & n_{13} \\ 0 & n_{22} & n_{23} \\ 0 & 0 & n_{22} \end{pmatrix}$ & $/$ \\
    & & & & $N_{D_2}^4 = \begin{pmatrix} n_{33} & 0 & n_{13} \\ 0 & n_{22} & n_{23} \\ 0 & 0 & n_{33} \end{pmatrix}$ & $n_{33} \neq n_{22}$ \\
    \hline
    \multirow{3}{*}{$D_3$} & \multirow{3}{*}{3} & \multirow{3}{*}{$\begin{cases} e_1 \cdot e_1 = e_3 \\ e_1 \cdot e_2 = e_3 \\ e_2 \cdot e_2 = \lambda e_3, \ \lambda \neq 0 \end{cases}$} & \multirow{3}{*}{No} & 
    $N_{D_3}^1 = \begin{pmatrix} n_{11} & n_{12} & n_{13} \\ n_{21} & n_{22} & n_{23} \\ 0 & 0 & n_{33} \end{pmatrix}$ & \begin{tabular}{@{}l@{}}$n_{12} \neq 0$,\\$n_{11} = n_{33} + \frac{(-1+\sqrt{1-4\lambda})n_{12}}{2}$,\\$n_{21} = \frac{(-1+\sqrt{1-4\lambda})(n_{22}-n_{33})}{2}$\end{tabular} \\
    \cline{5-6}
    & & & & $N_{D_3}^2 = \begin{pmatrix} n_{11} & n_{12} & n_{13} \\ n_{21} & n_{22} & n_{23} \\ 0 & 0 & n_{33} \end{pmatrix}$ & \begin{tabular}{@{}l@{}}$n_{12} \neq 0$,\\$n_{11} = n_{33} + \frac{(-1-\sqrt{1-4\lambda})n_{12}}{2}$,\\$n_{21} = \frac{(-1-\sqrt{1-4\lambda})(n_{22}-n_{33})}{2}$\end{tabular} \\
    \cline{5-6}
    & & & & $N_{D_3}^3 = \begin{pmatrix} n_{33} & 0 & n_{13} \\ n_{21} & n_{22} & n_{23} \\ 0 & 0 & n_{33} \end{pmatrix}$ & $n_{21} = \frac{(-1\pm \sqrt{1-4\lambda})(n_{22}-n_{33})}{2}$ \\
    \hline
    \multirow{4}{*}{$D_4$} & \multirow{4}{*}{3} & \multirow{4}{*}{$\begin{cases} e_3 \cdot e_2 = e_2 \\ e_3 \cdot e_3 = e_3 \end{cases}$} & \multirow{4}{*}{No} & 
    $N_{D_4}^1 = \begin{pmatrix} n_{11} & 0 & 0 \\ 0 & n_{22} & n_{23} \\ 0 & n_{32} & n_{33} \end{pmatrix}$ & $/$ \\
    & & & & $N_{D_4}^2 = \begin{pmatrix} n_{11} & 0 & 0 \\ 0 & n_{22} & 0 \\ n_{31} & n_{32} & n_{11} \end{pmatrix}$ & $n_{31} \neq 0$ \\
    & & & & $N_{D_4}^3 = \begin{pmatrix} n_{11} & 0 & 0 \\ n_{21} & n_{22} & 0 \\ n_{31} & n_{32} & n_{11} \end{pmatrix}$ & $n_{21} \neq 0$ \\
    & & & & $N_{D_4}^4 = \begin{pmatrix} n_{11} & n_{12} & 0 \\ 0 & n_{22} & 0 \\ 0 & n_{32} & n_{22} \end{pmatrix}$ & $n_{12} \neq 0$ \\
    \hline
    \multirow{4}{*}{$D_5$} & \multirow{4}{*}{3} & \multirow{4}{*}{$\begin{cases} e_2 \cdot e_3 = e_2 \\ e_3 \cdot e_3 = e_3 \end{cases}$} & \multirow{4}{*}{No} & 
    $N_{D_5}^1 = \begin{pmatrix} n_{11} & 0 & 0 \\ 0 & n_{22} & n_{23} \\ 0 & n_{32} & n_{33} \end{pmatrix}$ & $/$ \\
    & & & & $N_{D_5}^2 = \begin{pmatrix} n_{11} & 0 & 0 \\ 0 & n_{22} & 0 \\ n_{31} & n_{32} & n_{11} \end{pmatrix}$ & $n_{31} \neq 0$ \\
    & & & & $N_{D_5}^3 = \begin{pmatrix} n_{11} & 0 & 0 \\ n_{21} & n_{22} & 0 \\ n_{31} & n_{32} & n_{11} \end{pmatrix}$ & $n_{21} \neq 0$ \\
    & & & & $N_{D_5}^4 = \begin{pmatrix} n_{11} & n_{12} & 0 \\ 0 & n_{22} & 0 \\ 0 & n_{32} & n_{22} \end{pmatrix}$ & $n_{12} \neq 0$ \\
    \hline
    \multirow{6}{*}{$D_6$} & \multirow{6}{*}{3} & \multirow{6}{*}{$\begin{cases} e_1 \cdot e_1 = e_1 \\ e_3 \cdot e_2 = e_2 \\ e_3 \cdot e_3 = e_3 \end{cases}$} & \multirow{6}{*}{No} & 
    $N_{D_6}^1 = \begin{pmatrix} n_{11} & 0 & 0 \\ 0 & n_{22} & n_{23} \\ 0 & n_{32} & n_{33} \end{pmatrix}$ & $/$ \\
    & & & & $N_{D_6}^2 = \begin{pmatrix} n_{11} & n_{12} & 0 \\ 0 & n_{11} & 0 \\ 0 & n_{32} & n_{11} \end{pmatrix}$ & $n_{12} \neq 0$ \\
    & & & & $N_{D_6}^3 = \begin{pmatrix} n_{11} & 0 & 0 \\ n_{21} & n_{22} & n_{21} \\ n_{31} & n_{32} & n_{11}+n_{31} \end{pmatrix}$ & $n_{21} \neq 0$ \\
    & & & & $N_{D_6}^4 = \begin{pmatrix} n_{11} & 0 & n_{13} \\ 0 & n_{22} & 0 \\ 0 & 0 & n_{11}-n_{13} \end{pmatrix}$ & $n_{13} \neq 0$ \\
    & & & & $N_{D_6}^5 = \begin{pmatrix} n_{11} & 0 & 0 \\ 0 & n_{22} & 0 \\ n_{31} & n_{32} & n_{11}+n_{31} \end{pmatrix}$ & $n_{31} \neq 0$ \\
    \cline{5-6}
    & & & & $N_{D_6}^6 = \begin{pmatrix} n_{11} & n_{12} & n_{13} \\ 0 & n_{22} & 0 \\ 0 & n_{32}& n_{11}-n_{13} \end{pmatrix}$ & \makecell[l]{
    $n_{12} \neq 0,\ n_{13} \neq 0,$ \\
    $n_{32}=\frac{n_{11}n_{12}-n_{12}n_{13}-n_{12}n_{22}}{n_{13}}$}
    \\
    \hline
    \end{tabular}
    \label{tab:nijenhuis_3d_associative_D1-D6}
\end{table}

\begin{table}[H]
    \centering
    \tiny
    \caption{Nijenhuis operators on 3-dimensional associative algebras (continued).}
    \setlength{\tabcolsep}{1.5pt}
    \begin{tabular}{|c|c|l|c|l|l|}
    \hline
    \textbf{Ass} & \textbf{Dim.} & \textbf{Multiplication Table} & \textbf{Commutative?} & \textbf{Nijenhuis operators} & \textbf{Parameters} \\
    \hline
    \multirow{6}{*}{$D_7$} & \multirow{6}{*}{3} & \multirow{6}{*}{$\begin{cases} e_1 \cdot e_1 = e_1 \\ e_2 \cdot e_3 = e_2 \\ e_3 \cdot e_3 = e_3 \end{cases}$} & \multirow{6}{*}{No} & 
    $N_{D_7}^1 = \begin{pmatrix} n_{11} & 0 & 0 \\ 0 & n_{22} & n_{23} \\ 0 & n_{32} & n_{33} \end{pmatrix}$ & $/$ \\
    & & & & $N_{D_7}^2 = \begin{pmatrix} n_{11} & n_{12} & 0 \\ 0 & n_{11} & 0 \\ 0 & n_{32} & n_{11} \end{pmatrix}$ & $n_{12} \neq 0$ \\
    & & & & $N_{D_7}^3 = \begin{pmatrix} n_{11} & 0 & 0 \\ n_{21} & n_{22} & n_{21} \\ n_{31} & n_{32} & n_{11}+n_{31} \end{pmatrix}$ & $n_{21} \neq 0$ \\
    & & & & $N_{D_7}^4 = \begin{pmatrix} n_{11} & 0 & n_{13} \\ 0 & n_{22} & 0 \\ 0 & 0 & n_{11}-n_{13} \end{pmatrix}$ & $n_{13} \neq 0$ \\
    & & & & $N_{D_7}^5 = \begin{pmatrix} n_{11} & 0 & 0 \\ 0 & n_{22} & 0 \\ n_{31} & n_{32} & n_{11}+n_{31} \end{pmatrix}$ & $n_{31} \neq 0$ \\
    & & & & $N_{D_7}^6 = \begin{pmatrix} n_{11} & n_{12} & n_{13} \\ 0 & n_{22} & 0 \\ 0 & \frac{n_{11}n_{12}-n_{12}n_{13}-n_{12}n_{22}}{n_{13}} & n_{11}-n_{13} \end{pmatrix}$ & $n_{12} \neq 0,\ n_{13} \neq 0$ \\
    \hline
    \multirow{4}{*}{$D_8$} & \multirow{4}{*}{3} & \multirow{4}{*}{$\begin{cases} e_1 \cdot e_3 = e_1 \\ e_3 \cdot e_1 = e_1 \\ e_3 \cdot e_2 = e_2 \\ e_3 \cdot e_3 = e_3 \end{cases}$} & \multirow{4}{*}{No} & 
    $N_{D_8}^1 = \begin{pmatrix} n_{11} & 0 & 0 \\ 0 & n_{22} & n_{23} \\ 0 & n_{32} & n_{33} \end{pmatrix}$ & $/$ \\
    & & & & $N_{D_8}^2 = \begin{pmatrix} n_{11} & 0 & 0 \\ 0 & n_{22} & 0 \\ n_{31} & n_{32} & n_{11} \end{pmatrix}$ & $n_{31} \neq 0$ \\
    & & & & $N_{D_8}^3 = \begin{pmatrix} n_{11} & 0 & 0 \\ n_{21} & n_{22} & 0 \\ n_{31} & n_{32} & n_{11} \end{pmatrix}$ & $n_{21} \neq 0$ \\
    & & & & $N_{D_8}^4 = \begin{pmatrix} n_{11} & n_{12} & 0 \\ 0 & n_{22} & 0 \\ 0 & n_{32} & n_{22} \end{pmatrix}$ & $n_{12} \neq 0$ \\
    \hline
    \multirow{5}{*}{$D_9$} & \multirow{5}{*}{3} & \multirow{5}{*}{$\begin{cases} e_1 \cdot e_1 = e_1 \\ e_1 \cdot e_2 = e_2 \cdot e_1 = e_2 \\ e_1 \cdot e_3 = e_3 \cdot e_1 = e_3 \\ e_3 \cdot e_2 = e_2 \\ e_3 \cdot e_3 = e_3 \end{cases}$} & \multirow{5}{*}{No} & 
    $N_{D_9}^1 = \begin{pmatrix} n_{11} & 0 & 0 \\ 0 & n_{22} & n_{23} \\ 0 & n_{32} & n_{33} \end{pmatrix}$ & $/$ \\
    & & & & $N_{D_9}^2 = \begin{pmatrix} n_{31}+n_{33} & 0 & 0 \\ 0 & n_{22} & n_{23} \\ n_{31} & n_{32} & n_{33} \end{pmatrix}$ & $n_{31} \neq 0$ \\
    & & & & $N_{D_9}^3 = \begin{pmatrix} n_{31}+n_{33} & 0 & 0 \\ n_{21} & n_{22} & -n_{21} \\ n_{31} & n_{32} & n_{33} \end{pmatrix}$ & $n_{21} \neq 0$ \\
    & & & & $N_{D_9}^4 = \begin{pmatrix} n_{11} & n_{12} & 0 \\ 0 & n_{11} & 0 \\ 0 & n_{32} & n_{11} \end{pmatrix}$ & $n_{12} \neq 0$ \\
    & & & & $N_{D_9}^5 = \begin{pmatrix} n_{11} & n_{12} & n_{13} \\ 0 & n_{22} & 0 \\ 0 & \frac{n_{11}n_{12}+n_{12}n_{13}-n_{12}n_{22}}{n_{13}} & n_{11}+n_{13} \end{pmatrix}$ & $n_{13} \neq 0$ \\
    \hline
    $D_{10}$ & 3 & $\begin{cases} e_3 \cdot e_1 = e_1 \\ e_3 \cdot e_2 = e_2 \\ e_3 \cdot e_3 = e_3 \end{cases}$ & No & 
    $N_{D_{10}}^1 = \begin{pmatrix} n_{11} & n_{12} & n_{13} \\ n_{21} & n_{22} & n_{23} \\ n_{31} & n_{32} & n_{33} \end{pmatrix}$ & $/$ \\
    \hline
    $D_{11}$ & 3 & $\begin{cases} e_1 \cdot e_3 = e_1 \\ e_2 \cdot e_3 = e_2 \\ e_3 \cdot e_3 = e_3 \end{cases}$ & No & 
    $N_{D_{11}}^1 = \begin{pmatrix} n_{11} & n_{12} & n_{13} \\ n_{21} & n_{22} & n_{23} \\ n_{31} & n_{32} & n_{33} \end{pmatrix}$ & $/$ \\
    \hline
    \multirow{5}{*}{$D_{12}$} & \multirow{5}{*}{3} & \multirow{5}{*}{$\begin{cases} e_3 \cdot e_1 = e_1 \\ e_2 \cdot e_3 = e_2 \\ e_3 \cdot e_3 = e_3 \end{cases}$} & \multirow{5}{*}{No} & 
    $N_{D_{12}}^1 = \begin{pmatrix} n_{11} & 0 & 0 \\ 0 & n_{22} & 0 \\ n_{31} & n_{32} & n_{33} \end{pmatrix}$ & $/$ \\
    & & & & $N_{D_{12}}^2 = \begin{pmatrix} n_{11} & 0 & 0 \\ n_{21} & n_{22} & 0 \\ n_{31} & n_{32} & n_{11} \end{pmatrix}$ & $n_{21} \neq 0$ \\
    & & & & $N_{D_{12}}^3 = \begin{pmatrix} n_{11} & n_{12} & 0 \\ 0 & n_{22} & 0 \\ n_{31} & n_{32} & n_{22} \end{pmatrix}$ & $n_{12} \neq 0$ \\
    & & & & $N_{D_{12}}^4 = \begin{pmatrix} n_{11} & 0 & 0 \\ n_{21} & n_{22} & n_{23} \\ \frac{n_{21}n_{33}-n_{11}n_{21}}{n_{23}} & n_{32} & n_{33} \end{pmatrix}$ & $n_{23} \neq 0$ \\
    & & & & $N_{D_{12}}^5 = \begin{pmatrix} n_{11} & n_{12} & n_{13} \\ 0 & n_{22} & 0 \\ n_{31} & \frac{n_{12}n_{33}-n_{12}n_{22}}{n_{13}} & n_{33} \end{pmatrix}$ & $n_{13} \neq 0$ \\
    \hline
    \end{tabular}
    \label{tab:nijenhuis_3d_associative_D7-D12}
\end{table}

\end{document}